\let\savebigtimes\bigtimes
\let\bigtimes\relax
\let\bigtimes\savebigtimes
\definecolor{SkyBlue}{rgb}{0.53, 0.81, 0.92}
\definecolor{MediumSpringGreen}{rgb}{0.0, 0.98, 0.6}
\definecolor{RoyalBlue}{rgb}{0.25, 0.41, 0.88}
\definecolor{SpringGreen}{rgb}{0.0, 1.0, 0.5}
\definecolor{TurquoiseBlue}{rgb}{0.0, 1.0, 0.94}
\definecolor{Yellow}{rgb}{1.0, 1.0, 0.0}
\definecolor{Crimson}{rgb}{0.86, 0.08, 0.24}
\definecolor{LimeGreen}{rgb}{0.2, 0.8, 0.2}
\definecolor{PaleTurquoise}{rgb}{0.28, 0.82, 0.8}
\definecolor{Green}{rgb}{0.0,0.5,0.0}
\newcommand{\ie}{i.e.\xspace}
\newcommand{\Ie}{I.e.\xspace}
\renewcommand{\epsilon}{\varepsilon}
\renewcommand{\phi}{\varphi}
\newcommand\textbfit[1]{{\bf\em #1}}
\newcommand{\defin}[1]{\textbfit{\boldmath #1}}
\newcommand{\cone}[1]{\mathrm{Cone}(#1)}
\newcommand{\qini}{q_{\mathrm{ini}}}
\newcommand{\col}{\mathrm{Col}}
\newcommand{\colors}{C}
\newcommand{\Lqual}[1]{L^{\mathrm{Acc}}_{\mathrm{=1}}(#1)}
\newcommand{\prefix}{\sqsubseteq}
\newcommand{\prefixstrict}{\sqsubset}
\newcommand{\Eloise}{\'Elo\"ise\xspace}
\newcommand{\Abelard}{Ab\'elard\xspace}
\newcommand{\Nature}{Nature\xspace}
\newcommand{\Ei}{\mathbf{E}}
\newcommand{\Ai}{\mathbf{A}}
\newcommand{\Ni}{\mathbf{N}}
\renewcommand{\Ei}{\mathrm{E}}
\renewcommand{\Ai}{\mathrm{A}}
\renewcommand{\Ni}{\mathrm{N}}
\newcommand{\strat}{\phi}
\newcommand{\arena}{\mathcal{G}}
\newcommand{\game}{\mathbb{G}}
\newcommand{\play}{\lambda}
\newcommand{\arenaNR}{\widehat{\mathcal{G}}}
\newcommand{\gameNR}{\widehat{\mathbb{G}}}
\newcommand{\playNR}{\widehat{\lambda}}
\newcommand{\phiNR}{\widehat{\phi}}
\newcommand{\Plays}[1]{\mathrm{Plays}}
\newcommand{\Traces}[1]{\mathrm{Traces}}
\newcommand{\Value}[2]{\mathrm{Val}_{#2}(#1)}
\newcommand{\branch}{\pi} 
\newcommand{\lbg}{\mu}
\newcommand{\run}{\rho}
\newcommand{\Evei}{\Ei}
\newcommand{\outcomes}[3]{\mathrm{Outcomes}_{#1}^{#2,#3}}
\newcommand{\mesure}[3]{\lbg_{#1}^{#2,#3}}
\newcommand{\Dom}{\mathrm{Dom}}
\newcommand{\GValue}[1]{\mathrm{Val}(#1)}
\newcommand{\LRejAtMostCount}[1]{L^{\mathrm{Rej}}_{\leq \mathrm{Count}}(#1)}
\newcommand{\LAccUnc}[1]{L^{\mathrm{Acc}}_{Uncount}(#1)}
\newcommand{\VE}{V_{\Ei}}
\newcommand{\VA}{V_{\Ai}}
\newcommand{\VN}{V_{\Ni}}
\newcommand{\WC}{\Omega}
\newcommand{\WCNR}{\widehat{\Omega}}
\newcommand{\eg}{\emph{e.g.}\xspace}
\newcommand{\resp}{resp. \xspace} 
\newcommand{\LLarge}[1]{L^{\mathrm{Acc}}_{\mathrm{Large}}(#1)}
\newcommand{\CL}[1]{CardLeak(#1)}
\newcommand{\LVal}[1]{LeakVal(#1)}
\newcommand{\cardinal}[1]{Card(#1)}
\newcommand{\NRmapF}{\tau}
\newcommand{\NRmap}[1]{\NRmapF(#1)}
\newcommand{\treeGame}[3]{T_{#1}^{#2,#3}}
\newcommand{\graphT}{\widetilde{{G}}}
\newcommand{\VT}{\widetilde{V}}
\newcommand{\ET}{\widetilde{E}}
\newcommand{\arenaT}{\widetilde{\mathcal{G}}}
\newcommand{\gameT}{\widetilde{\mathbb{G}}}
\newcommand{\playT}{\widetilde{\lambda}}
\newcommand{\phiT}{\widetilde{\phi}}
\newcommand{\WCT}{\widetilde{\WC}}
\newcommand{\GammaT}{\widetilde{\Gamma}}
\newcommand{\DeltaT}{\widetilde{\Delta}}
\newcommand{\simT}{\approx}
\newcommand{\graphL}{\widehat{G}}
\newcommand{\VL}{\widehat{V}}
\newcommand{\EL}{\widehat{E}}
\newcommand{\arenaL}{\widehat{\mathcal{G}}}
\newcommand{\gameL}{\widehat{\mathbb{G}}}
\newcommand{\WCL}{\widehat{\WC}}
\newcommand{\encod}[1]{[\![ #1 ]\!]}
\newcommand{\gameP}{\mathbb{P}}
\newcommand{\gamePNR}{\widehat{\mathbb{P}}}
\newcommand{\VF}{\widecheck{V}}
\newcommand{\EF}{\widecheck{E}}
\newcommand{\playF}{\widecheck{\lambda}}
\newcommand{\phiF}{\widecheck{\phi}}
\newcommand{\graphFk}{\widecheck{G}_k}
\newcommand{\arenaFk}{\widecheck{\mathcal{G}}_k}
\newcommand{\gameFk}{\widecheck{\mathbb{G}}_k}
\newcommand{\playFk}{\widecheck{\lambda}}
\newcommand{\phiFk}{\widecheck{\phi}}
\newcommand{\WCFk}{\widecheck{\WC}_k}
\begin{document}

\theoremstyle{acmdefinition}
\newtheorem{remark}[theorem]{Remark}

\title{How Good Is a Strategy in a Game With Nature?}

\author{Arnaud Carayol}
\affiliation{%
  \institution{CNRS, LIGM (Université Paris Est \& CNRS)}
  \streetaddress{5 boulevard Descartes — Champs sur Marne}
  \city{Marne-la-Vallée Cedex 2}
  \postcode{77454}
  \country{France}
}
\email{Arnaud.Carayol@univ-mlv.fr}

\author{Olivier Serre}
\affiliation{%
  \institution{Université de Paris, IRIF, CNRS}
  \streetaddress{Bâtiment Sophie Germain, Case courrier 7014, 
8 Place Aurélie Nemours}
  \city{Paris Cedex 13}
  \postcode{75205}
  \country{France}
}
\email{Olivier.Serre@cnrs.fr}

\begin{abstract}
We consider games with two antagonistic players —~\Eloise (modelling a program) and \Abelard (modelling a byzantine environment)~— and a third, unpredictable and uncontrollable player, that we call \Nature. Motivated by the fact that the usual probabilistic semantics  very quickly leads to undecidability when considering either infinite game graphs or imperfect-information, we propose  two alternative semantics that leads to decidability where the probabilistic one fails: one based on counting and one based on topology. 
\end{abstract}

\begin{CCSXML}
<ccs2012>
<concept>
<concept_id>10003752.10003766.10003770</concept_id>
<concept_desc>Theory of computation~Automata over infinite objects</concept_desc>
<concept_significance>500</concept_significance>
</concept>
<concept>
<concept_id>10003752.10003766.10003772</concept_id>
<concept_desc>Theory of computation~Tree languages</concept_desc>
<concept_significance>500</concept_significance>
</concept>
<concept>
<concept_id>10003752.10003790.10011192</concept_id>
<concept_desc>Theory of computation~Verification by model checking</concept_desc>
<concept_significance>300</concept_significance>
</concept>
<concept>
<concept_id>10010147.10010257.10010293.10010318</concept_id>
<concept_desc>Computing methodologies~Stochastic games</concept_desc>
<concept_significance>500</concept_significance>
</concept>
</ccs2012>
\end{CCSXML}

\ccsdesc[500]{Theory of computation~Automata over infinite objects}
\ccsdesc[500]{Theory of computation~Tree languages}
\ccsdesc[300]{Theory of computation~Verification by model checking}
\ccsdesc[500]{Computing methodologies~Stochastic games}

\keywords{Qualitative study of games, Cardinality constraints, Large sets of branches, Tree automata}

\maketitle

\section{Introduction}

An important problem in computer science is the specification and the verification of systems allowing \emph{non-deterministic} behaviours. A non-deterministic behaviour can appear in several distinct contexts: 
\begin{enumerate}[(i)]
\item controllable behaviours (typically arising when the \emph{program} is not fully specified, permitting to later restrict it); 
\item uncontrollable possibly byzantine behaviours  (typically arising from interactions of the program with its \emph{environment}, \eg a user); 
\item uncontrollable unpredictable behaviours (usually arising from \emph{nature} often modelled by randomisation).
\end{enumerate} 

Here we do an explicit distinction between the environment and nature: while we cannot assume that a user will not be malicious, the situation with nature is different as we can accept a negligible set of bad behaviours which implicitly means that they are very unlikely to appear.
On top of this, one may also want to allow \emph{imperfect-information} (typically arising when the protagonists —~the program, the environment and nature~— share some public variables but also have their own private variables) and/or \emph{infinite} state systems (\eg arising when modelling recursive procedures).

As the above mentioned features are omnipresent in nowadays systems, their specification has already deserved a lot of attention and there are several robust abstract mathematical models for them. There are also work on the specification side (\ie on how to express a desirable behaviour of the system) and on the decidability of the fundamental question of whether a given specification is met by a given system. {Unfortunately, whenever one combines any kind of non-determinism (interaction of the program with both nature and an uncontrollable environment) with one of the two others (either imperfect-information or an infinite number of states) it directly leads to undecidability. Moreover, if ones to recover decidability while still considering infinite arenas strong unnatural restrictions are needed.} 

\emph{Two-player stochastic games on graphs} are a natural way to model such systems. 
In a nutshell, a stochastic game is defined thanks to a directed graph whose vertices have been partitioned among two antagonistic players —~\Eloise (modelling the program) and \Abelard (modelling the byzantine environment)~— and a third, unpredictable and uncontrollable player, that we call \Nature. The play starts with a token on a fixed initial vertex $v_0$ of the graph that is later moved by the players (the player owning the vertex where the token is, chooses a neighbour to which the token is moved to, and so on forever)  leading to an infinite path in the game graph. We are interested in \emph{zero-sum} games, \ie we consider a winning condition  $\WC$ consisting of a subset of plays and we say that a play is winning for \Eloise if it belongs to $\WC$ and otherwise it is winning for \Abelard. A game $\game$ is such a graph together with a winning condition.

In the previous model, \Nature usually comes with a probabilistic semantics (as in the seminal work of Condon~\cite{Condon92}), \ie any vertex controlled by \Nature is associated with a probability distribution over its neighbours and this probability distribution is used to pick the next move when the token is on the corresponding vertex. The central concept is the one of a strategy, which maps any prefix of a play to the next vertex to move the token to. Once  a strategy $\phi_\Ei$ for \Eloise and a strategy $\phi_\Ai$  for \Abelard have been fixed, the set of all possible plays in the game where the players respect their strategies can be equipped with a probability measure $\mu_{v_0}^{\phi_\Ei,\phi_\Ai}$, and one can therefore define the \defin{value} of the game as  ($\phi_\Ei$ and $\phi_\Ai$ range over \Eloise and \Abelard strategies respectively)
$$\GValue{\game} = \sup_{\phi_\Ei} \inf_{\phi_\Ai} \{\mu_{v_0}^{\phi_\Ei,\phi_\Ai}(\WC) \}$$
Then, the following questions are of special interest.
\begin{enumerate}
\item “\emph{Decide whether the value of the game is larger than some given threshold $\eta$}” and its qualitative weakening “\emph{Decide if the value is equal to $1$}”.
\item “\emph{When exists, compute an optimal strategy for \Eloise}” where an optimal strategy $\phi_\Ei$ for \Eloise is one such that $\GValue{\game} = \inf_{\phi_\Ai} \{\mu_{v_0}^{\phi_\Ei,\phi_\Ai}(\WC) \}$ (note that such a strategy may not exist even if the graph is finite).
\end{enumerate}

If the game is played on a finite graph and the winning condition is $\omega$-regular, all those questions  can be answered and algorithms are known and their complexities, depending on the winning condition, range from P to PSPACE  (see \eg \cite{ChatterjeePHD} for an overview). 

Unfortunately the landscape drastically changes as soon as one either considers infinite game graphs and/or imperfect-information (\ie instead of knowing the exact state of the system, each player only knows that it belongs to some equivalence class). In particular we have the following undecidability (somehow minimal) results:
\begin{itemize}
\item If the game graph is a pushdown graph, then even if \Abelard is not part of the game, the qualitative analysis of reachability games is undecidable~\cite{EtessamiY05}.
\item If \Eloise has imperfect-information then, even if the graph is finite and \Abelard is not part of the game, {almost-sure winning is undecidable for co-Büchi games~\cite{BGB12}}.
\end{itemize}

In this work, we propose two alternative semantics that lead to decidable problems where the previous probabilistic approach fails. The main idea is to evaluate (for fixed strategies of \Eloise and \Abelard) how {“small” the set of resulting losing plays for \Eloise is}. 

Our first setting is based on \emph{counting}. In order to evaluate how good a situation is for \Eloise  (\ie using some strategy $\phi_\Ei$ against a strategy $\phi_\Ai$ of \Abelard)  we simply count how many losing plays there are: the fewer the better. Of special interest are those strategies for which, against any strategy of \Abelard, the number of losing plays is at most countable. The idea of counting can be traced back to the work in \cite{BeauquierNN91,BN95} on automata with cardinality constraints. There is also work on the logical side with decidable results but that do not lead to efficient algorithms~\cite{BKR10}.

Our second setting is based on \emph{topology}. In order to evaluate how good a situation is for \Eloise  (\ie using some strategy $\phi_\Ei$ against a strategy $\phi_\Ai$ of \Abelard)  we use a topological notion of “bigness“/“smallness“ given by the concept of large/meager set. The idea of using topology was considered previously in the context of \emph{finite} Markov chains \cite{VolzerV12} and \emph{finite} Markov decision processes \cite{AsarinCV10}.

The approach we follow in this paper is to provide reductions to games that do not involve Nature. More precisely, a typical result will be to provide a transformation of a game involving \Nature into a new game that no longer involves \Nature and that is such that the algorithmic question considered on the original game reduces to another question on the new (two-player) game. In particular when the latter is decidable it implies decidability of the original problem. In order to be as general as possible we try to impose as few restrictions as possible on the underlying graph of the game (typically we allow infinite graphs) and on the winning condition (many results are obtained for Borel conditions): this permits to obtain decidability results for a wide range of games.

The paper starts with definitions of basic objects in Section~\ref{section:preliminaries} while Section~\ref{section:perfect} introduces the different settings we consider in this paper. In Section~\ref{section:perfect-decision} we focus on perfect-information games and we provide reductions for both the cardinality setting and the topological setting; algorithmic consequences as well as consequences for automata on infinite trees are then discussed in Section~\ref{section:consequences-perfect}. We then turn to the imperfect-information setting in Section~\ref{section:imperfect} and discuss consequences of our results in Section~\ref{section:consequences-imperfect}. Finally we summarize our results and propose some perspectives in Section~\ref{section:conclusion}.

\section{Preliminaries}\label{section:preliminaries}

We now introduce basic concepts that will be used all along the paper.

\subsection{Sets}

Let $X$ be a set, we denote by $\cardinal{X}$ its cardinal. In this work, we will only need to consider\footnote{This is a consequence of the structure of the sets we consider (see Proposition~\ref{prop:continuumHypothesis}) and has nothing to do with the continuum hypothesis.} 
 finite cardinals, $\aleph_0$ (the cardinality of the natural numbers) or $2^{\aleph_0}$ (the cardinality of the real numbers). A set is \defin{countable} if its cardinal is smaller or equal than $\aleph_0$ (equivalently, the set is either finite or in bijection with the natural numbers).

If $S_1,\dots,S_k$ are sets we denote by $S=S_1\uplus\cdots\uplus S_k$ the fact that they form a partition of $S$, \ie $S=S_1\cup\cdots\cup S_k$ and $S_i\cap S_j=\emptyset$ for every $i\neq j$.

Let $S$ be a set, $\sim$ be an equivalence relation on $S$ and $s$ be some element in $S$. Then, we denote by $[s]_{/_\sim}=\{s'\mid s\sim s'\}$ the equivalence class of $s$ for relation $\sim$ and by $S_{/_\sim}$ the set of equivalence classes of $\sim$ on elements of $S$.

\subsection{Words}

Let $A$ be a (possibly infinite) set seen here as an \defin{alphabet}. We denote by $A^*$ the set of finite words over the alphabet $A$ and by $A^\omega$ the set of infinite words over the alphabet $A$. If $u$ is a word we denote by $|u|\in\mathbb{N}\cup\{\omega\}$ its length. We denote by $\epsilon$ the empty word and we let $A^+=A^*\setminus\{\epsilon\}$. 

If $u\in A^*$ and $v\in A^*\cup A^\omega$ we denote by $u\cdot v$ {(or simply $uv$)} the (possibly infinite) word obtained by concatenating $u$ and $v$. A word $u\in A^*$ is a \defin{prefix} of a word $w\in A^*\cup A^\omega$ if there exists some $v\in A^*\cup A^\omega$ such that $w=u\cdot v$, and we denote this situation by $u\prefix w$; moreover if $u\neq w$ we say that $u$ is a \emph{strict} prefix (denoted by $u\prefixstrict w$). 
 A set $S\subseteq A^*$ is \defin{prefix-closed} if for all $u\in S$ and $v\prefix  u$ one has $v\in S$.

 Let $(u_i)_{i\geq 0}$ be a sequence of finite words in $A^*$ such that for all $i\geq 0$ one has $u_i\prefix u_{i+1}$ and for infinitely many $i\geq 0$ one has $u_i\prefixstrict u_{i+1}$. 
We define its \defin{limit} $u_\infty \in  A^\omega$ as the unique infinite word such that for all $i \geq 0$, $u_i\prefixstrict u_\infty$.
Equivalently, $u_\infty=a_1a_2a_3\cdots\in A^\omega$ where for all $k\geq 1$, $a_k$ is the $k$-th letter of any $u_i$ such that $|u_i|\geq k$.

\subsection{Trees}

{In this paper we consider various notions of trees that we introduce now (see Figure~\ref{figure:basic-tree} for some illustrations).} Let $D$ be a (countable) set of directions; a \defin{$D$-tree} (or simply a \defin{tree} when $D$ is clear) is a prefix-closed subset of $D^*$. A $D$-tree is \emph{complete} if it equals $D^*$; it is \emph{binary} if $\cardinal{D}=2$ (and in general one identifies $D$ with $\{0,1\}$). 

For a given tree $T$, we refer to any element $u\in T$ as a \defin{node}; if $T=\{0,1\}^*$ is the complete binary tree, we refer to $u\cdot0$ (\resp $u\cdot 1$) as the left (\resp right) son of $u$. The node $\epsilon$ is called the \emph{root}.

In the sequel we implicitly assume that the trees we consider do not contain leaves, \ie for every node $u\in T$ there is some direction $d\in D$ such that $ud\in T$.

{An (infinite)} \defin{branch} in a $D$-tree $T$ is an infinite word $\branch\in D^\omega$ such that there is an increasing  (for the prefix ordering) sequence of nodes $(u_i)_{i\geq 0}$ whose limit is $\branch$. A node $u$ belongs  to a branch $\branch$ whenever $u\prefixstrict \branch$. Branches in the complete $D$-tree exactly coincide with $D^\omega$. For a node $u\in T$, the \defin{cone} ${\mathrm{Cone}_{T}(u)}$ is defined as the set of branches of $T$ passing through $u$ (\ie $\mathrm{Cone}_{T}(u)=\{ \branch \mid \branch\;\textrm{branch of $T$ and}\;\; u\prefixstrict \branch\}$).

Let $A$ be a (countable) alphabet; an \defin{$A$-labelled tree} $t$ is a total function $t:Dom\rightarrow A$ where $Dom$ is a tree. For a node $u\in Dom$ we call $t(u)$ the \emph{label} of $u$; and for a branch $\branch=\branch_0 \branch_1\cdots$ {of the tree $Dom$}, we call $t(\branch_0)t(\branch_0\branch_1)t(\branch_0\branch_1\branch_2)\cdots\in A^\omega$ the label of $\branch$.
For a node $u\in Dom$ we let $t[u]$ be the \defin{subtree} rooted at $u$, \ie $t[u]:\Dom'\rightarrow A$ with $\Dom'=\{v\mid u\cdot v\in\Dom\}$ and $t[u](v) = t(uv)$. Finally we call an \defin{$A$-labelled $D$-tree} an $A$-labelled tree whose domain is a $D$-tree.

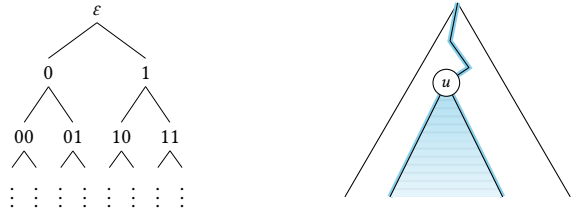
\begin{figure}
\centering
\begin{tikzpicture}
\begin{scope}[scale=.8,transform shape]
	\node(arbre) at (-6,-1.7){\Tree 
[.{$\epsilon$ }
	[.$0$ 
		[.$00$  [ .$\vdots$ ] [ .$\vdots$ ]] [.$01$ [ .$\vdots$ ] [ .$\vdots$ ]]
	] 
	[.$1$
		[.$10$  [ .$\vdots$ ] [ .$\vdots$ ]] [.$11$  [ .$\vdots$ ] [ .$\vdots$ ] ]
	]
 ]
};	
\end{scope}
\begin{scope}[scale=.5]
\draw (0,0) -- (-3,-5.1) ;
\draw (0,0) -- (3,-5.1) ;

\draw[line width=2pt,color=SkyBlue] (0,0) -- (-0.2,-1) -- (0.3,-1.7)--(-0.3,-2.1) ;

\draw[line width=2pt,color=SkyBlue] (-0.3,-2.1) --  (-1.8,-5.1);
\draw[line width=2pt,color=SkyBlue] (-0.3,-2.1) -- (1.2,-5.1);
\fill[bottom color=SkyBlue!100!black!20,
top color=SkyBlue!100!black!70] (-0.3,-2.1) --  (-1.8,-5.1) -- (1.2,-5.1) -- cycle;

\node[draw,fill=white,circle,scale=.6] (u) at (-0.3,-2.1) {\large $u$};
\draw (0,0) -- (-0.2,-1) -- (0.3,-1.7)--(u) ;
\draw (u)  -- (-1.8,-5.1);
\draw (u) -- (1.2,-5.1);
	
\end{scope}
\end{tikzpicture}
\caption{On the left first levels of the complete binary tree; on the right the cone $\cone{u}$}\label{figure:basic-tree}
\end{figure}

\subsection{Graphs}
A (directed) \defin{graph} $G$ is a pair $(V,E)$ where $V$ is a {countable} set of \defin{vertices} and $E\subseteq V\times V$ is a set of \defin{edges}. For a vertex $v$, {we denote by  $E(v)$ the set of its successors $\{v'\mid (v,v')\in E\}$} and in the rest of the paper (hence, this is implicit from now on), we only consider graphs that have no dead-end, \ie such that $E(v)\neq\emptyset$ for all $v$.

\section{Perfect-information Games With Nature: Main Definitions and Concepts}
\label{section:perfect}

\subsection{Definitions}

In this paper, we are interested in games involving two antagonistic players —~\Eloise and \Abelard~— together with a third \emph{uncontrollable and unpredictable} player called \Nature. An \defin{arena} is a tuple $\arena=(G,\VE,\VA,\VN)$ where $G=(V,E)$ is a graph and $V=\VE\uplus \VA\uplus \VN$  is a partition of the vertices among the three players. We say that a vertex $v$ is owned by \Eloise (\resp by \Abelard, \resp by \Nature) if $v\in\VE$ (\resp $v\in\VA$, \resp $v\in\VN$).

\Eloise, \Abelard and \Nature play in $\arena$ by moving a pebble along edges. A
\defin{play} from an initial vertex $v_0$ proceeds as
follows: the player owning $v_0$ moves the pebble to a vertex $v_1\in E(v_0)$. Then, the player owning $v_1$ chooses a successor
$v_2\in E(v_1)$ and so on forever. As we assumed that there is no dead-end, a play is an
infinite word $v_0v_1v_2\cdots \in V^\omega$ such that for all $i \geq 0$, one has $v_{i+1}\in E(v_i)$. A \defin{partial play} is a prefix of a play, \ie it is a finite word $v_0v_1\cdots v_\ell \in V^*$  such that for all $0\leq i<\ell$, one has $v_{i+1}\in E(v_i)$.

\begin{example}\label{example:game-running1}
Consider the arena depicted in Figure~\ref{fig:example-game}, where we adopt the following convention: a vertex owned by \Eloise (\resp \Abelard, \resp \Nature) is depicted by a circle (\resp a square, \resp a diamond). The underlying graph is $G=(V,E)$ where $V=\{E_i,A_i,N_i\mid i\geq 0\}$ and $$E=\{(A_i,A_{i+1}),(A_i,N_i),(N_{i+1},N_{i}),(N_i,E_i),(N_i,N_i),(E_{i+1},E_{i}),(E_i,E_i)\mid i\geq 0\}$$	
Note that this graph is an example of a pushdown graph~\cite{MullerS85} (\ie it can be presented as the transition graph of a pushdown automaton).

The partition of the vertices among the player is given by $V_\Ei=\{E_i\mid i\geq 0\}$, $V_\Ai=\{A_i\mid i\geq 0\}$ and $V_\Ni=\{N_i\mid i\geq 0\}$.

The following sequence $\lambda$ is an example of a play in that arena $$\lambda=A_0A_1A_2N_2N_2N_2N_1E_1E_1E_1E_0E_0E_0\cdots$$
\end{example}

\begin{figure}[htb]
\begin{center}
\begin{tikzpicture}[>=stealth',thick,scale=1,transform shape]
\tikzstyle{Abelard}=[draw]
\tikzstyle{Eloise}=[draw,circle]
\tikzstyle{Nature}=[draw,diamond,scale = .65,font=\Large]
\tikzstyle{AbelardH}=[]
\tikzstyle{EloiseH}=[circle]
\tikzstyle{NatureH}=[diamond,scale = .65,font=\Large]
\tikzstyle{C0}=[fill=MediumSpringGreen]
\tikzstyle{C1}=[fill=Yellow]
\tikzstyle{C2}=[fill=Cyan]
\tikzset{every loop/.style={min distance=10mm,looseness=10}}
\tikzstyle{loopleft}=[in=150,out=210]
\tikzstyle{loopright}=[in=-30,out=30]
\node[Abelard,C0] (A1) at (1,0) {$A_0$};
\node[Abelard,C0] (A2) at (2.5,0) {$A_1$};
\node[Abelard,C0] (A3) at (4,0) {$A_2$};
\node[Abelard,C0] (A4) at (5.5,0) {$A_3$};
\node[AbelardH] (A5) at (7,0) {};
\node[Nature,C1] (N1) at (1,-1.5) {$N_0$};
\node[Nature,C1] (N2) at (2.5,-1.5) {$N_1$};
\node[Nature,C1] (N3) at (4,-1.5) {$N_2$};
\node[Nature,C1] (N4) at (5.5,-1.5) {$N_3$};
\node[NatureH] (N5) at (7,-1.5) {};
\node[Eloise,C0] (E1) at (1,-3) {$E_0$};
\node[Eloise,C1] (E2) at (2.5,-3) {$E_1$};
\node[Eloise,C1] (E3) at (4,-3) {$E_2$};
\node[Eloise,C1] (E4) at (5.5,-3) {$E_3$};
\node[EloiseH] (E5) at (7,-3) {};

\path[->] (A1) edge (N1);\path[->] (A1) edge (A2);
\path[->] (A2) edge (N2);\path[->] (A2) edge (A3);
\path[->] (A3) edge (N3);\path[->] (A2) edge (A3);
\path[->] (A4) edge (N4);\path[->] (A3) edge (A4);
\path[->,dotted] (A4) edge (A5);

\path[->] (N1) edge  [in=110,out=160, loop] ();\path[->] (N1) edge (E1);
\path[->] (N2) edge  [in=110,out=160, loop] ();\path[->] (N2) edge (N1);\path[->] (N2) edge (E2);
\path[->] (N3) edge  [in=110,out=160, loop] ();\path[->] (N3) edge (N2);\path[->] (N3) edge (E3);
\path[->] (N4) edge  [in=110,out=160, loop] ();\path[->] (N4) edge (N3);\path[->] (N4) edge (E4);
\path[->,dotted] (N5) edge (N4);

\path[->] (E1) edge  [loop below, loop] ();
\path[->] (E2) edge  [loop below, loop] ();\path[->] (E2) edge (E1);
\path[->] (E3) edge  [loop below, loop] ();\path[->] (E3) edge (E2);
\path[->] (E4) edge  [loop below, loop] ();\path[->] (E4) edge (E3);
\path[->,dotted] (E5) edge (E4);

\end{tikzpicture}
\end{center}
\caption{Example of an infinite arena}\label{fig:example-game}
\end{figure}
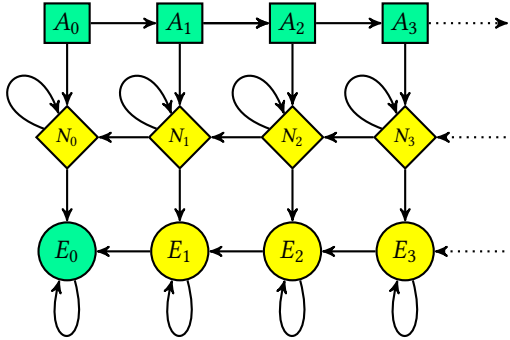

A \defin{strategy} for \Eloise is a function $\phi_\Ei:V^*\VE\rightarrow V$ assigning, to every partial play
ending in some vertex $v\in \VE$, a vertex $v'\in E(v)$. Strategies for \Abelard are defined likewise, \ie as functions $\phi_\Ai:V^*\VA\rightarrow V$. A strategy $\phi$ is \defin{positional} if for any two partial
plays $\pi$ and $\pi'$ ending in the same vertex, we have $\phi(\pi)=\phi(\pi')$. When it is clear from the context that a strategy of \Eloise (\resp \Abelard) is positional we will define it as a function from $V_\Ei$ (\resp $V_\Ai$) to $V$. A strategy for \Eloise $\strat_\Ei$ is a \defin{finite-memory} strategy if it can be implemented by a finite-memory machine that sequentially reads the vertices visited during the play; formally we require that there is a finite set $M$, an element $m_0\in M$, a function $\mathrm{Up}: M\times V\rightarrow M$ and a function $\mathrm{Move}: M\times V\rightarrow V$ such that $\strat_\Ei(v_0\cdots v_i) = \mathrm{Move}(m_i,v_i)$ for every partial play $v_0\cdots v_i$ ending in a vertex $v_i\in V_\Ei$ where we inductively define $m_i$ by letting $m_{k+1} = \mathrm{Up}(m_k,v_{k+1})$ for every $k\geq 0$. Finite-memory strategies for \Abelard are defined likewise. The size of the memory used by such a strategy is defined as the size of $M$. Note that positional strategies correspond to the special case of finite-memory strategy where $M=\{m_0\}$ is a singleton.

In a given play $\lambda=v_0v_1\cdots$ we say that \Eloise (\resp \Abelard) \emph{respects a strategy} $\phi$ if whenever $v_i\in \VE$ (\resp $v_i\in \VA$) one has $v_{i+1} = \phi(v_0\cdots v_i)$.

With an initial vertex $v_0$ and a pair of strategies $(\phi_\Ei,\phi_\Ai)$, we associate the set $\outcomes{v_0}{\phi_\Ei}{\phi_\Ai}$ of possible plays where each player respects his strategy, \ie $\lambda\in \outcomes{v_0}{\phi_\Ei}{\phi_\Ai}$ if and only if $\lambda$ is a play starting from $v_0$ where \Eloise respects $\phi_\Ei$ and \Abelard respects $\phi_\Ai$. 
In the classical setting where \Nature is not involved (\emph{i.e.,} $\VN=\emptyset$), when the strategies of \Eloise and \Abelard are fixed there is only one possible play, \ie $\outcomes{v_0}{\phi_\Ei}{\phi_\Ai}$ is a singleton. The presence of \Nature induces a branching structure: indeed, $\outcomes{v_0}{\phi_\Ei}{\phi_\Ai}$ is the set of branches of the $V$-tree $\treeGame{v_0}{\phi_\Ei}{\phi_\Ai}$ consisting of those partial plays where each  player respects his strategy\footnote{We make here a slight abuse: indeed, as we take as the root the trivial partial play $v_0$ and not $\epsilon$, it breaks the definition of a tree as being a prefix closed set.}.

\begin{example}\label{example:game-running2}
	Consider again the arena from Example~\ref{example:game-running1} (depicted in Figure~\ref{fig:example-game}) and define the following positional strategies $\phi_\Ei$ and $\phi_\Ai$ for \Eloise and \Abelard.
	
	\begin{itemize}
		\item For every $i\geq 0$, $\phi_\Ei(E_i) = E_{i-1}$ if $i>0$ and $\phi_\Ei(E_0)=E_0$.
		\item For every $i\geq 0$, $\phi_\Ai(A_i) = A_{i+1}$ if $i<2$ and $\phi_\Ai(A_i)=N_i$ if $i\geq 2$.
	\end{itemize}
	
	Then the set $\outcomes{A_0}{\phi_\Ei}{\phi_\Ai}$ consists of the plays described by the following $\omega$-regular expression
	
	$$ A_0A_1A_2N_2(N_2^\omega + (N_2^+(E_2E_1E_0^\omega+N_1^\omega+N_1^+(E_1E_0^\omega+(N_0^\omega+N_0^+E_0^\omega)))))$$
	\ie a play in $\outcomes{A_0}{\phi_\Ei}{\phi_\Ai}$ starts by moving the token to $N_2$ and can either get trap forever in some $N_i$ with $i\geq 2$ or eventually reaches some $E_i$ with $i\geq 2$ from where it goes to $E_0$ and stays there forever.
	
\end{example}

A \defin{winning condition} is a subset $\WC\subseteq V^\omega$ and a \defin{game} is a tuple $\game=(\arena,\WC,v_0)$ consisting of an arena, a winning condition and an initial vertex $v_0$.
In this paper, we only consider winning conditions that are \defin{Borel} sets, \ie that belong to the $\sigma$-algebra defined from the basic open sets of the form $KV^\omega$ with $K\subseteq V^*$.

A well known popular example of Borel winning conditions are the \defin{parity} conditions. Let $\col:V\rightarrow \colors$ be a colouring function assigning to every vertex a colour in a \emph{finite} set $C\subset \mathbb{N}$.  Then one defines $\WC_\col$ to be the set of all plays where the smallest infinitely often repeated colour is even, \ie 
$$\WC_\col=\{v_0v_1v_2\cdots\in V^\omega\mid \liminf(\col(v_i))_i\text{ is even}\} $$

\defin{Büchi} (\resp \defin{co-Büchi}) conditions are those parity conditions where $\colors=\{0,1\}$ (\resp $\colors=\{1,2\}$); it requires for a play to be winning to go infinitely (\resp only finitely) often through vertices coloured by $0$ (\resp $1$) and in general it is defined by a set of final (\resp forbidden) vertices: those of colour $0$  (\resp $1$).

A more general class of winning conditions are so-called $\omega$-regular conditions. Such a condition $\WC_{\tau,L}$ is defined thanks to a mapping $\tau:V\rightarrow A$ where $A$ is a \emph{finite} alphabet, and an $\omega$-regular language $L$ over the alphabet $A$ (see \eg \cite{PP04} for definitions of $\omega$-regular languages). Then one simply lets:
$$\WC_{\tau,L}=\{v_0v_1v_2\cdots\in V^\omega\mid 
\tau(v_0)\tau(v_1)\tau(v_2)\cdots\in L\} $$

A play $\lambda$ from $v_0$ is \defin{won} by \Eloise if and only if $\lambda\in \WC$; otherwise $\lambda$ is won by \Abelard. 

\begin{example}\label{example:game-running3}
Consider again the arena from Example~\ref{example:game-running1} and \ref{example:game-running2} depicted in Figure~\ref{fig:example-game}. Consider	 the Büchi condition defined by letting the final vertices (depicted in green in the picture) be the set $\{A_i\mid i\geq 0\}\cup\{E_0\}$. Then the play $\lambda=A_0A_1A_2N_2N_2N_2N_1E_1E_1E_1E_0E_0E_0\cdots$ is won by \Eloise.
\end{example}

A strategy $\phi_\Ei$ is (surely) \defin{winning} for \Eloise in $\game$ if for any strategy $\phi_\Ai$ of \Abelard one has $\outcomes{v_0}{\phi_\Ei}{\phi_\Ai}\subseteq \Omega$, \ie she wins regardless of the choices of \Abelard and \Nature. Symmetrically, a strategy $\phi_\Ai$ is (surely) \defin{winning} for \Abelard in $\game$ if for any strategy $\phi_\Ei$ of \Eloise one has $\outcomes{v_0}{\phi_\Ei}{\phi_\Ai}\cap \Omega=\emptyset$.

{As the winning condition is Borel,} it is a well known result ---~Martin's determinacy Theorem \cite{Martin75}~--- that whenever $\VN=\emptyset$ the game is \emph{determined}, \ie  either \Eloise or \Abelard has a winning strategy. Due to \Nature, it is easily seen that in many situations neither \Eloise nor \Abelard has a winning strategy.
 For instance, consider the B\"uchi game depicted in Figure~\ref{fig:nondeterminiedGame} where all vertices belong to Nature and where the final vertex is $1$. The strategies for \Eloise and \Abelard are both the trivial function with empty domain $\phi_\emptyset$  and the set $\outcomes{1}{\phi_\emptyset}{\phi_\emptyset} = 1\cdot \{1,2\}^\omega$ contains plays that are winning for \Eloise (e.g. $1^\omega$) as well as plays that are winning for \Abelard (e.g. $1 2^\omega$).

\begin{figure}[htb]
\begin{center}
\begin{tikzpicture}[>=stealth',thick,scale=1,transform shape]
\tikzstyle{Abelard}=[draw]
\tikzstyle{Nature}=[draw,diamond]
\tikzstyle{Buchi}=[fill=MediumSpringGreen]
\tikzset{every loop/.style={min distance=10mm,looseness=10}}
\tikzstyle{loopleft}=[in=150,out=210]
\tikzstyle{loopright}=[in=-30,out=30]
\node[Nature,Buchi] (N1) at (0,0) {$1$};
\node[Nature] (N2) at (2,0) {$2$};

\path[->] (N1) edge  [loopleft, loop] ();
\path[->] (N2) edge  [loopright, loop] ();
\path[->,bend right] (N1) edge (N2);
\path[->,bend right] (N2) edge (N1);
\end{tikzpicture}
\end{center}
\caption{A non-determined Büchi game where Nature plays alone.}\label{fig:nondeterminiedGame}
\end{figure}
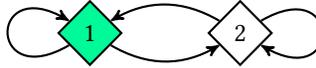

Another example of this situation is given by the Büchi game from Example~\ref{example:game-running3} where it is easily observed that neither \Eloise nor \Abelard has a winning strategy.

 One way of solving this situation, \ie to still evaluate how good a strategy/game is for \Eloise, is to equip \Nature with a probabilistic semantics, leading to the well-known concept of \emph{stochastic games} that we briefly recall in the next section, the main focus of the present paper being to propose alternative semantics (the cardinality one and the topological one) that lead to decidable problems where the previous probabilistic approach fails.

\subsection{The Probabilistic Setting}
\label{subsection:proba}

We now briefly recall the concept of \defin{stochastic games}~\cite{Shapley53,Condon92} (see also \cite{ChatterjeePHD} for an  overview of the field and formal details on the objects below) which consists of equipping the games with Nature with a probabilistic semantics. 
In a nutshell, any vertex in $\VN$ comes with a probability distribution over its neighbours and then, for a fixed tuple $(v_0,\phi_\Ei,\phi_\Ai)$, these probabilities are used to define a $\sigma$-algebra (taking as cones the sets of plays sharing a common finite prefix) and a probability measure $\mu_{v_0}^{\phi_\Ei,\phi_\Ai}$ on $\outcomes{v_0}{\phi_\Ei}{\phi_\Ai}$. In particular, this permits to associate with any pair $(\phi_\Ei,\phi_\Ai)$ a real in $[0,1]$ defined as the probability of the (mesurable) subset $\outcomes{v_0}{\phi_\Ei}{\phi_\Ai}\cap\Omega$ in the previous space. Of special interest is the \defin{value} of a given strategy $\phi_\Ei$ of \Eloise, that estimates how good $\phi_\Ei$ is for her:
$$\Value{\phi_\Ei}{\game}=\inf \{\mu_{v_0}^{\phi_\Ei,\phi_\Ai}(\WC) \mid \phi_\Ai\text{ \Abelard strategy}\}$$
Finally, the value of the game is defined by taking the supremum of the values of \Eloise's strategies:
$$\GValue{\game}=\sup \{\Value{\phi_\Ei}{\game} \mid \phi_\Ei\text{ \Eloise strategy}\}$$
A strategy $\phi_\Ei$ is optimal when $\Value{\phi_\Ei}{\game}= \GValue{\game}$ and it is almost surely winning when \hbox{$\Value{\phi_\Ei}{\game} = 1$}.
 
 A deep result due to Martin~\cite{Martin98} establishes the following determinacy result ($\phi_\Ei$ and $\phi_\Ai$ range over strategies of \Eloise and \Abelard respectively):

$$ \sup_{\phi_\Ei} \inf_{\phi_\Ai} \{\mu_{v_0}^{\phi_\Ei,\phi_\Ai}(\WC) \}  = 1- \inf_{\phi_\Ai}  \sup_{\phi_\Ei}  \{\mu_{v_0}^{\phi_\Ei,\phi_\Ai}(\WC) \}$$

\begin{example}\label{example:game-stoc}
Consider the arena depicted in Figure~\ref{fig:example-game-stochastic}. It is essentially a variant of the game from examples~\ref{example:game-running1}--\ref{example:game-running3} where we replaced the self-loop on $E_0$ by an edge from $E_0$ to $A_0$ and where we associate, with every node $N_i$ owned by \Nature, the following probability distribution $d_i$ on its neighbours:
\begin{itemize}
	\item $d_0(E_0) = d_0(N_0) = 1/2$;
	\item if $i>0$, $d_i(E_i)=d_i(N_i)=d_i(N_{i-1})=1/3$.
\end{itemize}
Then it is easily seen that the positional strategy $\phi_\Ei$ for \Eloise defined by letting $\phi_\Ei(E_i)=E_{i-1}$ if $i>0$ and $\phi_\Ei(E_0)=A_0$ is almost surely winning. Indeed, for a fixed strategy $\phi_\Ai$ the plays $\lambda$ in $\outcomes{A_0}{\phi_\Ei}{\phi_\Ai}$ that are losing for \Eloise are included in the countable union (over all integer $i$ and all integer $k$) of the finite sets of plays that get trap in $N_i$ forever after entering in it after having previously visited exactly $k$-times vertex $E_0$: hence, this set has measure $0$.
\end{example}

\begin{figure}[htb]
\begin{center}
\begin{tikzpicture}[>=stealth',thick,scale=1,transform shape]
\tikzstyle{Abelard}=[draw]
\tikzstyle{Eloise}=[draw,circle]
\tikzstyle{Nature}=[draw,diamond,scale = .65,font=\Large]
\tikzstyle{AbelardH}=[]
\tikzstyle{EloiseH}=[circle]
\tikzstyle{NatureH}=[diamond,scale = .65,font=\Large]
\tikzstyle{C0}=[fill=MediumSpringGreen]
\tikzstyle{C1}=[fill=Yellow]
\tikzstyle{C2}=[fill=Cyan]
\tikzset{every loop/.style={min distance=10mm,looseness=10}}
\tikzstyle{loopleft}=[in=150,out=210]
\tikzstyle{loopright}=[in=-30,out=30]
\node[Abelard,C0] (A1) at (1,0.5) {$A_0$};
\node[Abelard,C0] (A2) at (3,0.5) {$A_1$};
\node[Abelard,C0] (A3) at (5,0.5) {$A_2$};
\node[Abelard,C0] (A4) at (7,0.5) {$A_3$};
\node[AbelardH] (A5) at (9,0.5) {};
\node[Nature,C1] (N1) at (1,-1.5) {$N_0$};
\node[Nature,C1] (N2) at (3,-1.5) {$N_1$};
\node[Nature,C1] (N3) at (5,-1.5) {$N_2$};
\node[Nature,C1] (N4) at (7,-1.5) {$N_3$};
\node[NatureH] (N5) at (9,-1.5) {};
\node[Eloise,C0] (E1) at (1,-3.5) {$E_0$};
\node[Eloise,C1] (E2) at (3,-3.5) {$E_1$};
\node[Eloise,C1] (E3) at (5,-3.5) {$E_2$};
\node[Eloise,C1] (E4) at (7,-3.5) {$E_3$};
\node[EloiseH] (E5) at (9,-3.5) {};

\path[->] (A1) edge (N1);\path[->] (A1) edge (A2);
\path[->] (A2) edge (N2);\path[->] (A2) edge (A3);
\path[->] (A3) edge (N3);\path[->] (A2) edge (A3);
\path[->] (A4) edge (N4);\path[->] (A3) edge (A4);
\path[->,dotted] (A4) edge (A5);

\path[->] (N1) edge  [in=110,out=160, loop] node [left,near start] {\scriptsize $1/2$} ();\path[->] (N1) edge node [right,near start] {\scriptsize $1/2$}(E1);
\path[->] (N2) edge  [in=110,out=160, loop]node [above left,near end] {\scriptsize $1/3$} (); ();\path[->] (N2) edge node [below,near start] {\scriptsize $1/3$}(N1);\path[->] (N2) edge node [right,near start] {\scriptsize $1/3$}(E2);
\path[->] (N3) edge  [in=110,out=160, loop]node [above left,near end] {\scriptsize $1/3$} (); ();\path[->] (N3) edge  node [below,near start] {\scriptsize $1/3$} (N2);\path[->] (N3) edge node [right,near start] {\scriptsize $1/3$}(E3);
\path[->] (N4) edge  [in=110,out=160, loop]node [above left,near end] {\scriptsize $1/3$} (); ();\path[->] (N4) edge node [below,near start] {\scriptsize $1/3$} (N3);\path[->] (N4) edge node [right,near start] {\scriptsize $1/3$}(E4);
\path[->,dotted] (N5) edge (N4);

\path[->,in=180,out=180,looseness=1.5] (E1) edge (A1);
\path[->] (E2) edge  [loop below, loop] ();\path[->] (E2) edge (E1);
\path[->] (E3) edge  [loop below, loop] ();\path[->] (E3) edge (E2);
\path[->] (E4) edge  [loop below, loop] ();\path[->] (E4) edge (E3);
\path[->,dotted] (E5) edge (E4);

\end{tikzpicture}
\end{center}
\caption{Example of a stochastic game}\label{fig:example-game-stochastic}
\end{figure}
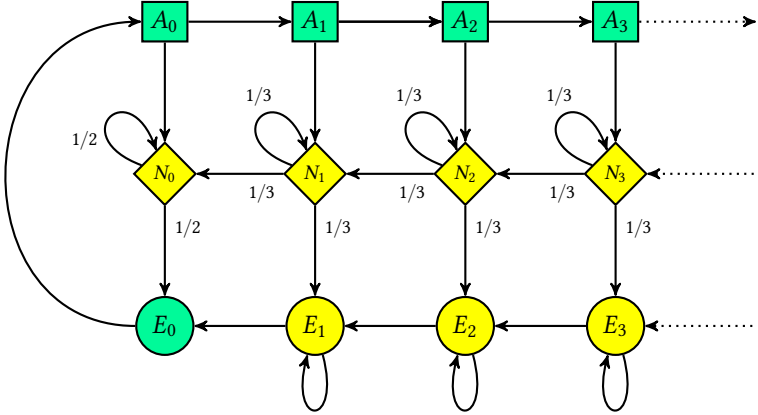

\subsection{The Cardinality Setting}\label{subsection:cardinality}

We now propose a change of perspective based on counting: in order to evaluate how good a situation is for \Eloise  (\ie using some strategy $\phi_\Ei$ against a strategy $\phi_\Ai$ of \Abelard)  we simply count how many losing plays there are; the fewer they are the better the situation is. 

As a preliminary illustration we revisit Example~\ref{example:game-stoc} .

\begin{example}\label{example:game-count}
 Consider again the stochastic game from Example~\ref{example:game-stoc} (depicted in Figure~\ref{fig:example-game-stochastic}) together with the positional strategy $\phi_\Ei$ of \Eloise. Now forget about the stochastic view of \Nature and think of it as being simply non-deterministic. Let $\phi_\Ai$ be an arbitrary strategy of \Abelard: as remarked in Example~\ref{example:game-stoc} the set of losing plays in $\outcomes{A_0}{\phi_\Ei}{\phi_\Ai}$ is countable. On the other hand the set $\outcomes{A_0}{\phi_\Ei}{\phi_\Ai}$ is easily seen to be uncountable. Therefore, one can consider that the set of losing play for \Eloise when using the strategy $\phi_\Ei$ is somehow negligible with respect to the set of all plays.
\end{example}

We first note the following proposition~\cite{Alexandrov1916} that characterises the cardinals of the Borel subsets of an arbitrary set $\outcomes{v_0}{\phi_\Ei}{\phi_\Ai}$.

{
\begin{proposition}
\label{prop:continuumHypothesis}
For any arena, any initial vertex, any pair of strategies $(\phi_\Ei,\phi_\Ai)$ and any Borel subset $S\subseteq \outcomes{v_0}{\phi_\Ei}{\phi_\Ai}$, one has $\cardinal{S}\in\mathbb{N}\cup\{\aleph_0,2^{\aleph_0}\}$.
\end{proposition}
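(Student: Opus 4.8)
The plan is to view $\outcomes{v_0}{\phi_\Ei}{\phi_\Ai}$ as a Polish space and then invoke the perfect set property for Borel sets. First I would recall that, by definition, $\outcomes{v_0}{\phi_\Ei}{\phi_\Ai}$ is the set of branches of the $V$-tree $\treeGame{v_0}{\phi_\Ei}{\phi_\Ai}$, hence a subset of $V^\omega$. Since $V$ is countable, equipping it with the discrete topology makes $V^\omega$ (under the product topology, whose basic clopen sets are precisely the cones $K V^\omega$ used to define Borel conditions) a Polish space. The set of branches of a tree without leaves is closed in this space: it is exactly the set of $\lambda\in V^\omega$ all of whose finite prefixes are nodes of the tree, i.e.\ an intersection of clopen conditions. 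Consequently $\outcomes{v_0}{\phi_\Ei}{\phi_\Ai}$ is itself a Polish space and $S$ is a Borel subset of it.

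The upper bound is then immediate: since $S\subseteq V^\omega$ with $V$ countable, we have $\cardinal{S}\le \cardinal{V^\omega}\le \aleph_0^{\aleph_0}=2^{\aleph_0}$.

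For the remaining trichotomy I would appeal to the classical perfect set property for Borel subsets of a Polish space (the Alexandrov--Hausdorff theorem, cited as \cite{Alexandrov1916}): every Borel subset of a Polish space is either countable or contains a homeomorphic copy of the Cantor space $\{0,1\}^\omega$. In the first case $\cardinal{S}\in\mathbb{N}\cup\{\aleph_0\}$; in the second case $\cardinal{S}\ge 2^{\aleph_0}$, which combined with the upper bound forces $\cardinal{S}=2^{\aleph_0}$. Either way $\cardinal{S}\in\mathbb{N}\cup\{\aleph_0,2^{\aleph_0}\}$, as claimed.

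The only genuine content is the perfect set property, which is exactly the cited input; everything else is checking that the ambient space is Polish and bounding its cardinality. The main (and rather minor) obstacle is the bookkeeping needed to confirm that the outcome set really is a closed subspace of a standard Polish space despite the cosmetic fact, flagged in the footnote on the definition of $\treeGame{v_0}{\phi_\Ei}{\phi_\Ai}$, that these trees are rooted at the partial play $v_0$ rather than at $\epsilon$. This reparametrisation does not affect closedness in $V^\omega$, so the argument goes through unchanged.
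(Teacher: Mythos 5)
Your proof is correct and follows essentially the same route as the paper's: both realise $\outcomes{v_0}{\phi_\Ei}{\phi_\Ai}$ as a Polish space (the paper directly as the branch space of a countably-branching tree with cones as basic opens, you as a closed subspace of $V^\omega$) and then invoke the perfect set property for Borel subsets of Polish spaces. The extra bookkeeping you supply (closedness in $V^\omega$, the explicit $2^{\aleph_0}$ upper bound) is fine but is exactly what the paper delegates to the citations of Kechris.
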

}

\begin{proof}
As  $\outcomes{v_0}{\phi_\Ei}{\phi_\Ai}$ is the set of branches of a tree whose set of directions is {countable} (see for instance Theorem 3.11 in \cite{Kechris}), it is a Polish space with the standard basis
$ \{ \mathrm{Cone}_{\treeGame{v_0}{\phi_\Ei}{\phi_\Ai}}(v) \mid v \in \treeGame{v_0}{\phi_\Ei}{\phi_\Ai} \}.$
By  \cite[Theorem 13.6]{Kechris}, any Borel subset $S$  is either countable or has cardinality $2^{\aleph_{0}}$. 
\end{proof}

We define the cardinality leaking of an \Eloise's strategy as a measure of its quality.

\begin{definition}[Cardinality Leaking of a Strategy]\label{def:cardleaking}
Let $\game=(\arena,\WC,v_0)$ be a game and let $\phi_\Ei$ be a strategy of \Eloise. The \defin{cardinality leaking} of $\phi_\Ei$ is the cardinal $\CL{\phi_\Ei}$ defined by
$$\CL{\phi_\Ei} = \sup\{\cardinal{\outcomes{v_0}{\phi_\Ei}{\phi_\Ai}\setminus\WC}\mid \phi_{\Ai}\text{ strategy of \Abelard}\}$$

Proposition~\ref{prop:continuumHypothesis} implies that $\CL{\phi_\Ei}\in\mathbb{N}\cup\{\aleph_0,2^{\aleph_0}\}$.
\end{definition}

The goal of \Eloise is to minimise the number of losing plays, hence leading the following concept.

\begin{definition}[Leaking Value of a Game]\label{def:cardleakGame}
Let $\game=(\arena,\WC,v_0)$ be a game. The \defin{leaking value} of $\game$  is the cardinal $\LVal{\game}$ defined by 
$$\LVal{\game} = \inf\{\CL{\phi_{\Ei}}\mid \phi_{\Ei}\text{ strategy of \Eloise}\}$$
Thanks to Proposition~\ref{prop:continuumHypothesis} it follows that $\LVal{\game}\in\mathbb{N}\cup\{\aleph_0,2^{\aleph_0}\}$.
\end{definition}

In the reminder of this article, we consider that a strategy is good from the cardinality point of view if its cardinality leaking is countable.
From a modelisation point of view, we agree that this notion can be questionnable. In particular it only makes sense if for all strategy $\phi_{\Ei}$ and $\phi_{\Ai}$ of \Eloise and \Abelard respectively,
the set of outcomes is uncountable. A sufficient condition to ensure this last 
property is that all vertices of \Nature have at least two successors and 
that every play visits infinitely many vertices of \Nature.
A stronger requirement that we will also consider is to look for strategy with a fixed finite cardinality leaking.

\begin{remark}\label{remark:supNotMaxCardleaking}
One can wonder whether the $\sup$ in the definition of $\CL{\phi_\Ei}$ can be replaced by a $\max$, \ie whether, against any fixed strategy of \Eloise, \Abelard has always an “optimal” counter strategy.

Actually this is not possible as exemplified by the Büchi game depicted in Figure~\ref{fig:supNotMaxCardleaking} where  coloured vertices ($v_A$ and $v_W$) are the final ones ~— with $v_{A}$ as initial vertex. 
\begin{figure}
\begin{center}
\begin{tikzpicture}[>=stealth',thick,scale=1,transform shape]
\tikzstyle{Abelard}=[draw]
\tikzstyle{Eloise}=[draw,circle]
\tikzstyle{Nature}=[draw,diamond,scale = .65,font=\Large]
\tikzstyle{Buchi}=[fill=MediumSpringGreen]
\tikzset{every loop/.style={min distance=15mm,looseness=10}}
\tikzstyle{loopleft}=[in=150,out=210]
\tikzstyle{loopright}=[in=-30,out=30]
\tikzstyle{loopbelow}=[in=-120,out=-60]
\tikzstyle{loopabove}=[in=120,out=60]
\node[Abelard,Buchi,minimum height=.7cm] (vA) at (0,-0.7) {$v_{A}$};
\node[Nature] (vN) at (1.5,0) {$v_{N}$};
\node[Eloise] (vL) at (0,.5) {$v_{L}$};
\node[Eloise] (vE) at (3,0) {$v_{E}$};
\node[Eloise,Buchi] (vW) at (4.5,0) {$v_{W}$};

\path[->] (vA) edge  [loopleft, loop] ();
\path[->] (vW) edge  [loopright, loop] ();
\path[->] (vL) edge  [loopleft, loop] ();
\path[->] (vA) edge (vN);
\path[->] (vN) edge (vL);
\path[->,bend right] (vN) edge (vE);
\path[->,bend right] (vE) edge (vN);
\path[->] (vE) edge (vW);
\end{tikzpicture}
\end{center}
\caption{Arena of Remark~\ref{remark:supNotMaxCardleaking}.}\label{fig:supNotMaxCardleaking}
\end{figure}
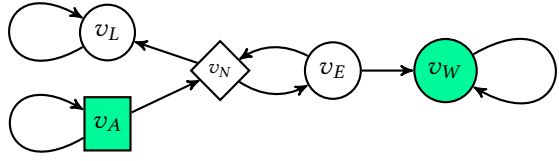

Consider the strategy $\phi_{\Ei}$ of \Eloise consisting in a partial play ending in $v_{E}$ (in $v_{L}$ and $v_{W}$ \Eloise has a single choice) to go to $v_{N}$ if there are less occurrences of $v_{E}$ than of $v_{A}$ in the partial play and to go to $v_{W}$ otherwise. Clearly for any strategy $\phi_{\Ai}$ of \Abelard,  {$\cardinal{\outcomes{v_0}{\phi_\Ei}{\phi_\Ai}\setminus\WC}$ is finite. However for any $k \geq 0$, \Abelard can ensure that there are  $k$ plays lost by \Eloise by looping $(k-1)$ times on the vertex $v_{A}$ before going to $v_{N}$.}
\end{remark}

{As cardinals are well-ordered,  \Eloise always has an “optimal” strategy for the leaking value criterion (\ie we can replace the $\inf$ by a $\min$ in Definition~\ref{def:cardleakGame}).}
\begin{proposition}
Let $\game=(\arena,\WC,v_0)$ be a game. There is a strategy $\phi_\Ei$ of \Eloise such that $\LVal{\game} = \CL{\phi_{\Ei}}$.
\end{proposition}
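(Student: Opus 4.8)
The plan is to exploit Proposition~\ref{prop:continuumHypothesis}, which confines every cardinality leaking $\CL{\phi_\Ei}$ to the set $\mathbb{N}\cup\{\aleph_0,2^{\aleph_0}\}$. Equipped with its natural ordering $0<1<2<\cdots<\aleph_0<2^{\aleph_0}$, this set is order-isomorphic to the ordinal $\omega+2$ and is therefore well-ordered. The whole content of the statement is that an infimum taken over values living in such a well-ordered set is automatically attained, so that the $\inf$ in Definition~\ref{def:cardleakGame} can be replaced by a $\min$.

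Concretely, I would first form the set $\SA=\{\CL{\phi_\Ei}\mid\phi_\Ei\text{ strategy of \Eloise}\}$. It is non-empty, since \Eloise always has at least one strategy (when $\VE=\emptyset$ the empty function is a strategy), and by Proposition~\ref{prop:continuumHypothesis} every element of $\SA$ belongs to $\mathbb{N}\cup\{\aleph_0,2^{\aleph_0}\}$; thus $\SA$ is a non-empty subset of a well-ordered set. I would then invoke the defining property of a well-order: every non-empty subset has a least element, so $\SA$ admits a minimum $\kappa$. Being a lower bound of $\SA$ that itself lies in $\SA$, this $\kappa$ coincides with $\inf\SA=\LVal{\game}$. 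Finally, since $\kappa\in\SA$, there is a witnessing strategy $\phi_\Ei$ of \Eloise with $\CL{\phi_\Ei}=\kappa=\LVal{\game}$, which is exactly the claim.

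There is no genuine obstacle in this argument. Contrary to Remark~\ref{remark:supNotMaxCardleaking}, where the $\sup$ over \Abelard's strategies need not be a $\max$ because \Abelard can push the number of lost plays arbitrarily high without reaching the bound, here the $\inf$ over \Eloise's strategies behaves well precisely because the candidate values form a well-ordered set rather than a dense one. The only points requiring a line of justification are the non-emptiness of $\SA$ and the membership of its elements in $\mathbb{N}\cup\{\aleph_0,2^{\aleph_0}\}$, both of which follow immediately from the existence of at least one \Eloise strategy and from Proposition~\ref{prop:continuumHypothesis}.
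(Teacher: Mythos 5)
Your argument is correct and is exactly the paper's reasoning: the paper justifies this proposition with the single remark that cardinals are well-ordered, so the infimum over the set of cardinality leakings (which lies in $\mathbb{N}\cup\{\aleph_0,2^{\aleph_0}\}$ by Proposition~\ref{prop:continuumHypothesis}) is attained. Your write-up just spells out the same well-ordering argument in more detail.
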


{
\begin{remark}
It is natural to wonder if, for a strategy $\phi_\Ei$ of \Eloise such that $\CL{\phi_{\Ei}}=\LVal{\game}$, there exists a strategy $\phi_\Ai$ of \Abelard which reaches $\LVal{\game}$, \ie such that $\cardinal{\outcomes{v_0}{\phi_\Ei}{\phi_\Ai}\setminus\WC}=\LVal{\game}$. By considering for instance the reachability game $\game$ depicted in Figure~\ref{fig:counter-example-abelard-optimal}, we will see that such a strategy for \Abelard may not necessarily exists. This game is played between \Abelard and \Nature and hence the empty strategy $\phi_{\Ei}$ for \Eloise is optimal, \ie $ \CL{\phi_{\Ei}}=\LVal{\game}$. In this game, a strategy for \Abelard is entirely characterised by the first index $n$ (if it exists) such that \Abelard moves from $A_n$ to the game $\game_n$ (that is such that exactly $n$ plays are losing for \Eloise in it). Hence for all $n \geq 1$, we denote by $\phi^n_{\Ai}$ the strategy of \Abelard consisting in moving from $A_i$ to $A_{i+1}$ for all $i<n$ and going to $\game_n$ on $A_n$ and by $\phi^\infty_{\Ai}$ the strategy in which \Abelard always moves from $A_i$ to $A_{i+1}$. As there are 
exactly $n$ losing plays for \Eloise in $\game_n$, we have $\cardinal{\outcomes{v_0}{\phi_\Ei}{\phi_\Ai^n}\setminus\WC}=n$ and $\cardinal{\outcomes{v_0}{\phi_\Ei}{\phi_\Ai^\infty}\setminus\WC}=1$. It follows that the $\LVal{\game}= \aleph_0$ which cannot be reached by any strategy of \Abelard.
\end{remark}}

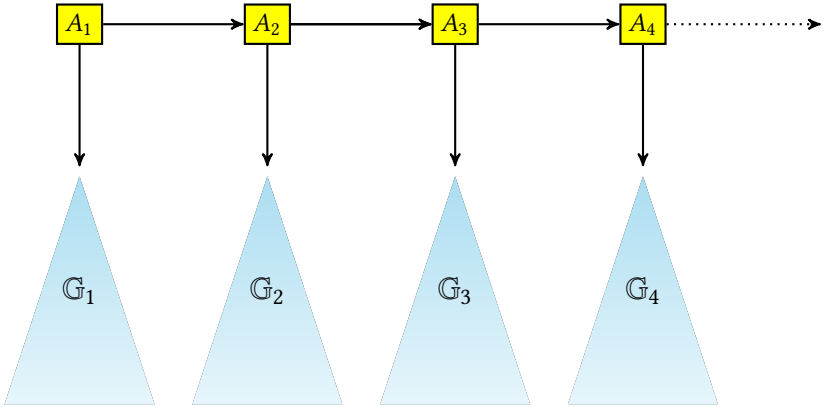
\begin{figure}[htb]
\begin{center}
\begin{tikzpicture}[>=stealth',thick,scale=1,transform shape]
\tikzstyle{Abelard}=[draw]
\tikzstyle{Eloise}=[draw,circle]
\tikzstyle{Nature}=[draw,diamond,scale = .65,font=\Large]
\tikzstyle{AbelardH}=[]
\tikzstyle{EloiseH}=[circle]
\tikzstyle{NatureH}=[diamond,scale = .65,font=\Large]
\tikzstyle{C0}=[fill=MediumSpringGreen]
\tikzstyle{C1}=[fill=Yellow]
\tikzstyle{C2}=[fill=Cyan]
\tikzset{every loop/.style={min distance=10mm,looseness=10}}
\tikzstyle{loopleft}=[in=150,out=210]
\tikzstyle{loopright}=[in=-30,out=30]
\node[Abelard,C1] (A1) at (1,0.5) {$A_1$};
\node[Abelard,C1] (A2) at (3.5,0.5) {$A_2$};
\node[Abelard,C1] (A3) at (6,0.5) {$A_3$};
\node[Abelard,C1] (A4) at (8.5,0.5) {$A_4$};
\node[AbelardH] (A5) at (11,0.5) {};

\node[] (N1) at (1,-1.5) {};
\node[] (N2) at (3.5,-1.5) {};
\node[] (N3) at (6,-1.5) {};
\node[] (N4) at (8.5,-1.5) {};

\path[->] (A1) edge (N1);\path[->] (A1) edge (A2);
\path[->] (A2) edge (N2);\path[->] (A2) edge (A3);
\path[->] (A3) edge (N3);\path[->] (A2) edge (A3);
\path[->] (A4) edge (N4);\path[->] (A3) edge (A4);
\path[->,dotted] (A4) edge (A5);

\fill[bottom color=SkyBlue!100!black!20,
top color=SkyBlue!100!black!70] (1,-1.5) --  (0,-4.5) -- (2,-4.5) -- cycle;
\node[font=\Large] (G1) at (1,-3) {$\game_1$};
\fill[bottom color=SkyBlue!100!black!20,
top color=SkyBlue!100!black!70] (3.5,-1.5) --  (2.5,-4.5) -- (4.5,-4.5) -- cycle;
\node[font=\Large] (G1) at (3.5,-3) {$\game_2$};
\fill[bottom color=SkyBlue!100!black!20,
top color=SkyBlue!100!black!70] (6,-1.5) --  (5,-4.5) -- (7,-4.5) -- cycle;
\node[font=\Large] (G1) at (6,-3) {$\game_3$};
\fill[bottom color=SkyBlue!100!black!20,
top color=SkyBlue!100!black!70] (8.5,-1.5) --  (7.5,-4.5) -- (9.5,-4.5) -- cycle;
\node[font=\Large] (G1) at (8.5,-3) {$\game_4$};
\end{tikzpicture}
\end{center}
\caption{Example of a reachability game played between \Abelard and \Nature. For all $n \geq 0$, the game $\game_n$ is only played by \Nature and is such that exactly $n$ plays are losing for \Eloise.}\label{fig:counter-example-abelard-optimal}
\end{figure}

\subsection{The Topological Setting}\label{subsection:topological}

{A notion of topological “bigness” and “smallness” is given by  \emph{large} and \emph{meager} sets respectively (see \cite{Graedel08,VolzerV12} for a survey of the notion).  From the modelisation point of view, the intuition is that meager sets (the complements of  large sets) are somehow negligible. In \cite{VolzerV12}, the authors give weight to this idea by showing that, for regular trees (\ie those trees obtained by unfolding finite graphs), the set of branches satisfying an $\omega$-regular condition is large if and only if it has probability 1 (in the sense of Section~\ref{subsection:proba}). However they also show that in general, even for the Büchi condition and when the tree is the unfolding of a pushdown graph, this is no longer true (see \cite[p. 27]{VolzerV12}).}

Let $t$ be a $D$-tree for some set $D$ of directions. Then its set of branches can be seen as a topological space by taking as basic open sets the {set} of cones. A set of branches $B\subseteq D^\omega$ is \defin{nowhere dense} if for all node $u\in t$, there exists  another node $v\in t$ such that $u\prefix v$ and such that $v$ does not belong to any branch in $B$.
A set of branches is \defin{meager} if it is the countable union of nowhere dense sets. Finally it is \defin{large} if it is the complement of a meager set.

{
A natural topological criterion to consider that a strategy  $\phi_\Ei$ for \Eloise is good against a strategy $\phi_\Ai$ of \Abelard is that the set of plays lost by \Eloise is meager in the tree $\treeGame{v_0}{\phi_\Ei}{\phi_\Ai}$.}

\begin{definition}[Topologically-Good Strategies]\label{def:topoGood}
Let $\game=(\arena,\WC,v_0)$ be a game and let $\phi_\Ei$ be a strategy of \Eloise. We say that $\phi_\Ei$ is \defin{topologically-good} if and only if for any strategy $\phi_\Ai$ of \Abelard the set  $\outcomes{v_0}{\phi_\Ei}{\phi_\Ai}\setminus \WC$ of losing plays for \Eloise is meager in the tree $\treeGame{v_0}{\phi_\Ei}{\phi_\Ai}$; or equivalently
 the set  $\outcomes{v_0}{\phi_\Ei}{\phi_\Ai}\cap\WC$ of plays won by \Eloise is large.
\end{definition}

\begin{example}
\label{example:game-topo}
 Consider again the stochastic game from Example~\ref{example:game-stoc} (depicted in Figure~\ref{fig:example-game-stochastic}) together with the positional strategy $\phi_\Ei$ of \Eloise (also presented in Example~\ref{example:game-stoc}). Now forget about the stochastic view of \Nature and think of it as being simply non-deterministic. We claim that the strategy $\phi_\Ei$ is topologically-good. 
 
 Indeed, let $\phi_\Ai$ be an arbitrary strategy of \Abelard: as already remarked in Example~\ref{example:game-stoc} the set of plays in $\outcomes{A_0}{\phi_\Ei}{\phi_\Ai}\setminus\WC$ is included in the countable union over all $i,k\geq 0$ of the plays $\Lambda_{i,k}$ where a play  belongs to $\Lambda_{i,k}$ if it gets trap forever in node $N_i$ from round $k$, \ie $\Lambda_{i,k}=V^{k-1}N_i^\omega$. Now we remark that $\outcomes{A_0}{\phi_\Ei}{\phi_\Ai}\cap \Lambda_{i,k}$ is nowhere dense in the tree $\treeGame{v_0}{\phi_\Ei}{\phi_\Ai}$, as any partial play $\lambda$ can be extended to another partial play $\lambda'$ so that any extension of the latter as an infinite play is not trap forever in $N_i$ from round $k$ (it can be trapped forever, but later), \ie for any node $u_\lambda$ in $\treeGame{v_0}{\phi_\Ei}{\phi_\Ai}$ there exists another node $u_{\lambda'}$ such that $u_{\lambda'}\prefix u_\lambda$ and $u_{\lambda'}$ does not belong to any branch in $\Lambda_{i,k}$. Hence, it means that the set $\outcomes{A_0}{\phi_\Ei}{\phi_\Ai}\setminus\WC$ is a countable union of nowhere dense sets, equivalently it is meager. Therefore, we conclude that $\phi_\Ei$ is topologically-good. 
\end{example}

We now recall a useful notion, Banach-Mazur games, to reason on meager sets. Banach-Mazur theorem gives a game characterisation of large and meager sets of branches (see for instance \cite{Oxtoby71,Kechris,Graedel08}). The \defin{Banach-Mazur game} on a tree $t$,  is a two-player game where \Abelard and \Eloise choose alternatively a node in the tree, forming a branch: \Abelard chooses first a node and then \Eloise chooses a descendant of the previous node and \Abelard chooses a descendant of the previous node and so on forever. In this game it is \emph{always \Abelard that starts} a play. 

Formally a \emph{play} is an infinite sequence $u_1,u_2,\ldots$ of words in $D^+$ such that for all $i$ one has $u_1 u_2\cdots u_i\in t$, and the branch associated with this play is $u_1u_2\cdots$. A \emph{strategy} for \Eloise is a mapping $\phi : (D^+)^+ \rightarrow D^+$ that takes as input a finite sequence of words, and outputs a word. A play $u_1,u_2,\ldots$ \emph{respects} $\phi$ if for all $i\geq 1$, $u_{2i}= \phi(u_1,\ldots,u_{2i-1})$. We define $Outcomes(\phi)$ as the set of plays that respect $\phi$ and $\mathcal{B}(\phi)$ as the set of branches associated with the plays in $Outcomes(\phi)$. 

The Banach-Mazur theorem {(see \footnote{In \cite{Graedel08} the players of the Banach-Mazur game are called $0$ and $1$ and Player $0$ corresponds to \Abelard while player $1$ corresponds to \Eloise. Hence, when using a statement from \cite{Graedel08} for our setting one has to keep this in mind as well as the fact that one must replace the winning condition by its complement (hence, replacing “meager” by “large”).}
 \eg \cite[Theorem~4]{Graedel08})} states that a set of branches $B$ is large if and only if there exists a strategy $\phi$ for \Eloise such that $\mathcal{B}(\phi)\subseteq B$.  {Hence, if one thinks of $B$ as a winning condition for \Eloise (\ie she wins a play if and only if it belongs to $B$), it means that those sets $B$ for which she has a winning strategy are exactly the large ones.}

Furthermore a folk result {(see \eg \cite[Theorem~9]{Graedel08})} about Banach-Mazur games states that {when $B$ is Borel}\footnote{{This statement holds as soon as the Banach-Mazur games are determined and hence, in particular for Borel sets.}} one can look only at “simple” strategies, defined as follows. A \defin{decomposition-invariant strategy} is a mapping $f:t \rightarrow D^+$ and we associate with $f$ the strategy $\phi_f$ defined by $\phi_f(u_1,\ldots,u_{k})=f(u_1\cdots u_k)$. Finally, we define $Outcomes(f) = Outcomes(\phi_f)$ and $\mathcal{B}(f)=\mathcal{B}(\phi_f)$. The folk result states that for any {Borel} set of branches $B$,  there exists a strategy $\phi$ such that $Outcomes(\phi)\subseteq B$  if and only if there exists a decomposition-invariant strategy $f$ such that $\mathcal{B}(f)\subseteq B$.

\begin{example}
Consider the game in Example~\ref{example:game-topo} (depicted in Figure~\ref{fig:example-game-stochastic}), fix again the same strategy $\phi_\Ei$ for \Eloise and define as a strategy for \Abelard a strategy where when the token is in some vertex in $V_\Ai$ \Abelard	moves it (by successive moves) to $A_k$ where $k$ denotes the number of visits to vertex $A_0$ from the beginning of the play and from $A_k$ moves it down to $N_k$. 

As $\phi_\Ei$ is topologically-winning, it means that the set $B=\outcomes{v_0}{\phi_\Ei}{\phi_\Ai}\setminus\WC$ is large in $\treeGame{v_0}{\phi_\Ei}{\phi_\Ai}$. 
We illustrate the concept of Banach-Mazur game by defining a winning decomposition-invariant strategy $f$ for \Eloise in the Banach-Mazur game on $\treeGame{v_0}{\phi_\Ei}{\phi_\Ai}$. For that it suffices to let $f(\lambda)=\lambda'$ where $\lambda'$ is some arbitrary partial play extending $\lambda$ by a path ending in node $E_0$ (such a path always exists). Then, it is straightforward to verify that $\mathcal{B}(f)\subseteq B$.
\end{example}

\subsection{The Tree-Language Setting}

We now propose a last setting, that in some cases permits to capture the three previously defined ones. This setting only makes sense  when the arena $\arena=(G,\VE,\VA,\VN)$ comes with a mapping $\col: V\rightarrow\colors$ where  $V=\VE\uplus \VA\uplus \VN$ denotes the set of vertices in the arena and $\colors$ is a \emph{finite} set. Fix a subset $\mathcal{L}$ of $\colors$-labeled $V$-trees. 

For a given initial vertex $v_0$ and a pair $(\phi_\Ei,\phi_\Ai)$ of strategies for \Eloise and \Abelard, we can map the $V$-tree $\treeGame{v_0}{\phi_\Ei}{\phi_\Ai}$ to a $\colors$-labelled $V$-tree where each node $v_0v_1\cdots v_k\in \treeGame{v_0}{\phi_\Ei}{\phi_\Ai}$ is labelled by $\col(v_k)$. In the sequel we overload notation $\treeGame{v_0}{\phi_\Ei}{\phi_\Ai}$ to designate this tree.

Now we say that a strategy $\phi_\Ei$ is \defin{$\mathcal{L}$-good} if and only if for every strategy $\phi_\Ai$ of \Abelard the $\colors$-labelled tree $\treeGame{v_0}{\phi_\Ei}{\phi_\Ai}$ belongs to $\mathcal{L}$.

Let $K$ be a subset of $\colors^\omega$ and let $\WC_K$ be the winning condition defined by letting $$\WC_K=\{v_0v_1v_2\cdots \mid \col(v_1)\col(v_2)\col(v_3)\in K\}$$
Then the following trivially holds.

\begin{lemma}\label{lemma:settingsVSLGood}
	Let $\colors$ be a finite set and let $K\subseteq C^\omega$. Let $\arena=(G,\VE,\VA,\VN)$ be an arena with vertices $V$ and let $\col:V\rightarrow\colors$. Let $v_0\in V$ be some initial vertex and let $\game$ be the game $\game=(\arena,v_0,\WC_K)$. Then the following holds.
	\begin{enumerate}
		\item \Eloise almost surely wins $\game$ if and only if she has an $\mathcal{L}_{\mathrm{Stoc}}$-good strategy when playing in arena $\arena$ starting from $v_0$ where $\mathcal{L}_{\mathrm{Stoc}}$ is the set of $\colors$-labelled $V$-trees such that almost all branches\footnote{More formally, the set of branches that are labelled by a sequence in $K$ has measure $1$ for the Lebesgue measure obtained from the Carathéodory extension theorem when defining the measure of a cone, \ie a set of branches sharing a common finite prefix $v_0\cdots v_k$ as the product $\prod_{0\leq i< k\mid v_i\in \VN}d_{v_i}(v_{i+1})$ where $d_{v_i}$ denotes the probability distribution over the neighbours of a vertex $v_i\in \VN$.} are labelled by a sequence in $K$.
		\item The leaking value of $\game$ is countable (\resp smaller than some fixed $k$) if and only if \Eloise has an $\mathcal{L}_{\mathrm{Card}}$-good strategy when playing in arena $\arena$ starting from $v_0$ where $\mathcal{L}_{\mathrm{Card}}$ is the set of $\colors$-labelled $V$-trees such that all branches but countably many (\resp but $k$) are labelled by a sequence in $K$.
		\item \Eloise has a topologically-good strategy in $\game$ if and only she has an $\mathcal{L}_{\mathrm{Topo}}$-good strategy when playing in arena $\arena$ starting from $v_0$ where $\mathcal{L}_{\mathrm{Topo}}$ is the set of $\colors$-labelled $V$-trees such that the subset of branches labelled by a sequence in $K$ is large.
	\end{enumerate}
\end{lemma}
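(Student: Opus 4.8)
The plan is to observe that all three equivalences reduce to a single purely definitional correspondence, after which each item follows by unwinding the relevant notion of ``smallness''. The common core is as follows. Fix any pair $(\phi_\Ei,\phi_\Ai)$ of strategies. By construction the branches of the tree $\treeGame{v_0}{\phi_\Ei}{\phi_\Ai}$ are exactly the plays in $\outcomes{v_0}{\phi_\Ei}{\phi_\Ai}$, and under the relabelling of each node $v_0\cdots v_k$ by $\col(v_k)$ the label of the branch corresponding to a play $v_0v_1v_2\cdots$ is (up to the harmless treatment of the root) the sequence $\col(v_1)\col(v_2)\col(v_3)\cdots$ occurring in the definition of $\WC_K$. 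Hence a play lies in $\WC_K$ if and only if the label of its branch lies in $K$; consequently the set $\outcomes{v_0}{\phi_\Ei}{\phi_\Ai}\cap\WC_K$ of plays won by \Eloise coincides with the set of branches of $\treeGame{v_0}{\phi_\Ei}{\phi_\Ai}$ labelled by a sequence in $K$, while the losing plays $\outcomes{v_0}{\phi_\Ei}{\phi_\Ai}\setminus\WC_K$ coincide with the branches not labelled in $K$. I would state this identification once and then apply it three times.

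For item~(1), I would note that the probability measure described in the footnote defining $\mathcal{L}_{\mathrm{Stoc}}$ is, by construction, exactly the measure $\mu_{v_0}^{\phi_\Ei,\phi_\Ai}$ of Section~\ref{subsection:proba}. Thus $\mu_{v_0}^{\phi_\Ei,\phi_\Ai}(\WC_K)=1$ is literally the assertion that almost all branches are labelled in $K$, i.e.\ that $\treeGame{v_0}{\phi_\Ei}{\phi_\Ai}\in\mathcal{L}_{\mathrm{Stoc}}$. Quantifying over all $\phi_\Ai$, almost-sure winning of $\phi_\Ei$ (that $\Value{\phi_\Ei}{\game}=1$) is exactly $\mathcal{L}_{\mathrm{Stoc}}$-goodness of $\phi_\Ei$, and the two existential statements over $\phi_\Ei$ then coincide. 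For item~(3) the argument is identical in spirit: by Definition~\ref{def:topoGood} the strategy $\phi_\Ei$ is topologically-good iff for every $\phi_\Ai$ the set $\outcomes{v_0}{\phi_\Ei}{\phi_\Ai}\cap\WC_K$ is large in $\treeGame{v_0}{\phi_\Ei}{\phi_\Ai}$, and by the core correspondence this set is precisely the set of branches labelled in $K$; this is exactly membership of the tree in $\mathcal{L}_{\mathrm{Topo}}$, so the two existential statements again coincide.

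Item~(2) is the only one needing a short extra remark, because the leaking value involves both a supremum over $\phi_\Ai$ and an infimum over $\phi_\Ei$. Fixing $\phi_\Ei$, since the only cardinals that arise are finite, $\aleph_0$ or $2^{\aleph_0}$ (Proposition~\ref{prop:continuumHypothesis}), the supremum $\CL{\phi_\Ei}=\sup_{\phi_\Ai}\cardinal{\outcomes{v_0}{\phi_\Ei}{\phi_\Ai}\setminus\WC_K}$ is $\leq\aleph_0$ (resp.\ $\leq k$) if and only if every term is, i.e.\ iff $\phi_\Ei$ is $\mathcal{L}_{\mathrm{Card}}$-good. It then remains to pass from $\CL{\phi_\Ei}$ to $\LVal{\game}=\inf_{\phi_\Ei}\CL{\phi_\Ei}$: here I would invoke that cardinals are well-ordered, so this infimum is attained (as already used in establishing that some \Eloise strategy realises $\LVal{\game}$), whence $\LVal{\game}\leq\aleph_0$ (resp.\ $\leq k$) iff some $\phi_\Ei$ satisfies $\CL{\phi_\Ei}\leq\aleph_0$ (resp.\ $\leq k$), iff \Eloise has an $\mathcal{L}_{\mathrm{Card}}$-good strategy.

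I do not expect a genuine obstacle, which is why the lemma is labelled ``trivial''. The only points requiring care are bookkeeping ones: identifying the ad hoc measure in the footnote of item~(1) with $\mu_{v_0}^{\phi_\Ei,\phi_\Ai}$, and the two-step passage in item~(2) from the per-strategy quantity $\CL{\phi_\Ei}$ (a supremum over $\phi_\Ai$) to the game value $\LVal{\game}$ (an infimum over $\phi_\Ei$), where one must check that both the supremum and the infimum behave well on the three-element scale of admissible cardinals.
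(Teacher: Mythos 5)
Your proof is correct and is essentially the argument the paper intends: the paper offers no proof at all (it states that the lemma ``trivially holds''), and your write-up is precisely the definitional unwinding --- identifying branches of $\treeGame{v_0}{\phi_\Ei}{\phi_\Ai}$ with plays, branch labels with colour sequences, and then matching each notion of smallness --- that the authors leave implicit. The only points you flag as needing care (the root offset, the footnote measure being $\mu_{v_0}^{\phi_\Ei,\phi_\Ai}$, and the attainment of the infimum via well-ordering of cardinals, which the paper itself notes just after Definition~\ref{def:cardleakGame}) are handled correctly.
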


Let $\colors$ be a finite set. Then a $\colors$-labelled tree can be seen as a relational structure (see \cite{libkin_04_elements} for basic concepts on relational structures) whose universe is the set of nodes of the tree and whose relations consist of a unary predicate for every element $c$ in $\colors$ (that holds in every node labelled by $c$) and a binary predicate for the parent/son relation. 

Let $G=(V,E)$ be a graph, $\colors$ be a finite set, $\col: V\rightarrow\colors$ be a mapping, $\arena=(G,\VE,\VA,\VN)$ be an arena and $v_0$ be an initial vertex. We define the \defin{unfolding} of $\arena$ from $v_0$ as the $\colors$-labelled $V$-tree whose set of nodes is the set $$\{v_1\cdots v_k\mid v_0v_1\cdots v_k \text{ is a partial play in arena }\arena\}$$ and where the root $\epsilon$ is labelled by $\col(v_0)$ and any other node $v_1\cdots v_k$ is labelled by $\col(v_k)$. We see it as a relational structure as explained above with three extra unary predicate (on for each player) $p_\Ei$, $p_\Ai$ and $p_\Ni$ such that $p_\Ei$ (\resp $p_\Ai$, \resp $p_\Ni$) holds in a node $u$ if and only if the last vertex of $v_0\cdot u$ belongs to $\VE$ (\resp $\VA$, \resp $\VN$).

One can wonder whether existence of $\mathcal{L}$-good strategies can be decided for special classes of languages $\mathcal{L}$. The most natural one are those definable in monadic second order logic (MSO). As this is the only place in this paper where we make use of logic, we refer the reader to \cite{Thomas97} for formal definitions and classical results regarding MSO logic. The following result was remarked by Pawe{\l} Parys \cite{ParysPrivate2016} and it permits to derive decidability for several classes of arenas (see Corollary~\ref{cor:Parys-parity} below).

\begin{theorem}[\cite{ParysPrivate2016}]\label{theo:parys}
	Let $G=(V,E)$ be a graph, let $\colors$ be a finite set, let $\col: V\rightarrow\colors$ be a mapping, let $\arena=(G,\VE,\VA,\VN)$ be an arena and let $v_0$ be an initial vertex. Let $\mathcal{L}$ be an MSO-definable set of $\colors$-labelled $V$-trees. Then there exists an MSO formula $\Phi_{\mathcal{L}\mathrm{-good}}$ such that $\Phi_{\mathcal{L}\mathrm{-good}}$ holds on the unfolding of $\arena$ from $v_0$ if and only if \Eloise has an $\mathcal{L}$-good strategy from $v_0$.
\end{theorem}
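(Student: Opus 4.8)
The plan is to write the existence of an $\mathcal{L}$-good strategy for \Eloise directly as an MSO sentence over the unfolding $t$ of $\arena$ from $v_0$. The key feature I would exploit is that in an unfolding every node \emph{records the whole history} of the play: consequently an arbitrary, history-dependent strategy can be coded by a \emph{set of nodes}, which is exactly what monadic second-order quantification can range over. The sentence I am aiming for will have the shape $\exists S\,\forall T\,(\ldots)$, mirroring ``there is an \Eloise strategy such that for every \Abelard strategy \ldots''.

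First I would code \Eloise's strategies. A strategy $\phi_\Ei$ is represented by the set $S$ of nodes reachable when \Eloise obeys $\phi_\Ei$ while \Abelard and \Nature move freely; such sets are exactly those satisfying: $S$ contains the root and is prefix-closed; at every node of $S$ carrying $p_\Ei$ exactly one child lies in $S$; and at every node carrying $p_\Ai$ or $p_\Ni$ all children lie in $S$. Each clause (``exactly one child'' is even first-order) is MSO-definable, giving a formula $\mathrm{Strat}_\Ei(S)$; conversely every $S$ satisfying it comes from some $\phi_\Ei$, the choices on unreached $p_\Ei$-nodes being irrelevant. Next, relative to such an $S$, I would code the trees $\treeGame{v_0}{\phi_\Ei}{\phi_\Ai}$ produced when \Abelard also fixes a strategy: these are precisely the subsets $T\subseteq S$ that contain the root, are prefix-closed, keep the unique $S$-child at each $p_\Ei$-node of $T$, keep exactly one child at each $p_\Ai$-node of $T$, and keep all children at each $p_\Ni$-node of $T$. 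This yields an MSO formula $\mathrm{Resp}(S,T)$, and the sets $T$ satisfying it range (up to irrelevant choices on unreached nodes) over the whole family $\{\treeGame{v_0}{\phi_\Ei}{\phi_\Ai}\mid \phi_\Ai\text{ strategy of \Abelard}\}$; the no-dead-end assumption guarantees these codings are always realisable.

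The heart of the argument is to assert, inside MSO over $t$, that the coded tree $T$ lies in $\mathcal{L}$. Here I would relativise: from the MSO formula $\Psi_\mathcal{L}$ defining $\mathcal{L}$ over $\colors$-labelled trees (colour predicates plus the son relation), I build $\Psi_\mathcal{L}^{T}$ by guarding every first-order quantifier to range over elements of $T$ and every set quantifier to range over subsets of $T$, while interpreting the colour predicates and the son relation by their restrictions to $T$. Since $T$ is prefix-closed, the substructure of $t$ induced on $T$ is isomorphic, as a $\colors$-labelled tree, to $\treeGame{v_0}{\phi_\Ei}{\phi_\Ai}$ (same nodes, same colours, same son relation, same root), so $\Psi_\mathcal{L}^{T}$ holds iff $\treeGame{v_0}{\phi_\Ei}{\phi_\Ai}\in\mathcal{L}$; note the player predicates $p_\Ei,p_\Ai,p_\Ni$ occur only in $\mathrm{Strat}_\Ei$ and $\mathrm{Resp}$, so there is no signature clash with $\Psi_\mathcal{L}$. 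Assembling the pieces, the desired formula is
$$\Phi_{\mathcal{L}\mathrm{-good}} \;:=\; \exists S\,\bigl(\mathrm{Strat}_\Ei(S)\wedge \forall T\,(\mathrm{Resp}(S,T)\Rightarrow \Psi_\mathcal{L}^{T})\bigr),$$
which holds on the unfolding iff \Eloise has an $\mathcal{L}$-good strategy from $v_0$.

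I expect the main obstacle to be precisely this relativisation step: one must check carefully that the son relation and colours restricted to the coded set $T$ genuinely reconstruct the game tree as a stand-alone $\colors$-labelled $V$-tree, so that $\mathcal{L}$ is tested on the right object, and that the two codings faithfully enumerate \emph{all} \Eloise strategies and \emph{all} \Abelard responses rather than spurious subsets. This is exactly where the history-recording property of the unfolding and the absence of dead-ends are used, and where the $\exists S\,\forall T$ alternation is matched to the intended ``$\exists\phi_\Ei\,\forall\phi_\Ai$'' quantification.
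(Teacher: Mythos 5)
Your proposal is correct and follows essentially the same route as the paper's proof: existentially quantify a set coding \Eloise's strategy, universally quantify a set coding \Abelard's response together with the induced game tree, and relativise the MSO formula for $\mathcal{L}$ to that set. The only differences are presentational (you encode the strategy by its reachable set and merge \Abelard's strategy set with the reachable set $Z$ into a single $T$, where the paper keeps $X$, $Y$, $Z$ separate), and these do not affect correctness.
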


\begin{proof}
	Let $\Phi_{\mathcal{L}}$ be an MSO formula defining $\mathcal{L}$ (\ie it holds in a tree if and only if the tree belongs to $\mathcal{L}$). Call $T_{\mathcal{G}}$ the unfolding of the arena $\mathcal{G}$ from $v_0$. The formula $\Phi_{\mathcal{L}\mathrm{-good}}$ existentially quantifies a strategy for \Eloise, then universally quantifies a strategy of \Abelard and then relativise the formula $\Phi_{\mathcal{L}}$ to the subtree induced by the respective strategies. 
	
More formally, in order to quantify over a strategy —~say for \Eloise~— one quantifies a set of nodes $X$ such that for every node $u$ owned by \Eloise (\ie such that $p_\Ei(u)$ holds) exactly one son of $u$ belongs to $X$ (it corresponds to the image by the strategy of the partial play $v_0\cdot u$ associated with the node $u$): call $\mathrm{Valid_\Ei}(X)$ an MSO-formula checking that a set of nodes $X$ satisfies the previous requirement. Symmetrically one defines a formula $\mathrm{Valid_\Ai}(Y)$ to check that a set $Y$ correctly encodes a strategy of \Abelard. 
	
	Next, we define $\mathrm{Reach}(X,Y,Z)$ as a formula that holds if and only if $Z$ is the set of vertices reachable from the root by following the strategies encoded by $X$ and $Y$. The formula $\mathrm{Reach}(X,Y,Z)$ simply states that $Z$ is the smallest set that contains the root, and such that for every node $u$ in $Z$ if it satisfies $p_\Ei$ (\resp $p_\Ai$) then there is exactly one son of $u$ in $Z$ and it also belongs to $X$ (\resp $Y$), and if it satisfies $p_\Ni$ then all its successors belongs to $Z$. Hence, $Z$ consists of all nodes in $\treeGame{v_0}{\phi_X}{\phi_Y}$ (where $\phi_X$ and $\phi_Y$ are the strategies associated with $X$ and $Y$ respectively).
	
	Now, let $\Phi^{\mathrm{rel}}_{\mathcal{L}}(Z)$ be the formula obtained from $\Phi_{\mathcal{L}}$ by relativising to $Z$, \ie by guarding every quantification to nodes in $Z$, \ie every $\exists x\, \Psi(x)$ in $\Phi_{\mathcal{L}}$ is replaced by $\exists x\, (x\in Z\wedge \Psi(x))$ and every $\exists X\, \Psi(X)$ in $\Phi_{\mathcal{L}}$ is replaced by $\exists X\, (X\subseteq Z\wedge \Psi(X))$.
	
	Then the formula $\Phi_{\mathcal{L}\mathrm{-good}}$ is simply defined as 
	$$
		\Phi_{\mathcal{L}\mathrm{-good}} = \exists X 
			[\mathrm{Valid_\Ei}(X) \wedge \forall Y \forall Z\ (\mathrm{Valid_\Ai}(Y)\wedge \mathrm{Reach}(X,Y,Z)) \Rightarrow \Phi^{\mathrm{rel}}_{\mathcal{L}}(Z)
			]
	$$
\end{proof}

Thanks to Courcelle-Walukiewicz theorem~\cite{CourcelleW98} stating that every MSO-definable property on the unfolding of a structure is an MSO-definable property on the structure itself, we can lift Theorem~\ref{theo:parys}, where we see an arena as a relational structure with predicates for the edge relations, the image of the $\col$ function, three predicates reflecting whether a vertex belongs to $\VE$, $\VA$ or $\VN$ and a last predicate to distinguish the initial vertex $v_0$.

\begin{corollary}\label{cor:Parys}
		Let $G=(V,E)$ be a graph, let $\colors$ be a finite set, let $\col: V\rightarrow\colors$ be a mapping, let $\arena=(G,\VE,\VA,\VN)$ be an arena and let $v_0$ be an initial vertex. Let $\mathcal{L}$ be an MSO-definable set of $\colors$-labelled $V$-trees. Then there exists an MSO formula $\Phi_{\mathcal{L}\mathrm{-good}}$ such that $\Phi_{\mathcal{L}\mathrm{-good}}$ holds on $\arena$ if and only if \Eloise has an $\mathcal{L}$-good strategy from $v_0$.
\end{corollary}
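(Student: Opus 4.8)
The plan is to obtain Corollary~\ref{cor:Parys} as an essentially immediate composition of Theorem~\ref{theo:parys} with the Courcelle--Walukiewicz theorem~\cite{CourcelleW98}. By Theorem~\ref{theo:parys} there is an MSO formula, which I will call $\Psi$ to avoid clashing with the name of the formula we are constructing, interpreted over $\colors$-labelled $V$-trees viewed as relational structures (with the colour predicates, the parent/son relation, and the three player predicates $p_\Ei,p_\Ai,p_\Ni$), such that $\Psi$ holds on the unfolding of $\arena$ from $v_0$ if and only if \Eloise has an $\mathcal{L}$-good strategy from $v_0$. It therefore suffices to transfer $\Psi$ from the unfolding down to $\arena$ itself, and this is exactly what Courcelle--Walukiewicz provides.

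First I would fix the presentation of $\arena$ as a relational structure, as announced in the excerpt: its universe is $V$, equipped with a binary predicate for the edge relation $E$, one unary predicate for each fibre $\col^{-1}(c)$ with $c\in\colors$, the three player predicates for $\VE$, $\VA$, $\VN$, and one further unary predicate singling out the initial vertex $v_0$. The point to verify is that, under this presentation, the tree-unfolding of $\arena$ from $v_0$ as a relational structure coincides with the $\colors$-labelled $V$-tree on which $\Psi$ is interpreted in Theorem~\ref{theo:parys}: the unfolding turns $E$ into the parent/son relation, labels each node by the colour of the last vertex of the partial play it encodes, and—through $p_\Ei,p_\Ai,p_\Ni$—records which player owns that last vertex, so that the signatures and their interpretations match on both sides. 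One also checks that the label of the root and the handling of $v_0$ agree with the definition of the unfolding given just above the corollary; this is routine.

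Then I would invoke the Courcelle--Walukiewicz theorem, which states that every MSO-definable property of the unfolding of a structure is MSO-definable on the structure itself: applied to $\Psi$, it yields an MSO formula $\Phi_{\mathcal{L}\mathrm{-good}}$ over $\arena$ such that $\Phi_{\mathcal{L}\mathrm{-good}}$ holds on $\arena$ if and only if $\Psi$ holds on the unfolding of $\arena$ from $v_0$. Chaining this with the equivalence supplied by Theorem~\ref{theo:parys} gives that $\Phi_{\mathcal{L}\mathrm{-good}}$ holds on $\arena$ if and only if \Eloise has an $\mathcal{L}$-good strategy from $v_0$, which is the claim.

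The only delicate point is the bookkeeping of the preceding paragraph, rather than any real obstacle: one must make sure that the notion of unfolding used in the statement of Theorem~\ref{theo:parys} is literally the MSO-unfolding to which Courcelle--Walukiewicz applies, with the same signature (colour predicates, player predicates, and parent/son relation) and the same semantics, so that the single formula $\Psi$ can be read across both statements without reindexing predicates. Once this matching is in place, no further argument is needed and the corollary follows at once.
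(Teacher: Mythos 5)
Your proposal is correct and follows exactly the paper's route: the corollary is obtained by viewing the arena as a relational structure (with predicates for edges, colours, the three players, and the initial vertex) and applying the Courcelle--Walukiewicz theorem to transfer the formula of Theorem~\ref{theo:parys} from the unfolding down to the arena itself. The bookkeeping you flag about matching signatures is the same implicit step the paper leaves to the reader.
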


One can now wonder in which cases Corollary~\ref{cor:Parys} can be combined with Lemma~\ref{lemma:settingsVSLGood}. A natural candidate for the criterion $K\subseteq \colors^\omega$ on branches is the parity condition\footnote{{We could consider more general $\omega$-regular conditions but they reduce to a parity condition via product of a game with a finite graph.}} (in fact it is the only reasonable option for $\mathcal{L}$ sets from Lemma~\ref{lemma:settingsVSLGood} to be MSO-definable). We also restrict here to arenas with finite out-degree. 

First note that it is known from \cite[Theorem~21]{CHS14a} that the language of $\{a,b\}$-binary trees such that almost every branch contains a node label by $a$ is not MSO-definable\footnote{{We refer the reader to \cite[Theorem~21]{CHS14a} for a complete proof of this statement but here are the key arguments. Call $\mathcal{L}_a$ the language of $\{a,b\}$-binary trees such that almost every branch contains a node label by $a$. One first argues that for any regular tree $t$, if there is no cone in $t$ whose branches only contain the letter $b$, then this tree $t$ belongs to $\mathcal{L}_a$. Then, one let $\mathcal{L}$ be the set of trees that does not belong to $\mathcal{L}_a$ but does not have a cone whose branches only contain $b$. Hence, $\mathcal{L}$ does not contain a regular tree and, assuming by contradiction that $\mathcal{L}_a$ is MSO-definable, basic properties of MSO-definable languages imply that $\mathcal{L}$ is empty. Finally, one build a (non-regular) tree $t_0$ in $\mathcal{L}$ which leads a contradiction with the MSO-definability of $\mathcal{L}_a$: the rough idea to construct $t_0$ is to pick, in every cone, a node that gets labelled by $a$ and to chose that node deep enough to ensure that the set of branches containing a node labelled by $a$ has measure strictly smaller than $1$.}}. Hence, it follows that on can design a reachability game on a finite graph such that the associated language $\mathcal{L}_{\mathrm{Stoc}}$ by Lemma~\ref{lemma:settingsVSLGood} point (1) is not MSO-definable and therefore for that game Corollary~\ref{cor:Parys} is useless to decide existence of an almost-surely winning strategies. For the cardinality and the topological settings the situation is much better.

\begin{corollary}\label{cor:Parys-parity}
		Let $\game=(\arena,v_0,\WC)$ be a parity game played on an arena of finite out-degree. Then for each of the following three problems one can construct a formula $\Phi_\WC$ so that the problem reduces to decide wether formula $\Phi_\WC$ holds on $\arena$ (\resp on the unfolding of $\arena$).
	\begin{enumerate}
		\item Decide whether the leaking value of $\game$ is countable.
		\item Decide whether the leaking value of $\game$ is smaller than some fixed $k$.
		\item Decide whether \Eloise has a topologically-good strategy in $\game$.
	\end{enumerate}
\end{corollary}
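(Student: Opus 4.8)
The plan is to reduce each of the three decision problems to the existence of an $\mathcal{L}$-good strategy for a suitable language $\mathcal{L}$ of $\colors$-labelled $V$-trees, and then to feed $\mathcal{L}$ into Corollary~\ref{cor:Parys} (to obtain $\Phi_\WC$ holding on $\arena$) or Theorem~\ref{theo:parys} (to obtain it on the unfolding). Writing $\WC=\WC_K$ for the parity condition $K\subseteq\colors^\omega$, Lemma~\ref{lemma:settingsVSLGood} already supplies these reductions: problem~(1) corresponds to $\mathcal{L}_{\mathrm{Card}}$ read as ``all but countably many branches carry a label in $K$'', problem~(2) to $\mathcal{L}_{\mathrm{Card}}$ read as ``all but $k$'', and problem~(3) to $\mathcal{L}_{\mathrm{Topo}}$. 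Since Corollary~\ref{cor:Parys} applies to any \emph{MSO-definable} $\mathcal{L}$, the whole statement follows once I show that each of these three languages is MSO-definable. This is exactly where the parity assumption is used: the predicate $\Good(\branch)$ asserting that a branch $\branch$ carries a label in $K$ is MSO-definable, as it merely states that the least colour occurring infinitely often along $\branch$ is even.

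For the cardinality languages I would argue as follows, calling a branch \emph{bad} when $\neg\Good(\branch)$ holds. Problem~(2) is immediate: ``all branches but at most $k$ carry a label in $K$'' is equivalent to ``there do not exist $k+1$ pairwise distinct bad branches'', which is plainly MSO-definable by quantifying $k+1$ set variables encoding branches, stating that they are pairwise distinct (they diverge at some node, expressible from the prefix ordering) and that each is bad. For problem~(1) I would use the perfect-set property: the branch space is Polish and the set of bad branches is Borel (a parity condition is $\omega$-regular, hence Borel), so by the perfect-set property for Borel sets (as used in the proof of Proposition~\ref{prop:continuumHypothesis}) this set is either countable or contains a nonempty perfect set. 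A nonempty perfect set of branches is exactly the branch set of a \emph{superperfect} subtree, i.e.\ a set of nodes $T'$ in which every node has two descendants in $T'$ that are incomparable for the prefix ordering. Hence the bad branches are uncountable if and only if there is a superperfect subtree $T'$ all of whose branches are bad, an MSO-definable statement (superperfectness is an MSO property of $T'$, and ``every branch of $T'$ is bad'' is a universally quantified MSO statement); negating it defines $\mathcal{L}_{\mathrm{Card}}$ in the countable reading.

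The topological language $\mathcal{L}_{\mathrm{Topo}}$ is the delicate case and the main obstacle. Here I would invoke the Banach--Mazur characterisation: the good branches are large precisely when \Eloise has a winning strategy in the Banach--Mazur game on the tree for the objective $K$, and by the folk result recalled above (valid since $K$ is Borel) it suffices to consider decomposition-invariant strategies $f$, for which largeness amounts to $\mathcal{B}(f)$ consisting only of good branches. The task is thus to express in MSO the existence of such an $f$. I expect this to reduce to a finitary, $\omega$-regular condition on the tree, as it visibly does in the extreme cases --- for a B\"uchi objective largeness is equivalent to ``from every node a final node is reachable'', and for co-B\"uchi to ``from every node a cone avoiding forbidden vertices is reachable'' --- with the general parity case handled by the corresponding nesting of reachability and safety requirements; being $\omega$-regular, these are MSO-definable. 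Equivalently, since we assume \emph{finite out-degree} the relevant trees are finitely branching, so one can guess \Eloise's decomposition-invariant strategy with a tree automaton and verify that every branch of $\mathcal{B}(f)$ satisfies the parity condition, exhibiting $\mathcal{L}_{\mathrm{Topo}}$ as a regular tree language and hence MSO-definable by Rabin's theorem.

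With all three languages shown MSO-definable, the proof concludes by feeding each $\mathcal{L}$ into Corollary~\ref{cor:Parys}, producing the desired $\Phi_\WC$ that holds on $\arena$ (or, through Theorem~\ref{theo:parys}, on the unfolding of $\arena$), so that each problem reduces to MSO model-checking on the arena. The only genuinely subtle step is pinning down the MSO formula witnessing largeness of a parity condition in the topological case; the cardinality cases become routine once the perfect-subtree reformulation is in place.
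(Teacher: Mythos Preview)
Your overall route is exactly the paper's: reduce each problem via Lemma~\ref{lemma:settingsVSLGood} to the existence of an $\mathcal{L}$-good strategy and then apply Theorem~\ref{theo:parys}/Corollary~\ref{cor:Parys}, the only remaining task being MSO-definability of $\mathcal{L}_{\mathrm{Card}}$ and $\mathcal{L}_{\mathrm{Topo}}$ for a parity $K$. The paper does not prove that last point at all; it simply cites \cite{CarayolS17}. You go further and supply arguments, and your treatment of the two cardinality languages is correct: the ``at most $k$ bad branches'' case is a first-order count, and for countability the perfect-set reformulation (uncountably many bad branches iff some perfect subtree has only bad branches) is sound and MSO-expressible.

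The one genuine gap is in your topological argument. A decomposition-invariant strategy $f$ takes values in $D^{+}$, which is infinite even under the finite out-degree assumption, so a tree automaton cannot guess $f(u)$ node by node as you suggest; and the ``nesting of reachability and safety'' claim for general parity is not a proof beyond the B\"uchi and co-B\"uchi cases you spell out. The standard repair---and this is what \cite{CarayolS17} does, and what the paper itself later sets up in Lemmas~\ref{lemma:largeVSdense} and~\ref{directions}---is to replace $f$ by a \emph{direction mapping} $d:t\to D$ together with a marked dense set $W$: one guesses only a single direction and a $\top/\bot$ bit per node (a finite label, hence automaton-guessable), checks that following $d$ from any node eventually hits $W$ (a reachability/B\"uchi condition), and checks that every branch with infinitely many prefixes in $W$ satisfies the parity condition (an $\omega$-regular branch condition). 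With this encoding your automaton argument goes through and $\mathcal{L}_{\mathrm{Topo}}$ is regular, hence MSO-definable.
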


\begin{proof}
This is obtained by combining Lemma~\ref{lemma:settingsVSLGood} and Theorem~\ref{theo:parys} (or Corollary~\ref{cor:Parys} if one wants the statement on the arena and not on the unfolding)	together with the fact that the languages $\mathcal{L}_{\mathrm{Card}}$ and $\mathcal{L}_{\mathrm{Topo}}$ from Lemma~\ref{lemma:settingsVSLGood} are $\omega$-regular (equivalently, MSO-definable) when considering parity conditions (see \cite{CarayolS17} for a proof of this result).
\end{proof}

\begin{remark}\label{rk:Parys}
One directly obtains from Corollary~\ref{cor:Parys-parity} decidability over various classes of arenas that enjoy MSO-decidability (or whose unfolding does): finite arenas, pushdown arenas~\cite{Walukiewicz01} or even CPDA arenas \cite{HMOS08}. We will discuss later these results (and how to obtain them differently) in Section~\ref{subsection:Games-on-Infinite-Arenas-perfect} .
\end{remark}

In the case where the graph is finite one can also safely restrict the set of strategies for \Eloise and \Abelard to finite-memory strategies. More precisely,

\begin{corollary}\label{cor:Parys-finite-graph-strat-finite-memory}
		Let $G=(V,E)$ be a finite graph, let $\colors$ be a finite set, let $\col: V\rightarrow\colors$ be a mapping, let $\arena=(G,\VE,\VA,\VN)$ be an arena and let $v_0$ be an initial vertex. Let $\mathcal{L}$ be an MSO-definable set of $\colors$-labelled $V$-trees. Then the following are equivalent.
		\begin{itemize}
			\item 	\Eloise has an $\mathcal{L}$-good strategy from $v_0$.
			\item 	\Eloise has an $\mathcal{L}$-good finite-memory strategy from $v_0$.
			\item 	\Eloise has {a finite memory} strategy $\strat_\Ei$ such that for every finite-memory strategy $\strat_\Ai$ of \Abelard, $\treeGame{v_0}{\strat_\Ei}{\strat_\Ai}$ belongs to $\mathcal{L}$.
		\end{itemize}
\end{corollary}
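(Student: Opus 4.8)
The plan is to organise the equivalence around the two genuine implications $(1)\Rightarrow(2)$ and $(3)\Rightarrow(2)$; the remaining ones are immediate, since $(2)\Rightarrow(1)$ holds because a finite-memory strategy is a strategy, and $(2)\Rightarrow(3)$ holds because a strategy good against \emph{every} strategy of the opponent is in particular good against the finite-memory ones. Both genuine implications will rest on a single \emph{extraction principle}: over the unfolding $T$ of a finite arena from $v_0$, whenever an MSO formula of the form $\exists X\,(\mathrm{Valid}_P(X)\wedge\Psi(X))$ holds --- where $P$ is one of the two players \Eloise and \Abelard, $\mathrm{Valid}_P$ asserts that $X$ encodes a strategy of $P$ as in the proof of Theorem~\ref{theo:parys}, and $\Psi$ is an arbitrary MSO formula --- it is already witnessed by a finite-memory strategy of $P$.

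To prove the extraction principle I would first note that, the arena being finite, $T$ is a \emph{regular} tree of bounded out-degree: the subtree below a node $v_1\cdots v_k$ depends only on $v_k$, so $T$ is the unfolding of the finite pointed graph $\arena$ itself. Translating $\Psi$ into a parity tree automaton $\mathcal{A}_\Psi$ over the alphabet $\colors\times\{0,1\}$ (the second component recording membership in $X$) via Rabin's theorem \cite{Thomas97}, the existence of a labelling $X$ for which $\mathcal{A}_\Psi$ accepts $(T,X)$ is exactly the acceptance game of $\mathcal{A}_\Psi$ on $T$. Because $T$ is regular this acceptance game is equivalent to a \emph{finite} parity game, played on the product of $\arena$ with the state space of $\mathcal{A}_\Psi$, in which the existential player chooses both the $X$-bit and the automaton transition at each position. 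Parity games on finite graphs are positionally determined \cite{Thomas97}, so a win yields a positional winning strategy, fixing the $X$-bit as a function of the current position, that is of the pair (current vertex, automaton state). Since $\mathrm{Valid}_P(X)$ is part of the formula, the resulting regular labelling $X$ genuinely encodes a strategy of $P$; and the choice it makes at each $P$-node depends only on (current vertex, automaton state), the automaton state being updated deterministically while descending --- which is precisely a finite-memory strategy in the sense of Section~\ref{section:perfect}, with memory $M$ the states of $\mathcal{A}_\Psi$, update $\mathrm{Up}$ the automaton transition, and $\mathrm{Move}$ the positional choice.

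With the principle in hand, $(1)\Rightarrow(2)$ is direct: by Theorem~\ref{theo:parys}, $(1)$ is equivalent to the formula $\Phi_{\mathcal{L}\mathrm{-good}}=\exists X\,[\mathrm{Valid}_\Ei(X)\wedge\forall Y\forall Z\,(\cdots)]$ holding on $T$, whose body after $\exists X$ is MSO, so applying the principle to \Eloise produces a finite-memory $\mathcal{L}$-good strategy. For $(3)\Rightarrow(2)$ I fix the finite-memory strategy $\strat_\Ei$ provided by $(3)$ and hard-wire it into $\arena$: taking the synchronous product of $\arena$ with the finite memory $M$ of $\strat_\Ei$ yields a new \emph{finite} arena $\arena'$ in which \Eloise has no remaining choice, so that only \Abelard and \Nature play, and in which \Abelard's strategies correspond bijectively to those of the original game against $\strat_\Ei$, with $\col$ carried over so that the coloured unfolding of $\arena'$ under any $\strat_\Ai$ is $\treeGame{v_0}{\strat_\Ei}{\strat_\Ai}$. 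Suppose, for contradiction, that $\strat_\Ei$ were not $\mathcal{L}$-good, i.e.\ some $\strat_\Ai$ makes $\treeGame{v_0}{\strat_\Ei}{\strat_\Ai}\notin\mathcal{L}$. As $\mathcal{L}$ is MSO-definable so is its complement, hence ``there exists an \Abelard strategy leaving $\mathcal{L}$'' is an MSO formula of the extraction form (built as in Theorem~\ref{theo:parys} but existentially quantifying \Abelard's strategy and asserting the relativised complement of $\Phi_{\mathcal{L}}$) holding on the unfolding of $\arena'$. Applying the principle to \Abelard then yields a \emph{finite-memory} \Abelard strategy leaving $\mathcal{L}$, contradicting $(3)$; hence $\strat_\Ei$ is $\mathcal{L}$-good, which is $(2)$.

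The main obstacle, and the only step demanding care, is the extraction principle itself: reducing the MSO set-quantifier to a \emph{finite} parity game, and then identifying the regular witness $X$ delivered by a positional winning strategy with a finite-memory strategy in the original arena. Everything else --- the two trivial implications, the hard-wiring of a finite-memory strategy into the arena, and the closure of MSO under complement --- is routine.
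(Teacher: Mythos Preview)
Your proposal is correct and follows essentially the same route as the paper's proof. Both arguments rest on the same principle: because the graph is finite, the unfolding is regular, and an MSO-definable set of strategies over a regular tree always contains a regular (hence finite-memory) witness. You make this explicit through the automata-theoretic route (Rabin's theorem, product with the arena, positional determinacy of finite parity games), whereas the paper simply appeals to the fact that an MSO-definable strategy on a finite graph ``can be implemented by a finite transducer''; and for $(3)\Rightarrow(2)$ both you and the paper hard-wire the finite-memory $\strat_\Ei$ into the arena by a synchronised product and apply the same extraction argument on \Abelard's side.
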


{
\begin{proof}
	From the proof of Theorem~\ref{theo:parys} we get that an \Eloise's $\mathcal{L}$-good strategy can be defined in MSO, and as $G$ is finite it can be implemented by a finite transducer hence, be finite-memory. Now, it remains to prove that one can without loss of generality restrict \Abelard's strategies to be finite-memory. For that, we use the same argument. Fix a \emph{finite-memory} strategy $\strat_\Ei$ of \Eloise: the set of \Abelard's strategies $\strat_\Ai$ such that the set of losing plays for \Eloise is MSO definable in a synchronised product of the arena together with a transducer implementing strategy $\strat_\Ei$, \ie it is MSO definable on a fixed finite graph. Hence, if this set is non-empty it contains a finite-memory strategy.
\end{proof}
}

Note that the results from the previous two corollary are somehow not very satisfactory. Indeed, they only apply to $\omega$-regular winning conditions and moreover the computational complexity may be very costly (due to the fact that one works with MSO logic whose decidability is {tower-exponential in the number of quantifier alternations which in our case is quite high}). For those reasons we consider alternative approaches in the next Section. {This permits to significantly reduce the complexity for decidable instances (\eg in the setting of pushdown arenas) and also to tackle, for the cardinality setting, winning conditions not captured by MSO logic, \ie beyond $\omega$-regular ones.}

\section{Perfect-information Games With Nature: Decision Problems}
\label{section:perfect-decision}

We first consider the following two problems regarding the leaking value of a game: \begin{inparaenum}
 \item Is the leaking value is at most $\aleph_0$? \item For some given $k\in\mathbb{N}$, is the leaking value is smaller or equal than $k$? 	
 \end{inparaenum}
 
 For both questions, we make no assumption on the game itself, namely we do not restrict the class of arenas neither the winning conditions. As we are working in such a general setting, we do not focus on decidability but rather on finding reductions to questions on games \emph{without} \Nature. More precisely, for both problems we provide a transformation of the arena and of the winning condition such that the problem reduces to the existence of a winning strategy for \Eloise in a game without \Nature played on the new arena and equipped with the new winning condition. This occupies Section~\ref{section:perfectLeakingCountable} and Section~\ref{section:perfectLeakingFinite}.

Finally, in Section~\ref{section:perfectTopo}, we follow the same approach but for the existence of topologically good strategies. However, we need to restrict our attention to games where \Abelard is not playing.

Of course, for special classes of arenas and of winning conditions, those reductions implies decidability and various important consequences that we discuss in detail in Section~\ref{section:consequences-perfect}.

\subsection{A Game to Decide If the Leaking Value Is at Most $\aleph_{0}$}\label{section:perfectLeakingCountable}

Our goal in this section is to design a technique to decide if the leaking value is at most $\aleph_{0}$ in a given two-player game
with Nature for {an arbitrary} Borel winning condition. 

Fix a graph $G=(V,E)$, an arena $\arena=(G,\VE,\VA,\VN)$ and a game $\game=(\arena,v_0,\WC)$ where $\WC$ is a \emph{Borel} winning condition. 
 We design a two-player perfect-information game \emph{without Nature} $\gameL=(\arenaL,v_0,\WCL)$ such that \Eloise wins $\gameL$ if and only if $\LVal{\game} \leq \aleph_{0}$.

Intuitively in the game $\gameL$, every vertex $v$ of Nature is replaced by
a gadget (see Figure~\ref{figure:gadget-leaking-perfect}) in which \Eloise announces a successor $w$ of $v$ (\ie some $w \in E(v)$) that she wants to \emph{avoid} and then \Abelard chooses a successor of $v$. If he picks $w$ we say that he \emph{disobeys} \Eloise otherwise he \emph{obeys} her. In vertices of  \Eloise and \Abelard, the game $\gameL$ works the same as the game $\game$.
The winning condition $\WCL$ for \Eloise is either that the play (without the gadget nodes) belongs to $\WC$ or that \Abelard does not obeys \Eloise infinitely often (\ie after some point, \Abelard always disobeys \Eloise). Remark that, this is in particular the case if, after some point, no vertex corresponding to a vertex of \Nature is encountered.

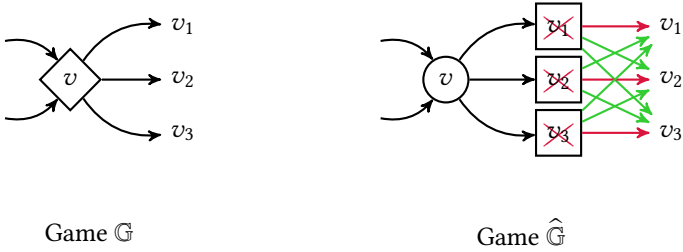
\begin{figure}[htb]
	\centering
	\begin{tikzpicture}[>=stealth',thick,scale=1,transform shape]
\tikzstyle{Abelard}=[draw, minimum size=.6cm]
\tikzstyle{Eloise}=[draw,circle, minimum size=.6cm]
\tikzstyle{Nature}=[draw,diamond]
\tikzstyle{Buchi}=[fill=MediumSpringGreen]
\tikzstyle{loopleft}=[in=150,out=210]
\tikzstyle{loopright}=[in=-30,out=30]
\tikzstyle{loopbelow}=[in=-120,out=-60]
\tikzstyle{loopabove}=[in=120,out=60]
\tikzstyle{OK}=[LimeGreen]
\tikzstyle{bad}=[Crimson]

\begin{scope}
\node[Nature] (v) at (0,0) {$v$};
\node (i1) at (-1,.5) {};
\node (i3) at (-1,-.5) {};
\path[->] (i1) edge[bend left] (v);
\path[->] (i3) edge[bend right] (v);

\node (v1) at (1.5,.7) {$v_1$};
\node (v2) at (1.5,0) {$v_2$};
\node (v3) at (1.5,-.7) {$v_3$};
\path[->] (v) edge[bend left] (v1);\path[->] (v) edge (v2);\path[->] (v) edge[bend right] (v3);

\node at (0.25,-2) {\textcolor{black}{Game $\game$}};
\end{scope}

\begin{scope}[xshift = 5cm]
\node[Eloise] (v) at (0,0) {$v$};
\node (i1) at (-1,.5) {};
\node (i3) at (-1,-.5) {};
\path[->] (i1) edge[bend left] (v);
\path[->] (i3) edge[bend right] (v);

\node[Abelard] (nv1) at (1.5,.7) {\xcancel{$v_1$}};
\node[Abelard] (nv2) at (1.5,0) {\xcancel{$v_2$}};
\node[Abelard] (nv3) at (1.5,-.7) {\xcancel{$v_3$}};
\path[->] (v) edge[bend left] (nv1);\path[->] (v) edge (nv2);\path[->] (v) edge[bend right] (nv3);

\node (v1) at (3,.7) {$v_1$};
\node (v2) at (3,0) {$v_2$};
\node (v3) at (3,-.7) {$v_3$};
\path[->,bad] (nv1) edge (v1);
\path[->,OK] (nv1) edge (v2);
\path[->,OK] (nv1) edge (v3);
\path[->,OK] (nv2) edge (v1);
\path[->,bad] (nv2) edge (v2);
\path[->,OK] (nv2) edge (v3);
\path[->,OK] (nv3) edge (v1);
\path[->,OK] (nv3) edge (v2);
\path[->,bad] (nv3) edge (v3);
\node at (1,-2) {\textcolor{black}{Game $\gameL$}};

\end{scope}
\end{tikzpicture}
 \caption{Example of the gadget used to defined $\gameL$, where $\xcancel{v_i}$ is a shorthand for $(v,v_i)$, the node where \Eloise indicates she would prefer avoiding $v_i$ from $v$}\label{figure:gadget-leaking-perfect}
\end{figure}

Formally one defines $\graphL = (\VL,\EL)$ where $\VL=\VL_\Ei\cup \VL_\Ai$, $\VL_\Ei = V_\Ei \cup \VN$, $\VL_\Ai = \VA \cup \{ (v,w) \mid v \in \VN \;\text{and}\; w \in E(v) \}$ and 
\begin{multline*}
\EL=    E \;\setminus\; (\VN \times V) \;  \cup \; \{ (v,(v,w)) \mid v \in \VN\;\text{and}\; w \in E(v) \} \\
 \cup \; \{ ((v,w),w') \mid v \in \VN\;\text{and}\; w,w' \in E(v)  \}. 
\end{multline*}

For ease of presentation, we view a partial play $\hat{\pi}$ in $\gameL$ as a partial play $\pi$ in $\game$ together with a mapping associating to every prefix of $\pi$ ending in $\VN$ (with the possible exception of $\pi$ itself) the successor that \Eloise wishes to avoid.

Formally for a partial play $\hat{\pi}$ in $\gameL$, we denote by $\encod{\hat{\pi}}$ the partial play of $\game$ obtained
by removing all occurrences of vertices in $\VN \times V$ from $\hat{\pi}$.
A partial play $\hat{\pi}$ in $\gameL$ is entirely characterised by the pair
$(\pi,\xi)$ where $\pi$ is the partial play $\encod{\hat{\pi}}$  and $\xi$ is the mapping such for all $\pi' \prefix \pi$, $\xi(\pi')=w$ if and only if 
there exists $\hat{\pi}' \prefix \hat{\pi}$ with $\encod{\hat{\pi}'}=\pi'$ and $\hat{\pi}'$ ends in a vertex of the form $(v,w)$ for some $v \in \VN$.
In the following, we do not distinguish between a pair $(\pi,\xi)$ satisfying these conditions and the unique corresponding partial play. We adopt the same point of view for (infinite) plays.

Finally, the winning condition $\WCL$ is defined by 
$$\WCL = \{ (\lambda,\xi) \mid \lambda \in \WC \}  \\
        \cup \{ (\lambda,\xi) \mid \exists^{<\infty} \pi v \prefixstrict \lambda, \pi \in \Dom(\xi) \;\text{and}\; v \neq \xi(\pi)  \}  
$$
\ie $\WCL$ contains those plays that project to a winning play in $\game$ as well as those plays where \Abelard does not obeys \Eloise infinitely often.

{
\begin{remark}
\label{remark:determincay-leaking-perfect}
As $\WC$ is assumed to be a Borel subset of plays in $\game$, $\WCL$ is a Borel subset of the set of plays in $\gameL$. Indeed, the second part of the condition (which does not involve $\WC$) is Borel. As the first part is the inverse image of $\WC$ under the \emph{continuous} mapping $\hat{\lambda} \mapsto \encod{\hat{\lambda}}$, it is also Borel. Using Borel determinacy \cite{Martin75} the game $\gameL$ is determined, \ie either \Eloise or \Abelard has a winning strategy in $\gameL$. {Furthermore, remark that if $\WC$ is $\omega$-regular then so is $\WCL$.}
\end{remark}}

The following theorem relates the games $\game$ and $\gameL$.

\begin{theorem}\label{theo:perfect:main-aleph}
{Let $\game$ be a game. The leaking value in $\game$ is at most $\aleph_0$ if and only if \Eloise has a winning strategy in $\gameL$.

More precisely, from a winning strategy (\resp positional winning strategy, \resp finite-memory  winning strategy) $\phiNR_{\Ei}$ of \Eloise in $\gameL$, we can define a strategy (\resp positional winning strategy, \resp finite-memory  winning strategy) $\phi_{\Ei}$ for \Eloise in $\game$ such that $\CL{\phi_{\Ei}} \leq
\aleph_{0}$. 

Moreover, from a winning strategy (\resp a positional winning strategy,\resp finite-memory  winning strategy) $\phiNR_{\Ai}$ for \Abelard, we can define 
  a strategy (\resp a positional strategy,\resp finite-memory  winning strategy)
$\phi_{\Ai}$ for \Abelard in $\game$ such that \emph{for any} strategy $\phi_{E}$ of \Eloise  $\cardinal{\outcomes{v_0}{\phi_\Ei}{\phi_\Ai}\setminus\WC} = 2^{\aleph_0}$.}
\end{theorem}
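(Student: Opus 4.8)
The plan is to prove the two quantitative implications separately and then glue them with the Borel determinacy of $\gameL$ (Remark~\ref{remark:determincay-leaking-perfect}). \emph{Soundness}: a winning strategy $\phiNR_{\Ei}$ of \Eloise in $\gameL$ yields $\phi_{\Ei}$ in $\game$ with $\CL{\phi_{\Ei}}\le\aleph_0$. \emph{Hardness}: a winning strategy $\phiNR_{\Ai}$ of \Abelard in $\gameL$ yields $\phi_{\Ai}$ in $\game$ with $\cardinal{\outcomes{v_0}{\phi_\Ei}{\phi_\Ai}\setminus\WC}=2^{\aleph_0}$ for \emph{every} $\phi_{\Ei}$. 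Since $\gameL$ is determined, exactly one player has a winning strategy, so the leaking value is at most $\aleph_0$ if and only if \Eloise wins $\gameL$; the upper bound $2^{\aleph_0}$ in the hardness part is automatic from Proposition~\ref{prop:continuumHypothesis}.

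For soundness, the key map sends a play $\lambda$ of $\game$ to its unique lift $\hat\lambda$ in $\gameL$ in which \Eloise respects $\phiNR_{\Ei}$: one inserts, before each \Nature move $v\to v'$, the gadget node $(v,w)$ with $w=\phiNR_{\Ei}$ evaluated on the already-built prefix. I define $\phi_{\Ei}$ in $\game$ by $\phi_{\Ei}(\pi)=\phiNR_{\Ei}(\hat\pi)$ at \Eloise vertices. Take any $\phi_{\Ai}$ and any losing play $\lambda\in\outcomes{v_0}{\phi_\Ei}{\phi_\Ai}\setminus\WC$; its lift $\hat\lambda$ lies in $\WCL$ (as $\phiNR_{\Ei}$ is winning), but $\encod{\hat\lambda}=\lambda\notin\WC$, so the first disjunct of $\WCL$ fails and the second holds: after some finite prefix $\pi$ of $\lambda$, \Abelard disobeys at every \Nature vertex. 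From $\pi$ on the play is then entirely forced — \Eloise follows $\phi_{\Ei}$, \Abelard follows $\phi_{\Ai}$, and at each \Nature vertex the successor is the announced one, determined by $\phiNR_{\Ei}$ — so $\lambda$ is determined by $\pi$. As there are only countably many finite prefixes, there are at most countably many losing plays, i.e.\ $\CL{\phi_{\Ei}}\le\aleph_0$. Positionality and finite memory transfer because at \Eloise vertices the lift does not change the current vertex and the gadget announcements can be produced by a finite-memory simulation of $\phiNR_{\Ei}$ itself.

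The hardness part is the main difficulty. Fix an arbitrary $\phi_{\Ei}$. The obstacle is that $\phiNR_{\Ai}$ lives in $\gameL$ and may depend on \Eloise's announcements, whereas a strategy of \Abelard in $\game$ may only read the history of $\game$, in which announcements are invisible. I resolve this by reconstructing announcements from \Nature's actual moves: fix $\kappa(\hat\pi,v,v')$ to be the least $w\in E(v)$ with $\phiNR_{\Ai}(\hat\pi\, v\,(v,w))=v'$ (if any). Reading a history of $\game$, \Abelard rebuilds inductively the $\gameL$-history $\hat\pi$ — copying \Eloise's moves, his own $\phiNR_{\Ai}$-moves, and inserting $(v,\kappa(\hat\pi,v,v'))$ at each visited \Nature vertex with successor $v'$ — and plays $\phiNR_{\Ai}(\hat\pi)$ at his vertices (and arbitrarily once some \Nature move is not $\kappa$-realizable). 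This defines $\phi_{\Ai}$, and every play along which \Nature only takes $\kappa$-realizable successors lifts to a genuine $\gameL$-play consistent with $\phiNR_{\Ai}$, hence lies outside $\WCL$ and in particular projects outside $\WC$: all such plays are losing.

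It remains to exhibit $2^{\aleph_0}$ of them, which I do by a Cantor scheme $(\hat\pi_s)_{s\in\{0,1\}^*}$. Call a realizable \Nature vertex with history $\hat\pi$ \emph{branching} if $w\mapsto\phiNR_{\Ai}(\hat\pi\,v\,(v,w))$ takes at least two values; at such a vertex I pick $w_0,w_1$ with distinct responses $v'_0\ne v'_1$ and let $\hat\pi_{s0},\hat\pi_{s1}$ follow these two \Nature successors, so the two extensions diverge. The crucial claim is that, following $\phi_{\Ei}$ and $\phiNR_{\Ai}$ from any realizable position, a branching vertex is always reached after finitely many steps: otherwise every later \Nature vertex $v$ has a constant response $v^\ast$, and by letting \Eloise announce exactly $w=v^\ast$ there, \Abelard would disobey at every \Nature vertex past that point (or \Nature would be visited only finitely often), making the second disjunct of $\WCL$ hold and \Eloise win — contradicting that $\phiNR_{\Ai}$ is winning. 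Thus the scheme never gets stuck; each $x\in\{0,1\}^\omega$ yields a distinct realizable, hence losing, play, giving $\cardinal{\outcomes{v_0}{\phi_\Ei}{\phi_\Ai}\setminus\WC}\ge 2^{\aleph_0}$, with equality by Proposition~\ref{prop:continuumHypothesis}. Positionality and finite memory transfer as before, the reconstruction $\kappa$ being computable from the current memory state and \Nature's move.
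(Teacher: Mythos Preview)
Your overall architecture matches the paper's: soundness by lifting and tagging each losing play with a finite prefix, hardness by reconstructing a $\gameL$-history from \Nature's moves and then arguing perfect-tree branching, glued together by Borel determinacy of $\gameL$. The soundness direction is fine and is essentially the paper's argument (the paper tags a losing play with the \emph{last} obey rather than with the shortest prefix after which only disobeys occur, but these are equivalent).

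There is, however, a genuine gap in the hardness direction, and it sits precisely in your tie-breaking rule for $\kappa$. You set $\kappa(\hat\pi,v,v')=$ ``least $w$ with $\phiNR_\Ai(\hat\pi\,v\,(v,w))=v'$''. With this choice, at a \emph{non-branching} \Nature vertex (all $w$ yield the same response $v^\ast$) the inserted announcement is the least element of $E(v)$, which need not equal $v^\ast$; hence in the $\kappa$-rebuilt $\gameL$-history \Abelard may very well \emph{obey} there. Your attempted contradiction then switches to a different $\gameL$-play in which \Eloise announces $v^\ast$ at every later \Nature vertex. But changing the announcement at the first such vertex changes the $\gameL$-history, and $\phiNR_\Ai$'s later moves (both at gadget vertices and at original \Abelard vertices) may change with it; you therefore have no guarantee that the subsequent \Nature vertices in \emph{that} play are still non-branching, so the ``\Abelard disobeys forever'' conclusion is unjustified. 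In short, ``every later \Nature vertex has constant response'' is a property of the $\kappa$-history, not of the $v^\ast$-history you need.

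The paper closes exactly this gap by choosing the reconstruction rule differently: whenever $\phiNR_\Ai(\hat\pi\,v\,(v,v'))=v'$ (i.e.\ announcing the actual successor makes \Abelard disobey), take $w=v'$; otherwise take any $w$ that realises $v'$. With this preference, at a non-branching vertex the constant response $v^\ast$ satisfies $\phiNR_\Ai(\hat\pi\,v\,(v,v^\ast))=v^\ast$, so the rule inserts $w=v^\ast$ and \Abelard disobeys \emph{in the rebuilt history itself}. Now the argument is internal: any branch of $\treeGame{v_0}{\phi_\Ei}{\phi_\Ai}$ lifts via $\kappa$ to a $\gameL$-play respecting $\phiNR_\Ai$, hence \Abelard obeys infinitely often along it; each obey forces the corresponding \Nature vertex to be branching (since non-branching would have produced a disobey), so every branch meets infinitely many nodes with at least two successors in the outcome tree, which immediately gives $2^{\aleph_0}$ branches. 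No auxiliary play with modified announcements is needed, and the Cantor scheme you build is then automatically inside $\outcomes{v_0}{\phi_\Ei}{\phi_\Ai}$. Replacing ``least $w$'' by this preference rule repairs your proof with no other change.
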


\begin{proof}
First assume that \Eloise has a winning strategy $\phiNR_\Ei$ in $\gameNR$. We define a strategy $\phi_\Ei$ for her in $\game$ as follows. 
For any partial play  $\pi$ in $\game$ ending in $\VE$,
 if there exists a partial play of the form $(\pi,\xi)$ in $\gameNR$ in which \Eloise respects $\phiNR_\Ei$ then this play is unique and we let $\phi_{\Ei}(\pi)=\phiNR_\Ei((\pi,\xi))$. Otherwise $\phi_{\Ei}(\pi)$ is undefined.

A straightforward induction shows that for each partial play $\pi$ ending in $\VE$ where \Eloise respects $\phi_\Ei$ the strategy $\phi_{\Ei}$ is defined.
{Furthermore remark that if $\phiNR_{\Ei}$ is positional (\resp uses finite-memory), $\phi_{\Ei}$
is also positional (\resp also uses finite-memory).}

Let us now prove that $\CL{\strat_\Ei}\leq\aleph_0$. For this, fix a strategy $\strat_\Ai$ of \Abelard in $\game$ and consider a play $\lambda$ in $\outcomes{v_0}{\phi_\Ei}{\phi_\Ai} \setminus \WC$ that is losing for \Eloise.
As \Eloise respects $\strat_\Ei$ in $\play$, there exists by construction of $\phi_{\Ei}$, a unique play 
 of the form $(\play,\xi_{\lambda})$ in $\gameL$ where \Eloise respects $\phiNR_\Ei$. As $\phiNR_\Ei$ is winning in $\gameL$, the corresponding play $(\lambda,\xi_{\lambda})$ is won by \Eloise and this can only be because \Abelard obeys \Eloise
only finitely often (indeed, recall that $\lambda$ is losing for her in $\game$). Let $\pi_{\lambda}$ be the longest prefix of $\lambda$
of the form $\pi v$ with $\pi \in \Dom(\xi_{\lambda})$ and $v \neq \xi_{\lambda}(\pi)$ (\ie $\pi_{\lambda}$ is the last time where \Abelard obeys
\Eloise).

We claim that $\lambda \in \outcomes{v_0}{\phi_\Ei}{\phi_\Ai} \setminus \WC$
is uniquely characterised by $\pi_{\lambda}$. In particular it will imply that $\outcomes{v_0}{\phi_\Ei}{\phi_\Ai} \setminus \WC$ is countable as it can be injectively mapped into the countable set $V^{*}$. 

Let $\lambda_{1} \neq \lambda_{2}  \in \outcomes{v_0}{\phi_\Ei}{\phi_\Ai} \setminus \WC$ and let $(\lambda_{1},\xi_{1})$ and $(\lambda_{2},\xi_{2})$ 
be the corresponding plays in $\gameL$. We will show that $\pi_{\lambda_{1}} \neq \pi_{\lambda_{2}}$.
Consider the greatest common prefix $\pi$  of $\lambda_{1}$ and $\lambda_{2}$. In particular there exists $v_{1} \neq v_{2} \in V$ such that $\pi v_{1} \prefixstrict \lambda_{1}$ and $\pi v_{2} \prefixstrict \lambda_{2}$. As $\lambda_{1}$ and $\lambda_{2}$
respects the same strategies for \Eloise and \Abelard, $\pi$ must end
in $\VN$. Moreover for all prefixes of $\pi$ (including $\pi$), $\xi_{\lambda_1}$ and $\xi_{\lambda_2}$ coincide. Let $w = \xi_{\lambda_{1}}(\pi)=\xi_{\lambda_{2}}(\pi)$ be the vertex \Eloise wants to avoid in $\pi$. Assume without loss of generality that $w \neq v_{1}$. \Ie
 \Abelard obeys \Eloise at $\pi$ in  $(\lambda_{1},\xi_{1})$. In particular,
$\pi v_{1} \prefix \pi_{\lambda_{1}}$: therefore $\pi_{\lambda_{1}} \not\prefix \pi_{\lambda_{2}}$ and thus $\pi_{\lambda_{1}} \neq \pi_{\lambda_{2}}$.

Conversely, assume that \Eloise has no winning strategy in $\gameNR$. By Remark~\ref{remark:determincay-leaking-perfect}, \Abelard has a winning strategy $\phiNR_\Ai$ in $\gameNR$. 

Using $\phiNR_\Ai$ we define a  strategy $\phi_\Ai$ of \Abelard in $\game$ that is only partially defined. {It can be turned into a full strategy by picking an arbitrary move for \Abelard for all partial plays where it is not defined. This transformation can only increase the set of losing plays for \Eloise and hence we can work with $\phi_{\Ai}$ as is.}

The strategy $\phi_\Ai$ uses as a memory a partial play in $\gameL$, \ie with any partial play $\pi$ in $\game$ where \Abelard respects $\phi_{\Ai}$ we associate a partial play $\tau(\pi)=(\pi,\xi)$ in $\gameNR$ where \Abelard respects $\phiNR_{\Ai}$. The definition of both $\phiNR_{\Ai}$ and $\tau$ are done by induction.

 Initially when $\pi = v_0$ one lets $\tau(\pi)=(v_0,\xi)$ where $\xi$ is defined nowhere. Now, 
assume that the current partial play is $\pi$ and that it ends in some vertex $v$ and assume that $\NRmap{\pi}=(\pi,\xi)$.
\begin{itemize}
\item If $v\in\VA$ then $\phi_\Ai(\pi) = \phiNR_\Ai((\pi,\xi)) = v'$ and $\NRmap{\pi\cdot v'} = (\pi\cdot v',\xi)$. 
\item If $v\in\VE$ and \Eloise moves to some $v'$ then $\NRmap{\pi\cdot v'} = (\pi\cdot v',\xi)$. 
\item If $v\in\VN$ and \Nature moves to some $v'$ then $\NRmap{\pi\cdot v'}$
is defined only if there exists at least one $w \in E(v)$ such that 
$\phiNR_{\Ai}(\pi,\xi[\pi \mapsto w])=v'$ where we denote by $\xi[\pi \mapsto w]$ the extension of $\xi$ where $\pi$ is mapped to $w$. In this case,
if $\phiNR_{\Ai}(\pi,\xi[ \pi \mapsto v'])=v'$ then we take 
$\NRmap{\pi\cdot v'}=(\pi \cdot v',\xi[ \pi \mapsto v'])$. Otherwise
we pick $w \in E(v)$ such that 
$\phiNR_{\Ai}(\pi,\xi[\pi \mapsto w])=v'$ and set $\NRmap{\pi\cdot v'}=(\pi \cdot v',\xi[ \pi \mapsto w])$.
\end{itemize}

In the last case, remark that $\NRmap{\pi\cdot v'}$ is always defined for 
at least one $v' \in E(v)$. 
Indeed, consider any node $w\in E(v)$ and set $v'=\phiNR_{\Ai}(\pi,\xi[\pi \mapsto w])$: then for this $v'$ $\NRmap{\pi\cdot v'}$ is defined.
Furthermore if it is defined for exactly one 
$v' \in E(v)$, then it is equal some to $(\pi \cdot v',\xi)$ with $\xi(\pi)=v'$: indeed, it means that $\phiNR_{\Ai}(\pi,\xi[\pi \mapsto w])=v'$ for every $w$, and in particular for $w=v'$ and therefore $\NRmap{\pi\cdot v'}=(\pi \cdot v',\xi[ \pi \mapsto v'])$. This means in particular that \Abelard disobeys \Eloise.

{Finally remark that if $\phiNR_{\Ai}$ is positional (\resp uses finite-memory) then $\phi_{\Ai}$
is also positional (\resp also uses finite-memory).}

Let $\phi_\Ei$ be a strategy for \Eloise in $\game$. In order to prove that $\CL{\strat_\Ei}=2^{\aleph_0}$ we will establish the following stronger result: $\cardinal{\outcomes{v_0}{\phi_\Ei}{\phi_\Ai}\setminus\WC} = 2^{\aleph_0}$.

First remark\footnote{This is no longer true for the full version of $\phi_{\Ai}$.} that $\outcomes{v_0}{\phi_\Ei}{\phi_\Ai} \cap \WC = \emptyset$. Indeed, consider a play $\lambda \in \outcomes{v_0}{\phi_\Ei}{\phi_\Ai}$. By construction of $\phi_{\Ai}$, there exists a play of the form $(\lambda,\xi)$ in $\gameL$ where \Abelard respects $\phiNR_{\Ai}$: in particular it implies that $\lambda\notin\WC$. 

It remains to show that $\cardinal{\outcomes{v_0}{\phi_\Ei}{\phi_\Ai}} \geq 2^{\aleph_{0}}$. Consider the tree $\treeGame{v_0}{\phi_\Ei}{\phi_\Ai}$ of
all partial plays respecting both $\phi_{\Ei}$ and $\phi_{\Ai}$. To show
that $\treeGame{v_0}{\phi_\Ei}{\phi_\Ai}$ has $2^{\aleph_{0}}$ branches, it is enough to show that every infinite branch in $\treeGame{v_0}{\phi_\Ei}{\phi_\Ai}$ goes through infinitely many nodes with at least $2$ successors.

Let $\lambda$ be a branch in $\treeGame{v_0}{\phi_\Ei}{\phi_\Ai}$ and
let $\tau(\lambda)=(\lambda,\xi)$ be the corresponding play in $\gameL$.
As $\tau(\lambda)$ is won by \Abelard, he obeys \Eloise infinitely often during this play. Hence there exists 
$\pi_{1} v_{1} \prefixstrict \pi_{2} v_{2} \prefixstrict \cdots \prefixstrict \lambda$ such that for all $i \geq 1$, $\pi_{i}$ ends in $V_{N}$ and 
$\xi(\pi_{i}) \neq v_{i}$. As remarked previously for all $i \geq 0$, $\pi_{i}$
has at least two successors in $\treeGame{v_0}{\phi_\Ei}{\phi_\Ai}$ (as otherwise it would imply that \Abelard disobeys \Eloise at $\pi_{i}$ in $\tau(\lambda)$).
\end{proof}

{
\begin{remark}
One should think of the last part of the statement of Theorem~\ref{theo:perfect:main-aleph} as a determinacy result in the spirit Borel determinacy \cite{Martin75}. Indeed, it states that if \Eloise does not have a strategy that is good against every strategy of \Abelard then he has one that is bad (for her) against any of her strategies. 
\end{remark}
}

\subsection{A Game to Decide If the Leaking Value Is Smaller Than Some $k$}\label{section:perfectLeakingFinite}

Our goal in this section is to design a technique to decide if the leaking value is smaller than some fixed $k$ in a given two-player game with Nature for {an arbitrary} Borel winning condition. 

Fix a graph $G=(V,E)$, an arena $\arena=(G,\VE,\VA,\VN)$ and a game $\game=(\arena,v_0,\WC)$ where $\WC$ is a \emph{Borel} winning condition. Fix a bound $k\geq 0$.
 We design a two-player perfect-information game \emph{without Nature} $\gameFk=(\arenaFk,s,\WCFk)$ such that \Eloise wins $\gameFk$ if and only if $\LVal{\game} \leq k$.

Intuitively the main vertices in the game $\gameFk$ are pairs formed by a vertex from $V$ together with an integer $i$ such that $0\leq i\leq k$ that indicates the maximum number of plays \Eloise is claiming that she may loose. A vertex $(v_0,i)$ is controlled by the same player that controls $v$ in $\game$. For technical reasons we also add an initial vertex $s$ that is controlled by \Eloise. 

Informally a play in $\gameFk$ proceeds as follows. In the initial move, \Eloise goes from $s$ to a vertex $(v,i)$ for some $0\leq i\leq k$. Then we have the following situation depending on the current vertex.
\begin{itemize}
	\item If the play is  in some vertex $(v,i)$ with $v\in \VE$, \Eloise can move to any $(w,i)$ with $w\in E(v)$. 
	\item If the play is in some vertex $(v,i)$ with $v\in \VA$, \Abelard can move to any $(w,i,?)$ that is a vertex controlled by \Eloise and from which she can decrease the integer value by going to any vertex $(w,j)$ with $0\leq j\leq i$.
	\item If the play is in some vertex $(v,i)$ with $v\in \VN$, \Eloise can move to a vertex $\mu$ that stands for a function (we overload $\mu$ here) with domain $E(v)$ and that takes its value in $\{0,\dots,i\}$ and is such that $\sum_{w\in E(v)}\mu(w)=i$, \ie she indicates for every possible successor of $v$ a new integer whose values sum to $i$. Then \Abelard can choose any $w\in E(v)$ and the play goes to $(w,\mu(w))$.
\end{itemize}

Formally one defines $\graphFk = (\VF,\EF)$ where $\VF=\VF_\Ei\cup \VF_\Ai$, 
\begin{align*}
\VF_\Ei\;=\; & 
	\{s\} 
	\cup \{(v,i)\mid v\in V_\Ei\cup V_\Ni,\ 0\leq i \leq k \}
	\cup \{(v,i,?)\mid v\in V_\Ai,\ 0\leq i \leq k \},
\end{align*}
\begin{align*}
\VF_\Ai \; =\; & 
	\{(v,i)\mid v\in V_\Ai,\ 0\leq i \leq k \}\\
	&\cup \{\mu \mid \exists v\in V_\Ni \text{ s.t. } \mu:E(v)\rightarrow \{0,\dots,k\} \text{ and }\sum_{w\in E(v)}\mu(w)\leq k\}
\end{align*}
and
\begin{align*}
\EF\; =\; &
	\{(s,(v_0,i))\mid 0\leq i\leq k\}\\
	&\cup  \{((v,i),(w,i)\mid v\in V_\Ei \text{ and } w\in E(v)\}
	\cup \{((v,i),(w,i,?)\mid v\in V_\Ai \text{ and } w\in E(v)\}\\
	& \cup \{((w,i,?),(w,j)\mid i\geq j\}\\
	& \cup \{((v,i),\mu)\mid v\in V_\Ni \text{ and } \mu:E(v)\rightarrow \{0,\dots,k\} \text{ s.t. }\sum_{w\in E(v)}\mu(w)=i\}\\
	& \cup \{(\mu,(w,\mu(w)))\mid w \text{ is in the domain of $\mu$}\}
\end{align*}
Finally we let $\arenaFk=(\graphFk ,\VF_\Ei,\VF_\Ai)$.

Let $\playF$ be a play in $\arenaFk$ and let us define $\rho(\playF)\in V^\omega$ to be the play obtained from $\playF$ by keeping only the vertices in $V\times \{0,\dots,k\}$ and then projecting them on the $V$ component. The play $\playF$ is winning for \Eloise if one of the following holds:
\begin{itemize}
	\item $\rho(\playFk)\in \WC$; or
	\item no vertex of the form $(v,0)$ is visited in $\playF$.
\end{itemize}

Call $\WCFk$ the corresponding winning condition, \ie 
$\WCFk=\{\playF\mid \rho(\playF)\in \WC\}\cup (\VF^\omega\setminus \VF^*(V\times\{0\})\VF^\omega) $ and let $\gameFk=(\arenaFk,s,\WCFk)$.

The following theorem relates both games $\game$ and $\gameFk$.

\begin{theorem}\label{theo:perfect:main-k}
Let $\game$ be a game and let $k\geq 0$ be an integer. The leaking value in $\game$ is smaller or equal than $k$ if and only if \Eloise has a winning strategy in $\gameFk$.

More precisely, from a winning strategy (\resp finite-memory winning strategy) $\phiF_{\Ei}$ of \Eloise in $\gameFk$, we can define a strategy (\resp finite-memory winning strategy) $\phi_{\Ei}$ for \Eloise in $\game$ such that $\CL{\phi_{\Ei}} \leq k$. 

\end{theorem}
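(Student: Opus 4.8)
The plan is to prove the two implications of the equivalence separately, in both cases exploiting the fact that the integer annotation carried by the vertices of $\gameFk$ behaves like a \emph{budget} of losing plays: it is conserved at the vertices of \Nature (where \Eloise splits it via $\mu$) and only decreases elsewhere.

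\textbf{From a winning strategy in $\gameFk$ to a good strategy in $\game$ (the explicit construction).} I would build $\phi_\Ei$ from $\phiF_{\Ei}$ exactly as in the proof of Theorem~\ref{theo:perfect:main-aleph}: with any partial play $\pi$ of $\game$ in which \Eloise respects $\phi_\Ei$ I associate, by induction along $\pi$, the unique partial play of $\gameFk$ in which \Eloise respects $\phiF_{\Ei}$ and which projects onto $\pi$ via $\rho$; the moves of \Abelard and \Nature in this lift are dictated by $\pi$, while all of \Eloise's moves (the initial budget, the distributions $\mu$ at vertices of \Nature, and the budget decreases at the auxiliary $(w,i,?)$ vertices) are read off $\phiF_{\Ei}$. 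Setting $\phi_\Ei(\pi)$ to be the projection of $\phiF_{\Ei}$'s move defines $\phi_\Ei$; a routine induction shows it is everywhere defined and inherits finite memory (the only extra state to remember is the current budget, which ranges over the finite set $\{0,\dots,k\}$). To bound $\CL{\phi_\Ei}$, fix any \Abelard strategy $\phi_\Ai$ and lift each play $\lambda\in\outcomes{v_0}{\phi_\Ei}{\phi_\Ai}$ to its companion play in $\gameFk$; since $\phiF_{\Ei}$ is winning, any losing $\lambda\notin\WC$ must satisfy the second clause of $\WCFk$, so its budget never reaches $0$ and stays $\ge 1$ forever. Now take finitely many pairwise distinct losing plays: their lifts agree on common prefixes and branch only at vertices of \Nature (the only branching vertices of the outcome tree), hence span a finite subtree whose leaf budgets sum, by conservation, to at most the initial budget $\le k$; as each such leaf lies on a losing play its budget is $\ge 1$, so there are at most $k$ of them. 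Thus $\cardinal{\outcomes{v_0}{\phi_\Ei}{\phi_\Ai}\setminus\WC}\le k$ for every $\phi_\Ai$, i.e.\ $\CL{\phi_\Ei}\le k$, which gives $\LVal{\game}\le k$.

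\textbf{From $\LVal{\game}\le k$ to a winning strategy in $\gameFk$.} Here I would argue directly rather than through determinacy. Pick $\phi_\Ei$ with $\CL{\phi_\Ei}=\LVal{\game}\le k$ and, for each node $u$ of the tree of partial plays consistent with $\phi_\Ei$ (which branches at vertices of \Abelard and \Nature), let $L(u)$ be the supremum, over \Abelard strategies in the subgame rooted at $u$, of the number of losing branches below $u$. Restricting a strategy to a subtree gives $L(uw)\le L(u)$, so $L$ is non-increasing along every path and bounded by $L(v_0)=\CL{\phi_\Ei}\le k$; in particular every $L(u)$ is a finite cardinal, hence an \emph{attained} maximum (a bounded set of integers has a greatest element). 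Attainment yields the exact recurrences $L(u)=\sum_{w\in E(v)}L(uw)$ at a vertex $v\in\VN$ (losing branches through distinct successors are disjoint and \Abelard may optimise each subtree independently), $L(u)=\max_{w}L(uw)$ at $v\in\VA$, and $L(u)=L(\phi_\Ei(u))$ at $v\in\VE$. I then let \Eloise play in $\gameFk$ so as to maintain, at the node whose projection is $u$, a budget equal to $L(u)$: she opens with $(v_0,L(v_0))$, at a vertex of \Nature she plays the distribution $w\mapsto L(uw)$ (legal since these values sum exactly to $L(u)$), and at an auxiliary vertex $(w,i,?)$ she lowers the budget from $i=L(u)$ to $L(uw)\le i$. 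For any play consistent with this strategy, its projection $\lambda$ is either in $\WC$, and then \Eloise wins by the first clause of $\WCFk$, or is losing, in which case $L\ge 1$ at every node of $\lambda$ (an $L(u)=0$ would force \emph{all} continuations below $u$ to be winning against every \Abelard behaviour, contradicting that $\lambda$ is a losing continuation); thus the budget never reaches $0$, no vertex $(v,0)$ is visited, and \Eloise wins by the second clause. Hence the strategy is winning in $\gameFk$.

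\textbf{Main obstacle.} I expect the delicate point to be the exact additivity $L(u)=\sum_{w}L(uw)$ at vertices of \Nature when the out-degree is infinite, since a priori the supremum defining $L$ need not be attained — this is precisely the phenomenon isolated in Remark~\ref{remark:supNotMaxCardleaking}. The way out is to first establish finiteness ($L(u)\le k$) from $\CL{\phi_\Ei}\le k$, which forces all but finitely many successors to contribute $0$ and turns each relevant supremum into a genuine maximum; only then does the assignment $w\mapsto L(uw)$ sum exactly to $L(u)$ and furnish a legal move in $\gameFk$. The remaining verifications — well-definedness of the lift, preservation of finite memory in the first direction, and the finite-separation counting — are routine.
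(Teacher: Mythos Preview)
Your proof is correct, and for the direction $\LVal{\game}\le k\Rightarrow$ \Eloise wins $\gameFk$ it is essentially the paper's argument: your function $L$ is the paper's $\tau$, the strategy you build is identical, and your verification that it is winning is the contrapositive of the paper's property~(iii). Your ``main obstacle'' paragraph makes explicit a point the paper leaves implicit, namely that $\tau(\lambda)=\sum_{w}\tau(\lambda w)$ at \Nature vertices requires the finiteness $\tau\le k$ so that the suprema are attained and can be combined across the (possibly infinitely many) successors.

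Where you genuinely diverge is in the converse direction. The paper proves $\CL{\phi_\Ei}\le k$ by contradiction: assuming some $\phi_\Ai$ yields more than $k$ losing plays, it \emph{constructs} a counter-strategy $\phiFk_\Ai$ for \Abelard in $\gameFk$ (greedily picking, at each $\mu$-vertex, a successor whose assigned budget is strictly below the number of losing continuations through it) and shows the resulting play violates $\WCFk$. Your argument is instead a direct potential/counting argument: the budget is conserved at \Nature splits, weakly decreases elsewhere, and stays $\ge 1$ on every losing lift, so any finite antichain separating $m$ losing plays witnesses $m\le k$. Your route is shorter and avoids designing an \Abelard strategy, which is pleasant; the paper's route, on the other hand, yields a concrete witness on \Abelard's side and mirrors the structure of the proof of Theorem~\ref{theo:perfect:main-aleph}. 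Both are valid and neither needs determinacy.
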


\begin{proof}
	Assume first that $\LVal{\game}\leq k$ and let $\phi_\Ei$ be a strategy of \Eloise that witnesses it (note that as the leaking value is finite such an optimal strategy for \Eloise necessarily exists). 
	
	With any partial play $\play$ in $\game$ where \Eloise respects $\phi_\Ei$ we associate an integer $\tau(\play)\leq k$ by letting 
	$$\tau(\play) = \max\{\cardinal{\outcomes{\play}{\phi_\Ei}{\phi_\Ai}\setminus\WC}\mid \phi_\Ai \text{ \Abelard's strategy } \}$$ where $\outcomes{\play}{\phi_\Ei}{\phi_\Ai} = \outcomes{v_0}{\phi_\Ei}{\phi_\Ai}\cap \play V^\omega$ denotes the (possibly empty) set of infinite plays that starts by $\play$ and where \Eloise (\resp \Abelard) respects $\phi_\Ei$ (\resp $\phi_\Ai$). Remark that $\tau(v_0) = \LVal{\game}$. Moreover, when $\play$ increases (with respect to the prefix ordering) $\tau$ is easily seen to be decreasing and therefore it implies that $\tau(\play)$ is always smaller or equal than $k$.
	
	We now define a strategy $\phiFk_\Ei$ for \Eloise in $\gameFk$ as follows, where we let $\play=\rho(\playF)$:
	\begin{itemize}
		\item $\phiFk_\Ei(s)=(v_0,\LVal{\game})$;
		\item $\phiFk_\Ei(\playF)=(\phi(\lambda),i)$ if $\playF$ ends in some vertex $(v,i)$ with $v\in V_\Ei$;
		\item $\phiFk_\Ei(\playF)=(w,\tau(\play\cdot w))$ if $\playF$ ends in some vertex $(w,i,?)$;
		\item $\phiFk_\Ei(\playF)=\mu$ if $\playF$ ends in some vertex $(v,i)$ with $v\in V_\Ni$ where $\mu: E(v)\rightarrow \{0,\dots,k\}$ is defined by letting $\mu(w) = \tau(\play\cdot w)$.
	\end{itemize}
	
	It easily follows from the definition of $\phiFk_\Ei$ that, for any partial play $\playF$ starting from $s$ and where \Eloise respects $\phiFk_\Ei$ one has the following.
	\begin{enumerate}[(i)]
		\item The play $\rho(\playF)$ is a play in $\game$ that starts in $v_0$ and where \Eloise respects $\phi_\Ei$.
		\item If $\playF$ ends in some vertex $(v,i)$ then $i=\tau(\rho(\playF))$.
	\end{enumerate}
	
	Note that the two above properties implies the following. 
	\begin{enumerate}[(i)]\setcounter{enumi}{2}
		\item\label{item:csq_tape0} If some play $\playF$ where \Eloise respects $\phiFk_\Ei$ eventually visits a vertex of the form $(v,0)$ then $\rho(\playF)\in \WC$. 
	\end{enumerate}
	
	We now establish that the strategy $\phiFk_\Ei$ is winning for \Eloise in $\gameFk$. For that, assume toward a contradiction that there is a losing play $\playF$ for \Eloise where she respects $\phiFk_\Ei$. By definition of $\WCFk$, it means that $\playF$ contains a vertex of the form $(v,0)$ and that $\rho(\playFk)\notin \WC$, which contradicts property~(\ref{item:csq_tape0}) above. 
	
	We now turn to the converse implication. Hence, we assume that \Eloise has a winning strategy $\phiFk_\Ei$ in $\gameFk$. Thanks to $\phiFk_\Ei$ we define a strategy $\phi_\Ei$ for \Eloise in $\game$. This strategy will associate with any partial play $\play$ a partial play $\playFk$ in $\gameFk$ where \Eloise respects $\phiFk_\Ei$ and such that $\rho(\playFk)=\play$. Initially, when $\play=v_0$ we let $\playFk=s\cdot \phiFk(s)$. Now consider a partial play $\play$ where \Eloise respects $\phi$. Then, we do the following.
	\begin{itemize}
		\item If $\play$ ends in some vertex $v\in V_\Ei$ and if $\phiFk(\playFk)=(w,i)$ then we let $\phi(\play)=w$ and we associate with $\play\cdot w$ the partial play $\playFk\cdot (w,i)$.
		\item If $\play$ ends in some vertex $v\in V_\Ai$ (equivalently $\playFk$ ends in some $(v,i)$ with $v\in V_\Ai$) and if \Abelard moves to some vertex $w$ then we associate with $\play\cdot w$ the partial play $\playFk \cdot (w,i,?)\cdot (w,j)$ where $(w,j) = \phiFk(\playFk \cdot (w,i,?))$.
		\item If $\play$ ends in some vertex $v\in V_\Ni$ and if \Nature moves to some vertex $w$ then we associate with $\play\cdot w$ the partial play $\playFk \cdot \mu \cdot(w,\mu(w))$ where $\mu=\phiFk(\playFk)$.
	\end{itemize}
	Note that one easily verifies that $\playFk$ is a partial play in $\gameFk$ where \Eloise respects $\phiFk_\Ei$ and such that $\rho(\playFk)=\play$.
	
	In the remaining we will prove that $\CL{\phi_\Ei}\leq k$, which implies that $\LVal{\game}\leq k$. The proof goes by contradiction, assuming that \Abelard has a strategy $\phi_\Ai$ in $\game$ such that $\cardinal{\outcomes{v_0}{\phi_\Ei}{\phi_\Ai}\setminus\WC}>k$. Using, $\phi_\Ai$ we define a strategy for \Abelard in $\gameFk$ as follows, where we let $\play=\rho(\playFk)$:
	\begin{itemize}
		\item $\phiFk_\Ai(\playFk)=(w,i,?)$ if $\playFk$ ends in some vertex $(v,i)$ with $v\in V_\Ai$ and $\phi_\Ai(\play)=w$;
		\item $\phiFk_\Ai(\playFk) = (w,\mu(w))$ if $\playFk$ ends in some vertex $\mu$ and $w$ is such that $\mu(w)<\cardinal{\outcomes{\play}{\phi_\Ei}{\phi_\Ai}\setminus\WC}$; in case several such $w$ exist $\phiFk_\Ai$ chooses one minimising $\mu(w)$. Moreover, in case $\mu(w)=0$ the strategy $\phiFk_\Ai$ simply exhibit from that point a losing play extending the current one, which exists as $\cardinal{\outcomes{\play}{\phi_\Ei}{\phi_\Ai}\setminus\WC}\geq 1$.
	\end{itemize}
	Remark that in the above definition (second item), the existence of such a $w$ is verified by induction and is ensured by the initial assumption that $\cardinal{\outcomes{v_0}{\phi_\Ei}{\phi_\Ai}\setminus\WC}>k$ together with the definition of $\phiFk_\Ai$ when the play ends in a $\mu$-vertex. More precisely in a partial play $\playFk$ ending in a vertex $(v,i)$ we always have that $i<\cardinal{\outcomes{\play}{\phi_\Ei}{\phi_\Ai}\setminus\WC}$. 
	
	We will show that the play $\playFk$ obtained when \Eloise respects $\phiFk_\Ei$ and  \Abelard respects $\phiFk_\Ai$ is won by \Abelard, hence leading a contradiction. First note that if $\playFk$ eventually visits a vertex of the form $(v,0)$ then by the definition of $\phiFk_\Ai$ one has $\play=\rho(\playFk)\notin \WC$. Hence, it suffices to prove that $\playFk$ eventually visits a vertex of the form $(v,0)$. Assume this is not the case. Hence, after some point all (main) vertices in $\playFk$ are of the form $(v,i)$ with the same integer $i$: call $\playFk'$ the prefix of $\playFk$ ending in the first such vertex and consider the set $\outcomes{\rho(\playFk')}{\phi_\Ei}{\phi_\Ai}$. This set contains at least $i+1$ plays not in $\WC$. But as after $\playFk'$ the integer stays equal to $i$ it implies that $\cardinal{\outcomes{\rho(\playFk')}{\phi_\Ei}{\phi_\Ai}\setminus\WC}\leq 1$ (the only possibly losing play being $\play$ as all other possible move of \Nature leads to a situation where no more play is losing for \Eloise as otherwise $\strat_\Ai$ would indicate to mimic it). But as $i\geq 1$ and $\cardinal{\outcomes{\rho(\playFk')}{\phi_\Ei}{\phi_\Ai}\setminus\WC}\geq i$ it leads a contradiction. Hence, $\playFk$ eventually visits a vertex of the form $(v,0)$ and therefore, as already noted, it implies that $\playFk\notin\WCFk$. Hence, this contradicts the fact that $\phiFk_\Ei$ is losing and concludes the proof of the converse implication.
	
	The fact that if $\phiFk_\Ei$ has finite-memory then so does $\phi_\Ei$ is by definition.
\end{proof}

\subsection{A Game to Decide the Existence of a topologically-Good Strategy}\label{section:perfectTopo}

Our goal in this section is to design a technique to decide whether \Eloise has a topological good strategy in a perfect-information game with Nature. {Unfortunately, we do not know how to obtain results in the general case, and therefore we focus on games where \Abelard is not playing (\ie one-player game with Nature). However, remember that, as already mentioned in Remark~\ref{rk:Parys}, if the underlying arena has an MSO decidable theory and if the winning condition is $\omega$-regular, Corollary~\ref{cor:Parys-parity} implies decidability of the existence of a topologically-good strategy in the setting where both \Eloise and \Abelard play.}

\subsubsection{Large Sets of Branches and Dense Set of Nodes}

We start by recalling simple results from \cite{CarayolS17} that provide a useful characterisation of large sets of branches in a tree. For this fix a $D$-tree $t$ for some set of directions $D$.
Call a set of nodes $W\subseteq t$ \defin{dense} if  $\forall u\in t$, $\exists v\in W$ such that $u\prefix v$. Given a dense set of nodes $W$, the set $\mathcal{B}(W)$ of {branches supported by $W$} is defined as the set of branches $\pi$ that have infinitely many prefixes in $W$. Formally, 
$$\mathcal{B}(W)=\{\alpha\in D^\omega \mid \exists (u_i)_{i\geq 0}\in W^\omega  \text{such that $\alpha$ is the limit of } (u_i)_{i\geq 0}\}$$
 Using the existence of decomposition-invariant winning strategies in Banach-Mazur games, the following lemma from \cite[Lemma~5]{CarayolS17} characterises large sets of branches (we repeat the proof for sake of completeness).

\begin{lemma}\label{lemma:largeVSdense}
Let $t$ be a $D$-tree for some $D$ and  $B$ be a {Borel} set of branches in $t$. Then $B$ is large if and only if there exists a dense set of nodes  $W\subseteq t$ such that $\mathcal{B}(W) \subseteq B$.
\end{lemma}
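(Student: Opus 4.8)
The plan is to deduce both implications from the Banach--Mazur characterisation of large sets, together with the folk result recalled above that, for \emph{Borel} sets of branches, a winning strategy for \Eloise in the Banach--Mazur game may be taken \emph{decomposition-invariant}. Recall that such a strategy is a map $f:t\rightarrow D^+$; I will freely pass between $f$, the nodes $v\cdot f(v)$ it designates as \Eloise's positions, and the set $\mathcal{B}(f)$ of branches it produces. Throughout I use that $t$ has no leaf and that $f(u)\in D^+$ is always nonempty.

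For the direction from a dense set to largeness ($\Leftarrow$), suppose $W\subseteq t$ is dense with $\mathcal{B}(W)\subseteq B$. I would build a decomposition-invariant strategy $f$ for \Eloise all of whose moves land in $W$. For each node $u$, first pick a child $ud\in t$ and then, using density, a node $\delta(u)\in W$ with $ud\prefix\delta(u)$; since $u\prefixstrict\delta(u)$ I set $f(u)\in D^+$ to be the suffix with $u\cdot f(u)=\delta(u)$. Every play respecting $f$ then has all of \Eloise's positions in $W$, so its branch has infinitely many prefixes in $W$ and hence lies in $\mathcal{B}(W)\subseteq B$. Thus $\mathcal{B}(f)\subseteq B$, and the Banach--Mazur theorem yields that $B$ is large. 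Note that Borel-ness is not needed for this implication.

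For the converse ($\Rightarrow$), assume $B$ is large. By the Banach--Mazur theorem and the folk result --- the only place Borel-ness is used --- there is a decomposition-invariant strategy $f:t\rightarrow D^+$ with $\mathcal{B}(f)\subseteq B$. I would set $W=\{v\cdot f(v)\mid v\in t\}$, which is dense since $u\prefixstrict u\cdot f(u)\in W$ for every $u$. It then suffices to prove $\mathcal{B}(W)\subseteq\mathcal{B}(f)$, so that $\mathcal{B}(W)\subseteq B$. To this end, take a branch $\pi$ with infinitely many prefixes in $W$, writing each such prefix as $w=v_w\cdot f(v_w)$ with $v_w\prefixstrict w\prefix\pi$. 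The key observation is that $w\mapsto v_w$ is injective (as $v_w$ determines $w$), so among these infinitely many prefixes of $\pi$ the lengths $|v_w|$ are unbounded. I can then realise $\pi$ as a play respecting $f$ inductively: starting from the root, given the current \Eloise position $p_i\prefix\pi$ I choose $w$ with $|v_w|>|p_i|$, forcing $p_i\prefixstrict v_w\prefix\pi$; \Abelard moves from $p_i$ to $v_w$ and \Eloise, following $f$, extends it to $v_w\cdot f(v_w)=w=:p_{i+1}$. All intermediate nodes are prefixes of $\pi$ (hence in $t$), every move is nonempty, and $|p_i|\rightarrow\infty$, so the branch of this play is exactly $\pi$ and $\pi\in\mathcal{B}(f)\subseteq B$.

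The main obstacle is precisely this inclusion $\mathcal{B}(W)\subseteq\mathcal{B}(f)$: membership in $\mathcal{B}(W)$ only records, passively, that $\pi$ meets the designated nodes $v\cdot f(v)$ infinitely often, whereas one must produce an \emph{actual} play of the Banach--Mazur game consistent with $f$ and having branch $\pi$. The injectivity of $w\mapsto v_w$ and the resulting unboundedness of $|v_w|$ are what make the inductive realisation go through (ensuring each \Abelard step $p_i\prefixstrict v_w$ is a genuine nonempty move); the remaining verifications are routine bookkeeping.
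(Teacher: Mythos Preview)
Your proof is correct and follows essentially the same approach as the paper: both directions go through the Banach--Mazur characterisation with decomposition-invariant strategies, taking $W=\{v\cdot f(v)\mid v\in t\}$ in the forward direction and building $f$ from $W$ in the backward one. Your handling of the key inclusion $\mathcal{B}(W)\subseteq\mathcal{B}(f)$ is in fact slightly more careful than the paper's (which asserts the $|u_i|$ are strictly increasing when only unboundedness is immediate), and your explicit passage through a child in the $(\Leftarrow)$ direction cleanly guarantees the strict prefix needed for $f(u)\in D^+$.
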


\begin{proof}
Assume that $B$ is large and let $f$ be a decomposition-invariant strategy for \Eloise in the associated Banach-Mazur game {(recall that we assumed $B$ to be Borel)}. Consider the set:
\[
 W = \{ vf(v) \mid v \in \{0,1\}^* \}.
\]
The set $W$ is dense (as for all $v \in \{0,1\}^*$, $v \prefixstrict vf(v) \in W$). We claim that $\mathcal{B}(W)$ is included in $B$. Let $\pi$ be a branch 
in $\mathcal{B}(W)$. As $\pi$ has infinitely many prefixes in $W$, there exists
a sequence of words $u_{1},u_{2},\cdots$ such that $u_{1}f(u_{1}) \prefixstrict u_{2}f(u_{2}) \prefixstrict \cdots \prefixstrict \pi$. As the lengths of the $u_{i}$ are strictly increasing, there exists a sub-sequence $(v_i)_{i\geq 1}$ of $(u_i)_{i\geq 1}$ such that for all
$i \geq 1$, $v_{i}f(v_{i}) \prefixstrict v_{i+1}$. Now, consider the play in the Banach-Mazur game where \Abelard first moves to $v_{1}$ and then \Eloise responds by going to $v_{1}f(v_{1})$. Then \Abelard moves to $v_{2}$ (which is possible as $v_{1}f(v_{1}) \prefixstrict v_{2}$) and \Eloise moves to $v_{2}f(v_{2})$. And so on. In this play \Eloise respects the strategy $f$ and therefore wins. Hence, the branch $\pi$ associated with this play belongs to $B$.

Conversely let $W$ be a dense set of nodes such that $\mathcal{B}(W) \subseteq B$. To show that $B$ is large, we define a decomposition-invariant strategy $f$ for \Eloise in the associated Banach-Mazur game.
 For all nodes $u$ we pick  $v$ of $W$ such that $u$ is a strict prefix of $v$ (since $W$ is dense there must always exist such a $v$). Let $v=uu'$ and fix $f(u)=u'$. A play where \Eloise respects $f$ goes through infinitely many 
nodes in $W$ (as $f$ always points to an element in $W$). Hence, the branch associated with the play belongs to $\mathcal{B}(W)\subseteq B$ which shows that $f$ is winning for \Eloise.\end{proof}

In order to describe a dense set of nodes, we mark a path to this set in the tree as follows. Let $t$ be a tree. A \defin{direction mapping} is a mapping $d:t \rightarrow D$, and given a set of nodes $W$, we say that $d$ \emph{points to $W$} if for every node $u$ there exists $d_1,\ldots,d_k\in D$ such that $u d_1\cdots d_k \in W$ and for all $1\leq j\leq k$, $d_j = d(u d_1 \cdots d_{j-1})$. We have the following result from \cite[Lemma~6]{CarayolS17} (we repeat the proof for sake of completeness).

\begin{lemma}\label{directions}
A  set of nodes $W$ is dense if and only if there exists a direction mapping that points to $W$.
\end{lemma}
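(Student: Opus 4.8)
The plan is to prove the two implications separately, the right-to-left one being immediate from the definitions and the converse requiring a small potential (distance-to-$W$) argument.

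For the easy direction, suppose a direction mapping $d$ points to $W$. Then by the very definition of ``points to'', for every node $u\in t$ there are directions $d_1,\dots,d_k\in D$ with $ud_1\cdots d_k\in W$; setting $v=ud_1\cdots d_k$ gives an element of $W$ with $u\prefix v$ (taking $v=u$ when $k=0$), so $W$ is dense.

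For the converse, I would assume $W$ is dense and construct a suitable direction mapping. The key device is a distance-to-$W$ function: for $u\in t$ set
\[
 \delta(u)=\min\{\,|v|-|u| \mid v\in W \text{ and } u\prefix v\,\}.
\]
This is well defined since density makes the set nonempty and the quantity ranges over $\N$, so the minimum exists; moreover $\delta(u)=0$ exactly when $u\in W$. I then define $d$ as follows: if $\delta(u)>0$, pick a witness $v\in W$ realising the minimum and let $d(u)$ be the first direction on the path from $u$ to $v$, i.e. the unique $a\in D$ with $ua\prefix v$ (such $a$ exists and $ua\in t$ by prefix-closure); if $u\in W$, set $d(u)$ to be any direction.

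The substantive point to verify is that iterating $d$ from any node reaches $W$, and this is exactly where minimality is used: whenever $\delta(u)>0$ one has $\delta(u\cdot d(u))=\delta(u)-1$. Writing $a=d(u)$ and using a minimal witness $v\in W$ with $ua\prefix v$, on one hand $\delta(ua)\le|v|-|ua|=\delta(u)-1$; on the other hand every $v'\in W$ with $ua\prefix v'$ also satisfies $u\prefix v'$, whence $|v'|-|ua|\ge\delta(u)-1$ and thus $\delta(ua)\ge\delta(u)-1$. Equality follows. Consequently, starting from any $u$ and following $d_j=d(ud_1\cdots d_{j-1})$, the value of $\delta$ strictly decreases at each step and reaches $0$ after exactly $k=\delta(u)$ steps, so $ud_1\cdots d_k\in W$; this is precisely the statement that $d$ points to $W$. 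The only genuinely delicate step is this termination argument, which would fail for an arbitrarily chosen family of witnesses: selecting a minimal-distance witness (equivalently, forcing $\delta$ to decrease along $d$) is the crux that makes the deterministic path converge to $W$.
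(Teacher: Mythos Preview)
Your proof is correct. Both directions are handled cleanly, and the potential argument via the distance function $\delta$ makes termination transparent.

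Your route differs from the paper's in a meaningful way. The paper constructs $d$ by processing nodes (implicitly in an order such as length-lexicographic): whenever it reaches a node $v$ with $d(v)$ still undefined, it picks an \emph{arbitrary} element $vd_1\cdots d_k\in W$ and defines $d$ along the whole path $v,vd_1,\ldots,vd_1\cdots d_{k-1}$ in one shot; correctness then follows because each node inherits a complete $d$-path to $W$ from the moment it is treated. Your approach is instead local and greedy: you define $d(u)$ one step at a time as the first move toward a \emph{nearest} element of $W$, and you establish termination via the strictly decreasing potential $\delta$. What your argument buys is an explicit, self-contained termination proof that does not rely on the bookkeeping of which nodes are already defined (the paper's construction needs the observation that defining $d$ along a descending path never conflicts with earlier assignments, which it leaves implicit). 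Conversely, the paper's construction shows that minimality of the witness is not essential: any choice of $W$-element works, provided one commits to the entire path at once.
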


\begin{proof}
Assume that $W$ is dense. We define $d(v)$ by induction on $v$ as follows. Let $v$ such that $d(v)$ is not yet defined, we pick a node $v d_1\cdots d_k \in U$ (there must exists one since $W$ is dense), and for all $j\leq k$ we define 
\[d(v i_1 \cdots d_{j-1})= d_j.\] The mapping is defined on every node and satisfies the requirement by definition. The other implication is straightforward (for all nodes $v$, there exists $v d_1\cdots d_k \in W$).
\end{proof}

\subsubsection{Simulation Game}

Fix a graph $G=(V,E)$, an {arena} $\arena=(G,\VE,\VA,\VN)$ where we have $\VA=\emptyset$  (\ie \Abelard is not part of the game) and a game $\game=(\arena,v_0,\WC)$ where we assume that $\WC$ is \emph{Borel}. 
{We assume that $v_0\in\VE$ and that the game is turn based, meaning that along a play the pebble alternatively visits $\VE$ and $\VN$, which formally means that $E\subseteq \VE\times \VN\cup \VN\times \VE$. This restriction is not essential but highly simplifies the presentation.}

We design a two-player perfect-information game without Nature such that \Eloise wins in $\gameT=(\arenaT,v_0,\WCT)$ if and only if she has a topologically-good strategy in $\game$.

The arena $\arenaT$ of the game $\gameT$ is quite similar to $\arena$ and the main intuition is that \Eloise mimics a play against \Nature in $\game$ and additionally describes a dense set of nodes $W$ (thanks to a direction mapping and an explicit annotation of nodes in $W$) in the tree of possible outcomes.  \Abelard simulates the moves of \Nature and he tries either to prove that $W$ is not dense or that there is a losing play in $\mathcal{B}(W)$. 
Formally one defines $\graphT = (\VT,\ET)$ where $\VT=\VT_\Ei\cup \VT_\Ai$, $\VT_\Ei = V_\Ei$, $\VT_\Ai = V_\Ni\times V_{\Ei} \times\{\top,\bot\}$ and 
$$
\ET=    \{(v,(v',w,b)) \mid v'\in E(v),\ w \in E(v')\text{ and } 
 b\in\{\top,\bot\}\} 
 \cup\{((v,w,b),w') \mid w'\in E(v)\}	
$$

Intuitively in a partial play $\lambda$, by choosing an edge from $v$ to $(v',w,b)$ \Eloise indicates that the direction mapping in $\lambda\cdot v'$ is to go to $w$; moreover if $b=\top$ she indicates that $\lambda\cdot v'$ is in the dense set $W$ (remark that, {due to the turn based nature of the game}, one can safely assume that the element in $W$ are always partial plays ending in a vertex in $\VN$).
{A play is winning for \Eloise if either it satisfies the winning condition while visiting infinitely many nodes marked as belonging to the dense set or if at some point no more position in $W$ is reached while \Abelard infinitely often selects a direction that is not the one given by the direction mapping (\ie he does not let \Eloise a chance to get to a position in $W$).}
Formally, the winning condition $\WCT$ is defined by
\begin{multline*}
\WCT = \{v_0(v_0',w_0,b_0)v_1(v_1',w_1,b_1)v_2\cdots\mid v_0v'_0v_1v'_1v_2v'_2\cdots \in \WC 
   \textrm{ and } \exists^{\infty} j \text{ s.t. } b_j=\top \}\\
 \cup  \{v_0(v_0',w_0,b_0)v_1(v_1',w_1,b_1)v_2\cdots\mid \exists^{<\infty} j \text{ s.t. }b_j=\top 
\text{ and }\exists^{\infty} j \text{ s.t. }v_{j+1}\neq w_j)\}.	
\end{multline*}

The following result connects the  games $\game$ and $\gameT$.

\begin{theorem}\label{thm:topo-perfect}
\Eloise has a topologically-good strategy in $\game$ if and only if she has a winning strategy in $\gameT$.

More precisely, from a winning strategy (\resp positional strategy, \resp finite-memory strategy) $\phiT_{\Ei}$ of \Eloise in $\gameT$, we can define a topologically-good strategy (\resp positional strategy, \resp finite-memory strategy) $\phi_{\Ei}$ for \Eloise in $\game$.
\end{theorem}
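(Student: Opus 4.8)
The plan is to move between the topological notion of goodness in $\game$ and the bookkeeping performed in $\gameT$ by means of the two characterisations recalled above: Lemma~\ref{lemma:largeVSdense} (a Borel set is large iff some dense set of nodes $W$ has $\mathcal{B}(W)$ contained in it) and Lemma~\ref{directions} (density of $W$ iff some direction mapping points to $W$). Since $\VA=\emptyset$, the only opponent of \Eloise is \Nature, and $\phi_\Ei$ is topologically-good precisely when the set of plays it wins is large in the tree of outcomes $T$ of $\game$ under $\phi_\Ei$. The $\top/\bot$ component of \Eloise's moves in $\gameT$ is designed to record membership in $W$, the middle component $w$ records the direction mapping, and \Abelard, simulating \Nature, may either obey the suggested direction or leave it. Using the turn-based assumption we may always take the elements of $W$ to be partial plays ending in $\VN$.

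First I would treat the implication from $\gameT$ to $\game$. Given a winning strategy $\phiT_\Ei$, to each partial play $\pi$ of $\game$ that ends in $\VE$ and respects (the yet-to-be-defined) $\phi_\Ei$ there corresponds a unique partial play of $\gameT$ respecting $\phiT_\Ei$ and projecting to $\pi$; I set $\phi_\Ei(\pi)$ to be the first component of the move prescribed there by $\phiT_\Ei$. Let $W$ be the set of $\VN$-ending partial plays that receive the mark $\top$ in the corresponding $\gameT$-play. Density of $W$ is obtained by contraposition using the second disjunct of $\WCT$: if some node $u$ of $T$ had no descendant in $W$, then in the play where \Abelard reaches $u$ and afterwards always obeys the suggested direction (a legal choice since $w\in E(v')$ for every suggested $w$), all marks past $u$ are $\bot$ and no deviation $v_{j+1}\neq w_j$ ever occurs again, so this play respects $\phiT_\Ei$ yet lies outside $\WCT$ — contradicting that $\phiT_\Ei$ wins. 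Moreover $\mathcal{B}(W)\subseteq\WC$: a branch with infinitely many prefixes in $W$ induces a $\gameT$-play with infinitely many $\top$, which can satisfy $\WCT$ only through its first disjunct, forcing the projected play into $\WC$. By Lemma~\ref{lemma:largeVSdense} the set of winning plays is large, i.e.\ $\phi_\Ei$ is topologically-good.

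For the converse I would start from a topologically-good $\phi_\Ei$. Largeness of the winning set in $T$ yields, via Lemma~\ref{lemma:largeVSdense}, a dense set $W$ (taken among $\VN$-ending nodes) with $\mathcal{B}(W)\subseteq\WC$, and via Lemma~\ref{directions} a direction mapping $d$ pointing to $W$. Define $\phiT_\Ei$ so that at an $\VE$-node $\pi$ with last vertex $v$, \Eloise moves to $(\,\phi_\Ei(\pi),\, d(\pi\,\phi_\Ei(\pi)),\, b\,)$, where $b=\top$ iff $\pi\,\phi_\Ei(\pi)\in W$. To check $\phiT_\Ei$ wins, fix a play respecting it. If it carries infinitely many $\top$, its projection belongs to $\mathcal{B}(W)\subseteq\WC$ and the first disjunct of $\WCT$ holds. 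If it carries only finitely many $\top$, I must show \Abelard deviates infinitely often; otherwise he would, past some node $u'$, always follow $d$, and since $d$ points to $W$ the play would then trace the $d$-path from $u'$ into a node of $W$, producing a $\top$ after the last $\top$ — a contradiction. Hence the second disjunct holds and the play is won.

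The positional and finite-memory transfers are routine bookkeeping: $\phi_\Ei$ reads off the first component of $\phiT_\Ei$ after reconstructing the (deterministically determined) $\gameT$-play from the $\game$-play, so a positional (resp.\ finite-memory) $\phiT_\Ei$ gives a positional (resp.\ finite-memory) $\phi_\Ei$. I expect the main obstacle to be the density arguments: in each direction one must line up the dichotomy ``$W$ unreachable from some point'' versus ``\Abelard obeys the direction mapping'', as encoded by the second disjunct of $\WCT$, with the characterisations of large sets in Lemmas~\ref{lemma:largeVSdense} and~\ref{directions}; everything else is the projection correspondence between plays of $\game$ and $\gameT$.
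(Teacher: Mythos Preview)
Your proposal is correct and follows essentially the same approach as the paper's proof: both directions hinge on the characterisation of large sets via a dense set $W$ (Lemma~\ref{lemma:largeVSdense}) together with a direction mapping pointing to it (Lemma~\ref{directions}), with the $\top/\bot$ marks encoding $W$ and the middle component encoding the direction mapping. The only cosmetic differences are that you treat the implication from $\gameT$ to $\game$ first and argue density of $W$ directly by contradiction (having \Abelard obey the suggested direction forever past a putative node with no $W$-descendant), whereas the paper explicitly names the direction mapping $d$ extracted from the second component and invokes Lemma~\ref{directions}; these are the same argument.
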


\begin{proof}
Assume that \Eloise has a topologically-good strategy in $\game$. Call $\phi$ this strategy and let $t_\phi$ be the set of all partial plays starting from $v_0$ where \Eloise respects $\phi$. By definition $t_\phi$ is a tree and its branches are those plays in $\game$ where \Eloise respects $\phi$. As $\phi$ is topologically-good the set of branches in $t_\phi$ that belongs to $\Omega$ is large and therefore thanks to Lemma~\ref{lemma:largeVSdense} it 
contains a dense set of nodes {$W$} that, using  Lemma~\ref{directions}, can be described by a direction mapping $d$.

{Define a strategy $\phiT$ in $\gameT$ for \Eloise by letting $\phiT(v_0(v_0',w_0,b_0)v_1(v_1',w_1,b_1)\dots v_k)=(v_{k}',w_{k},b_{k})$ where $v_{k}' = \phi(v_0v_0'v_1v_1'\dots v_k)$, $w_{k} = d(v_0v_0v_1v_1'\dots v_kv_k')$ and $b_{k} = \top$ if $v_0v_0'v_1v_1'\dots v_kv_k'\in W$ and $b_{k} = \bot$ otherwise. }

Now consider a play {$\playT=v_0(v_0',w_0,b_0)v_1(v_1',w_1,b_1)\dots$} in $\gameT$ where \Eloise respects $\phiT$: if it goes infinitely often through vertices in $\VN\times\VE\times\{\top\}$ then $v_0v_0'v_1v_1'\dots$ is an infinite branch in $t_\phi$ that goes through infinitely many nodes in $W$ hence, belongs to $\WC$ and so $\playT\in\WCT$; otherwise, thanks to the direction mapping and the definition of $\phiT$ it follows that if eventually \Abelard always chooses to go from $(v',w,b)$ to $w$ then one eventually reaches a vertex in $\VN\times\VE\times\{\top\}$ and therefore $\playT\in\WCT$.

Conversely, assume that \Eloise has a winning strategy $\phiT$ in $\gameT$.
We define a strategy $\phi$ for \Eloise in $\game$ as follows. The strategy $\phi$ is defined so that with a partial play $\play$ in $\game$ (where she respects $\phi$) is associated a partial play $\playT$ in $\gameT$ (where she respects $\phiT$). Initially $\play=\playT=v_0$. Let $\play=v_0v'_0v_1v_1'\cdots v_k$ be a partial play where she respects $\phi$ and let $\playT=v_0(v'_0,w_0,b_0)v_1(v_1',w_1,b_1)\cdots v_k$; then call $\phiT(\playT) = (v'_k,w'_k,b_k)$;  define $\phi(\play) = v'_k$ and $\widetilde{\play v'_k} = \playT  (v'_k,w'_k,b_k)$. Now let $t_\phi$ be the set of all partial plays starting from $v_0$ where \Eloise respects $\phi$. Define the set of nodes $W$ in $t_\phi$ as those partial plays that ends in $\VN$ and such that $\playT$ ends in a vertex in $\VN\times\VE\times\{\top\}$ and define a direction mapping $d$ in $t_\phi$ by letting, for any $\lambda$ ending in $\VN$, $d(\lambda) = w$ where $w$ is such that $\playT$ ends in a vertex in $\VN\times\{w\}\times\{\bot,\top\}$ (in other nodes there is a single son so there is only one way to define $d$). As $\phiT$ is winning one easily deduces that $d$ is a direction mapping that points to $W$ and that $\mathcal{B}(W)\subseteq \Omega$. Therefore, the subset of branches of $t_\phi$ that satisfies $\Omega$ is large, meaning that $\phi$ is topologically-good.
\end{proof}

\section{Perfect-Information Games with Nature: Some Consequences}
\label{section:consequences-perfect}

We now discuss several consequences of our results in the perfect-information setting considered so far.

\subsection{The Special Case of Parity Games Played on Finite Arenas}

In the following we argue that concepts of almost-surely winning strategies and of topologically-good strategies coincide in the special case where the arenas are \emph{finite} and where one considers an $\omega$-regular winning condition.

First recall that in the setting of stochastic games played on finite graphs and equipped with an $\omega$-regular winning condition, it is well-known (see \eg \cite{ChatterjeePHD}) that finite-memory strategies suffices for both players. Formally \Eloise has an almost-surely strategy if and only if she has a finite-memory strategy $\strat_\Ei$ such that for every finite-memory strategy $\strat_\Ai$ of \Abelard one has $\mesure{v_0}{\strat_\Ei}{\strat_\Ai}(\WC)=1$.

A similar property actually holds for the topological setting.

\begin{lemma}
Let $\game=(\arena,v_0,\WC)$ be a game with an $\omega$-regular winning condition played on a finite arena. Then \Eloise has a topologically-good strategy if and only if she has a finite-memory strategy $\strat_\Ei$ such that for every finite-memory strategy $\strat_\Ai$ of \Abelard the set  $\outcomes{v_0}{\phi_\Ei}{\phi_\Ai}\setminus \WC$ of losing plays for \Eloise is meager in the tree $\treeGame{v_0}{\phi_\Ei}{\phi_\Ai}$.
\end{lemma}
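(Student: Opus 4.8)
The plan is to recognise this statement as essentially a specialisation of the machinery already developed, namely a combination of Lemma~\ref{lemma:settingsVSLGood} with Corollary~\ref{cor:Parys-finite-graph-strat-finite-memory}. First I would recast the $\omega$-regular winning condition $\WC$ in the form $\WC_K$ required by Lemma~\ref{lemma:settingsVSLGood}. Since the arena is finite I may take $\colors=V$ and $\col$ to be the identity, and set $K=\{c_1c_2\cdots\mid v_0c_1c_2\cdots\in\WC\}$, which is $\omega$-regular because $\WC$ is. As every relevant play starts in $v_0$, the condition $\WC_K$ agrees with $\WC$ on all plays from $v_0$, so the outcomes and the winning/losing classification of $\game$ are unchanged, and $\game$ now has the shape treated by Lemma~\ref{lemma:settingsVSLGood}.

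Next I would invoke point~(3) of Lemma~\ref{lemma:settingsVSLGood}, which states that \Eloise has a topologically-good strategy in $\game$ if and only if she has an $\mathcal{L}_{\mathrm{Topo}}$-good strategy, where $\mathcal{L}_{\mathrm{Topo}}$ is the set of $\colors$-labelled $V$-trees whose set of branches labelled by a sequence in $K$ is large. The one additional ingredient needed is that, because $K$ is $\omega$-regular, the language $\mathcal{L}_{\mathrm{Topo}}$ is itself MSO-definable; this is exactly the fact used in the proof of Corollary~\ref{cor:Parys-parity} and established in \cite{CarayolS17}.

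Having an MSO-definable $\mathcal{L}=\mathcal{L}_{\mathrm{Topo}}$, I would then apply Corollary~\ref{cor:Parys-finite-graph-strat-finite-memory} to the finite arena $\arena$. That corollary asserts the equivalence of three statements; the equivalence of its first and third items reads: \Eloise has an $\mathcal{L}_{\mathrm{Topo}}$-good strategy if and only if she has a finite-memory strategy $\strat_\Ei$ such that for every finite-memory strategy $\strat_\Ai$ of \Abelard one has $\treeGame{v_0}{\strat_\Ei}{\strat_\Ai}\in\mathcal{L}_{\mathrm{Topo}}$. Finally I would unwind this last membership: by the very definition of $\mathcal{L}_{\mathrm{Topo}}$ it means that, in the tree $\treeGame{v_0}{\strat_\Ei}{\strat_\Ai}$, the set of winning branches is large, equivalently the set $\outcomes{v_0}{\phi_\Ei}{\phi_\Ai}\setminus\WC$ of losing plays is meager. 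Chaining this with the identification of the left-hand side through Lemma~\ref{lemma:settingsVSLGood}(3) yields precisely the claimed equivalence.

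The genuinely interesting direction is the \emph{backward} one, asserting that a finite-memory strategy which is good only against the finite-memory opponents of \Abelard is in fact topologically-good against \emph{all} opponents; but this requires no new work here, as it is packaged inside Corollary~\ref{cor:Parys-finite-graph-strat-finite-memory}. Consequently the only real content to verify is the routine recasting of $\WC$ as a $\WC_K$ condition and the MSO-definability of $\mathcal{L}_{\mathrm{Topo}}$ for $\omega$-regular $K$; the latter is the single nontrivial input, and it is imported from \cite{CarayolS17}. I therefore expect the proof to be short, consisting essentially of assembling these two earlier results, with the MSO-definability of $\mathcal{L}_{\mathrm{Topo}}$ being the crux on which everything rests.
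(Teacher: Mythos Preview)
Your proposal is correct and takes essentially the same approach as the paper: both rest on the MSO-definability of $\mathcal{L}_{\mathrm{Topo}}$ (imported from \cite{CarayolS17}) together with the finite-graph/finite-memory machinery around Corollary~\ref{cor:Parys}. The paper's proof cites Corollary~\ref{cor:Parys} and then re-derives inline the argument that \Abelard may be restricted to finite-memory strategies, whereas you package this by invoking Corollary~\ref{cor:Parys-finite-graph-strat-finite-memory} directly after the recasting via Lemma~\ref{lemma:settingsVSLGood}; this is slightly cleaner but otherwise identical in content.
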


\begin{proof}
	In the case where \Abelard is not part of the game it is a direct consequence of Theorem~\ref{thm:topo-perfect} together with the fact that two-player games on finite graphs with an $\omega$-regular enjoy finite-memory strategies. 
	In the general setting, it is a consequence of Corollary~\ref{cor:Parys}. Indeed, topologically-good strategies are MSO definable and therefore, when the graph is finite, they can be chosen to be regular (\ie implemented by a finite transducer) hence, be finite-memory. Now it remains to prove that one can without loss of generality restrict \Abelard's strategies to be finite-memory. For that, we use the same argument. Fix a \emph{finite-memory} strategy $\strat_\Ei$ of \Eloise: the set of \Abelard's strategies $\strat_\Ai$ such that the set  $\outcomes{v_0}{\phi_\Ei}{\phi_\Ai}\setminus \WC$ of losing plays for \Eloise is large in the tree $\treeGame{v_0}{\phi_\Ei}{\phi_\Ai}$ is MSO definable in a synchronised product of the arena together with a transducer implementing strategy $\strat_\Ei$, \ie it is MSO definable on a fixed finite graph. Hence, if this set is non-empty it contains a finite-memory strategy.
\end{proof}

The following relates the stochastic and the topological settings in the special case where the arenas are \emph{finite} and where one considers an $\omega$-regular winning condition. 
Note that here we make no assumption on the probability distribution put on the transitions.

\begin{theorem}\label{thm:topo-finite-regular}
Let $\game=(\arena,v_0,\WC)$ be a game with an $\omega$-regular winning condition played on a finite arena. Then \Eloise almost-surely wins if and only she wins in the topological sense.
\end{theorem}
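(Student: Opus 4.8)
The plan is to show that both \emph{almost-surely winning} and \emph{having a topologically-good strategy} reduce, on finite arenas with an $\omega$-regular condition, to the \emph{same} statement quantified over finite-memory strategies, and then to invoke the Völzer--Varacca equivalence between largeness and probability one on regular trees.

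First I would recall the two finite-memory characterisations. On the stochastic side, the paragraph preceding this theorem gives that \Eloise almost-surely wins if and only if she has a finite-memory strategy $\strat_\Ei$ such that $\mesure{v_0}{\strat_\Ei}{\strat_\Ai}(\WC)=1$ for every finite-memory strategy $\strat_\Ai$ of \Abelard. On the topological side, the lemma just proved gives that \Eloise has a topologically-good strategy if and only if she has a finite-memory strategy $\strat_\Ei$ such that, for every finite-memory strategy $\strat_\Ai$, the set $\outcomes{v_0}{\strat_\Ei}{\strat_\Ai}\cap\WC$ is large in $\treeGame{v_0}{\strat_\Ei}{\strat_\Ai}$. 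Thus both conditions have the shape ``$\exists$ finite-memory $\strat_\Ei$, $\forall$ finite-memory $\strat_\Ai$, $P(\strat_\Ei,\strat_\Ai)$'', and it suffices to prove that the two predicates $P$ agree pointwise, i.e. for every fixed pair of finite-memory strategies.

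The key step is the following observation. Once $\strat_\Ei$ and $\strat_\Ai$ are finite-memory, the only remaining branching comes from \Nature, and the tree $\treeGame{v_0}{\strat_\Ei}{\strat_\Ai}$ is the unfolding of the finite graph obtained as the synchronised product of the (finite) arena with the two finite memories: at \Eloise and \Abelard vertices the successor is fixed by the respective strategy, while at \Nature vertices all successors (and their probabilities) are kept. Hence $\treeGame{v_0}{\strat_\Ei}{\strat_\Ai}$ is a \emph{regular} tree, and since $\WC$ is $\omega$-regular, the set $\outcomes{v_0}{\strat_\Ei}{\strat_\Ai}\cap\WC$ is an $\omega$-regular set of branches of it, carrying the measure $\mesure{v_0}{\strat_\Ei}{\strat_\Ai}$ induced by \Nature's distributions. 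I would then apply the Völzer--Varacca theorem \cite{VolzerV12}, which states precisely that on a regular tree an $\omega$-regular set of branches is large if and only if it has probability one. This yields, for each fixed finite-memory pair, the pointwise equivalence
\[
\mesure{v_0}{\strat_\Ei}{\strat_\Ai}(\WC)=1
\quad\Longleftrightarrow\quad
\outcomes{v_0}{\strat_\Ei}{\strat_\Ai}\cap\WC \text{ is large in } \treeGame{v_0}{\strat_\Ei}{\strat_\Ai}.
\]

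To conclude, since this equivalence holds for \emph{every} finite-memory $\strat_\Ai$, the two inner universal statements coincide for each fixed $\strat_\Ei$; taking the existential over finite-memory $\strat_\Ei$ then shows that the two outer characterisations are equivalent, which is exactly the statement of the theorem. The main point to be careful about is the key step: one must check that fixing finite-memory strategies really produces a regular tree (through the explicit product construction) and that the measure appearing in Völzer--Varacca is the one generated by \Nature's distributions, so that the cited theorem applies verbatim. A secondary subtlety, implicit in the remark that no assumption is made on the probabilities, is that for finite stochastic games with $\omega$-regular objectives almost-sure winning depends only on the \emph{support} of \Nature's distributions and not on their exact values; this is what guarantees that the left-hand side above, and hence the whole equivalence, is independent of the chosen distribution (provided every successor receives positive probability).
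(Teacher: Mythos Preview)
Your proposal is correct and follows essentially the same route as the paper's proof: reduce both notions to quantification over finite-memory strategies (using the lemma and the stochastic folklore recalled just before the theorem), observe that for fixed finite-memory $\strat_\Ei,\strat_\Ai$ the outcome tree is regular, and invoke the V\"olzer--Varacca equivalence of largeness and probability~$1$ on regular trees. The paper presents the two implications separately (the same $\strat_\Ei$ is shown to work in both senses), while you phrase it as a pointwise equivalence of the inner predicates, but the argument is the same.
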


\begin{proof}
First recall that, as pointed in Section~\ref{subsection:topological}, topological and probabilistic largeness coincide for $\omega$-regular properties of regular trees.
Moreover as established above one can safely (in both setting) restrict to finite-memory strategies (for both \Eloise and \Abelard). 

Assume that \Eloise has a finite-memory almost-surely winning strategy $\strat_\Ei$. We claim that it is topologically-good. Indeed, consider a finite-memory strategy $\strat_\Ai$ of \Abelard. As the arena is finite and as both strategies have finite-memory, the tree $\treeGame{v_0}{\phi_\Ei}{\phi_\Ai}$ is regular and the set $\outcomes{v_0}{\phi_\Ei}{\phi_\Ai}\setminus \WC$ of losing plays has measure $0$ hence is meager. 

Conversely assume that \Eloise has a finite-memory topologically-good  strategy $\strat_\Ei$ and let us prove that it is almost-surely winning. Indeed, consider a finite-memory strategy $\strat_\Ai$ of \Abelard. As the arena is finite and as both strategies have finite-memory, the tree $\treeGame{v_0}{\phi_\Ei}{\phi_\Ai}$ is regular and the set $\outcomes{v_0}{\phi_\Ei}{\phi_\Ai}\setminus \WC$ of losing plays is meager hence has measure $0$. 
\end{proof}

We believe that Theorem~\ref{thm:topo-finite-regular} is an important result because it essentially means that in most of the situations (namely when restricting to both $\omega$-regular winning conditions and finite arenas) for which one can decide the existence of almost-surely winning strategies, then the concept is the same as being topologically-good. Moreover, as one can decide existence of topologically-good strategies for largest classes of games (as explained below in Section~\ref{subsection:Games-on-Infinite-Arenas-perfect}) it strengthen our belief that topologically-good strategies are a very valuable notion.

We now explain why Theorem~\ref{thm:topo-finite-regular} generalises previous work from \cite{VolzerV12} and \cite{AsarinCV10} (also see \cite{BrihayeHM15} for related questions). In \cite{VolzerV12} —~rephrased in our setting~— Varacca and Völzer considered (among many other things) games where \Nature plays alone and whose winning condition is $\omega$-regular and in particular they showed that if the arena is finite then the set of outcomes (\ie the set of all plays as \Nature plays alone) is large if and only if it has probability $1$, \ie topological and probabilistic largeness coincide for $\omega$-regular properties of finite Markov chains.

 A natural question, addressed by Asarin \emph{et al.} in~\cite{AsarinCV10}, is whether this is still true for Markov decision processes {(\ie a game with Eloise and Nature in the probabilistic setting)}. For this they introduced a notion of three player games\footnote{We change here the name of the players to stick to the presentation of this paper and use EBM-game instead of the original name, ABM-game.} (EBM-games) where \Eloise plays against \Abelard who is split into two sub-players —~Banach who is good and Mazur who is evil. Banach starts playing for \Abelard and after some time he decides to let Mazur play for a while and then Mazur let him play again and so on. \Eloise \emph{does not observe} who —~Banach or Mazur~— is acting for \Abelard. Say that \Eloise wins the game if she has a strategy such that Banach also has a strategy such that whatever Mazur does the winning condition is satisfied. The main result of \cite{AsarinCV10} is that for an EBM-game on a finite arena with an $\omega$-regular objective \Eloise has a winning strategy iff she has an almost-surely winning strategy in the \Eloise-\Nature game obtained by seing the “Banach/Mazur” player as the single stochastic player \Nature (for arbitrary probability distributions). 

This result is a corollary of Theorem~\ref{thm:topo-finite-regular} as it is easily seen that in the \Eloise-\Nature game obtained by merging the “Banach/Mazur” players as the single player \Nature, \Eloise has a topologically-good strategy if and only if \Eloise wins the EBM-game. Indeed, she has a topologically-good strategy if and only if she has a strategy so that in the induced Banach-Mazur game she has a strategy that wins against any strategy of \Abelard: hence, it suffices to see \Eloise in the Banach-Mazur game as Banach and \Abelard as Mazur. 

Remark that our approach differs from \cite{AsarinCV10} by the fact that we reason by reduction instead of providing an ad-hoc algorithm; moreover topologically-good strategies make sense also for two-player games with \Nature while EBM-games do not extend naturally to capture a second antagonistic player.

\subsection{Variant of Tree Automata}\label{section:tree-automata}

We now discuss consequences of our results in the cardinality setting for classes of automata on infinite trees.

A \defin{parity tree automaton} $\mathcal{A}$ is a tuple $\langle A, Q , q_{ini},\Delta,\col \rangle$ where $A$ is a finite {input alphabet}, $Q$ is a finite {set of states}, $\qini\in Q$ is the {initial state}, $\Delta \subseteq Q\times A \times Q \times Q$ is a {transition relation} and $\col:Q\rightarrow \colors$ is a colouring function. 

Given an $A$-labelled complete binary tree $t$, a \defin{run} of $\mathcal{A}$ over $t$ is a $Q$-labelled complete binary tree $\run$ such that 
\begin{enumerate}[(\itshape i\upshape)]
\item
 the root is labelled by the initial state, \emph{i.e.} $\run(\epsilon)=\qini $;
\item
 for every node $u\in\{0,1\}^*$, $(\run(u),t(u),\run(u\cdot0),\run(u\cdot1))\in\Delta$.
\end{enumerate}
A branch $\pi=\pi_0\pi_1\pi_2\cdots$ is \defin{accepting} in the run $\run$ if its labels satisfies the parity condition, \ie $\liminf (\col(\run(\pi_1\cdots \pi_i)))_{i\geq 0}$ is even; otherwise it is rejecting.

Classically, one declares that a tree $t$ is accepted by $\mathcal{A}$ if there exists a run of $\mathcal{A}$ on $t$ such that \emph{all} branches in it are accepting. One denotes by $L(\mathcal{A})$ the set of accepted trees and such a language is called \defin{regular}.

Several relaxations of this criterion have been investigated in~\cite{BN95,CHS14a,CarayolS17}. 
\begin{itemize}
\item \defin{Automata with cardinality constraints}. Among others one can consider the language $\LAccUnc{\mathcal{A}}$ of those trees for which there is a run with at least uncountably many accepting branches~\cite{BN95}, and the language $\LRejAtMostCount{\mathcal{A}}$ of those trees for which there is a run with at most countably many rejecting branches~\cite{CarayolS17}.
\item \defin{Automata with topological bigness constraints}: a tree belongs to $\LLarge{\mathcal{A}}$ if and only if there is a run whose set of accepting branches is large~\cite{CarayolS17}.
\item\defin{Qualitative tree automata}: a tree belongs to $\Lqual{\mathcal{A}}$ if and only if there is a run whose set of accepting branches has measure $1$~\cite{CHS14a}.
\end{itemize}

{Our results implies the following theorem~\cite{BN95,CarayolS17}, where by \emph{effectively regular} we mean that the language is regular and that one can effectively construct an accepting automaton (in the statement below, starting from $\mathcal{A}$).}

\begin{theorem}
For any parity tree automaton $\mathcal{A}$, $\LAccUnc{\mathcal{A}}$, $\LRejAtMostCount{\mathcal{A}}$ are effectively regular.
\end{theorem}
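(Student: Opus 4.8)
The plan is to express membership in each language as the winning of a two-player parity game and then read that game as an \emph{alternating} parity tree automaton, whose equivalence with nondeterministic parity automata (Muller--Schupp, Emerson--Jutla) yields effective regularity. In both cases \Eloise will play the role of building a run of $\mathcal{A}$ on the input tree $t$, and the directions $\{0,1\}$ of $t$ will be the moves that select a branch.

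For $\LRejAtMostCount{\mathcal{A}}$ I would associate with $t$ the game with \Nature $\game$ whose positions are pairs (node, state): at $(u,q)$ \Eloise reads $a=t(u)$, picks a transition $(q,a,q_0,q_1)\in\Delta$, and then \Nature picks a direction $d$, moving the play to $(ud,q_d)$; the condition $\WC$ is the parity condition on the visited states. Since \Abelard is absent, a strategy of \Eloise is exactly a run of $\mathcal{A}$ on $t$, the losing plays are the rejecting branches of that run, and $\LVal{\game}=\min_{\mathrm{run}}\cardinal{\{\text{rejecting branches}\}}$; thus $t\in\LRejAtMostCount{\mathcal{A}}$ iff $\LVal{\game}\le\aleph_0$. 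Theorem~\ref{theo:perfect:main-aleph} now produces a \Nature-free game $\gameL$, with an $\omega$-regular (indeed parity) winning condition by Remark~\ref{remark:determincay-leaking-perfect}, in which \Eloise wins iff $\LVal{\game}\le\aleph_0$. Inspecting the gadget, at each node \Eloise existentially chooses a transition and a direction $d^{*}$ she wishes to avoid, \Abelard universally chooses the actual direction $d$, and the play descends to the corresponding child recording whether \Abelard obeyed ($d\ne d^{*}$) or disobeyed; \Eloise wins a branch iff it is accepting \emph{or} she is obeyed only finitely often. This is verbatim the acceptance game of an alternating parity automaton $\mathcal{B}$ whose control state is an $\mathcal{A}$-state together with an obey/disobey bit and whose transition is $\bigvee_{(q,a,q_0,q_1)\in\Delta}\ \bigvee_{d^{*}\in\{0,1\}}\ \bigwedge_{d\in\{0,1\}}(d,(q_d,\beta(d,d^{*})))$, with acceptance ``parity $\vee$ finitely-many-obey''. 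Simulation then gives an equivalent nondeterministic parity automaton, and every step is effective.

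For $\LAccUnc{\mathcal{A}}$ the leaking value is the wrong quantity (it minimises losing plays, whereas we need \emph{many} accepting ones), so I would instead use Proposition~\ref{prop:continuumHypothesis}: a Borel set of branches is uncountable iff it contains a perfect set, i.e. iff the branches of some perfect subtree (one in which every node has two incomparable descendants that are again in the subtree) all lie in the set. Accordingly I would build an alternating parity automaton $\mathcal{B}'$ that, at each active node, existentially guesses a transition of $\mathcal{A}$ and whether this node is a \emph{split} of a witnessing perfect subtree: at a split both children remain active, otherwise exactly one child remains active and the other is sent to a trivially accepting state $\top$; the universal player then follows the active children. The acceptance condition requires of every active branch both the $\mathcal{A}$-parity condition and infinitely many splits; this parity-$\wedge$-Büchi condition is again a parity condition after a bounded increase of priorities. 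A run of $\mathcal{B}'$ on $t$ is exactly a run of $\mathcal{A}$ together with a perfect subtree all of whose branches are accepting, so $L(\mathcal{B}')=\LAccUnc{\mathcal{A}}$; simulation finishes this case as well.

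The step I expect to be most delicate is the identification of the reduced game with an alternating automaton for $\LRejAtMostCount{\mathcal{A}}$: one must check that \Abelard's single move in $\gameL$ really corresponds to the \emph{universal} resolution of the conjunction over directions in the membership game (so that one play of $\gameL$ matches one branch of $\mathcal{B}$'s computation), and that the composite objective ``parity $\vee$ finitely-many-obey'' is genuinely a parity condition. For $\LAccUnc{\mathcal{A}}$ the crux is instead the equivalence, powered by Proposition~\ref{prop:continuumHypothesis}, between ``infinitely many splits along every active branch'' and ``the witnessed accepting set is uncountable'': finitely many splits along some branch would leave only a countable witness, whereas forcing splits cofinally embeds a copy of Cantor space into the accepting branches.
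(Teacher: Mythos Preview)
Your treatment of $\LRejAtMostCount{\mathcal{A}}$ is essentially the paper's: both set up the \Eloise--\Nature game of run-building and branch-choosing, apply Theorem~\ref{theo:perfect:main-aleph}, and read the resulting game $\gameL$ as the acceptance game of a tree automaton with $\omega$-regular acceptance; your explicit phrasing as an alternating automaton plus simulation is just a more detailed version of the paper's ``usual acceptance game'' remark.

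For $\LAccUnc{\mathcal{A}}$ your route is genuinely different, though it lands on the same automaton. You invoke the perfect-set property directly and have \Eloise guess a run together with a perfect subtree of accepting branches, with acceptance ``parity $\wedge$ infinitely many splits''. The paper, by contrast, \emph{does} use the leaking value: it lets \Abelard build the run, takes the winning condition to be the \emph{complement} of parity, observes that $t\in\LAccUnc{\mathcal{A}}$ iff the leaking value equals $2^{\aleph_0}$, and then appeals to the determinacy half of Theorem~\ref{theo:perfect:main-aleph} (\Abelard winning $\gameL$) before swapping the players. After a further simplification the paper arrives at precisely your game: \Eloise picks a transition and may force a direction, \Abelard picks the son otherwise, and the condition is ``parity $\wedge$ infinitely often no direction is forced'', i.e.\ your ``infinitely many splits''. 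So your claim that the leaking value is ``the wrong quantity'' is too strong; it is perfectly usable once the roles are flipped. What your approach buys is directness (no detour through determinacy and player-swapping) and an explicit descriptive-set-theoretic justification; what the paper's approach buys is uniformity, since both languages are handled by the same Theorem~\ref{theo:perfect:main-aleph}. A minor point: Proposition~\ref{prop:continuumHypothesis} states only the cardinality dichotomy, not the perfect-set theorem you actually need; you should cite the latter (e.g.\ \cite[Theorem~13.6]{Kechris}) explicitly.
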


\begin{proof}
Start with the case  $\LRejAtMostCount{\mathcal{A}}$. One can think of the acceptance of a tree $t$ as a game $\game$ where \Eloise labels the input by transitions and \Nature chooses which branch to follow: $t\in \LRejAtMostCount{\mathcal{A}}$ iff the leaking value of this game is at most $\aleph_0$. Consider game $\gameL$ as in Theorem~\ref{theo:perfect:main-aleph}. This game (up to some small changes) is essentially the following: the play starts at the root of the tree; in a node $u$ \Eloise chooses a valid transition  of the automaton and indicates a direction she wants to avoid and then \Abelard chooses the next son; the winning condition is that either the parity condition is satisfied or finitely often \Abelard obeys \Eloise. It is then easy to see this latter game as the “usual” acceptance game for {some tree automaton with an $\omega$-regular acceptance condition}.

Now consider the case $\LAccUnc{\mathcal{A}}$. One can think of the acceptance of a tree $t$ as a game $\game$ where \Eloise does nothing, \Abelard labels the input by transitions and \Nature chooses which branch to follows; the winning condition is the complement of the parity condition: $t\in \LAccUnc{\mathcal{A}}$ iff the leaking value of this game is $2^{\aleph_0}$. Again, one can consider game $\gameL$ as in Theorem~\ref{theo:perfect:main-aleph} in which we know that \Abelard has a winning strategy. Then switch the names of the players, complement the winning condition and obtain an acceptance game for $\LAccUnc{\mathcal{A}}$ where in a node $u$ \Eloise chooses a valid transition  of the automaton, then \Abelard indicates a direction he wants to avoid and then \Eloise chooses the next son; the winning condition is that the parity condition is satisfied and infinitely often \Eloise obeys \Abelard. Then one can easily prove that this game is equivalent to the following game: in a node $u$ \Eloise chooses a valid transition  of the automaton and may indicate a direction to follow, then \Abelard chooses the next son (and if \Eloise indicated a direction to follow he must respect it); the winning condition is that the parity condition is satisfied and infinitely often \Eloise does not indicate a direction. This latter game can easily be seen as the “usual” acceptance game for {some tree automaton with an $\omega$-regular acceptance condition}.
\end{proof}

\begin{remark} One can wonder whether a similar statement can be obtained for the languages $\LLarge{\mathcal{A}}$. In \cite{CarayolS17} such languages are indeed shown to be effectively regular. One could used Theorem~\ref{thm:topo-perfect} to derive an alternative proof but we omit it here as the construction is far less elegant than for automata with cardinality constraints (and therefore the gain compared with the direct approach in \cite{CarayolS17} is unclear).
\end{remark}

\subsection{Games Played on Infinite Arenas}\label{subsection:Games-on-Infinite-Arenas-perfect}

We claim that, in many contexts where the probabilistic approach leads to undecidability, the two approaches (cardinality and topological) that we proposed permit to obtain \emph{positive} results for the main problem usually addressed: decide if \Eloise has a “good” strategy and if so compute it. 

As this is not the core topic of the present paper we only briefly mention some of these contexts and, for each of them, point out the undecidability result in the probabilistic setting and the decidability result in the two-player game (without nature) setting that combined with our main results (Theorem~\ref{theo:perfect:main-aleph}~/~Theorem~\ref{thm:topo-perfect}) leads to decidability in the cardinality/topological setting.
\begin{itemize}
\item Games played on pushdown graphs. These are games played on infinite graphs that can be presented as the transition graph of a pushdown automaton, as the one we considered in Example~\ref{example:game-running1}. They are of special interest because in particular they permit to capture programs with recursion and they are also the very first class of two-player games on infinite graphs that where shown to be decidable~\cite{Walukiewicz01}. But when moving to the probabilistic setting, and already for \Eloise-\Nature reachability games, they were shown (except under a quite strong restriction) to lead undecidability~\cite{EtessamiY05}. In contrast, \Eloise-\Abelard-\Nature (\resp \Eloise-\Nature) parity games are decidable in the cardinality (\resp topological) setting as a consequence of the decidability for the \Eloise-\Abelard setting from~\cite{Walukiewicz01}. 
\item Games played on higher-order and collapsible pushdown graphs. Handling higher-order recursion, a programming paradigm that has been widely adopted in the last decade,
as all mainstream languages have added support for
higher-order procedures\footnote{For example, they were the major novelty in Java 8, they are central to Scala, and  they are also at the core of JQuery, the most popular JavaScript library widely used in client-side web programming.}, is a crucial question in program verification (see \eg \cite{Kobayashi13} for a survey on that topic). One possible approach consists in finding an automata model capturing the behaviours of such programs (as pushdown automata do for order-1 recursion), and collapsible pushdown automata (as well as higher-order pushdown automata for a restricted class of program) form such a class \cite{HMOS08,CS12,HagueMOS17}. As \Eloise-\Abelard parity games played on transition graphs of collapsible pushdown automata are decidable \cite{HMOS08}, it turns out that \Eloise-\Abelard-\Nature (\resp \Eloise-\Nature) parity games played on collapsible pushdown graphs are decidable in the cardinality (\resp topological) setting one. We believe this is an interesting starting point to study decidability of verification problems for programs with both higher-order recursion and uncontrollable and unpredictable behaviours. As an example, think of a jQuery program relying on a call to an external web service to complete a task: higher-order comes from using a call-back function to treat the answer of the web service while unpredictability comes from the fact that the web service may time out.
\item A popular non regular winning condition in pushdown game is the boundedness/unbounded\-ness condition that imposes a restriction on how the stack height evolves during a play. For stochastic games with \Nature only (\ie probabilistic pushdown automata) there are some positive results \cite{EsparzaKM05} but they break (because of \cite{EtessamiY05}) whenever \Eloise comes in. {In the cardinality (\resp topological) setting we have decidability in the general case of \Eloise-\Abelard-\Nature (\resp \Eloise-\Nature) thanks to Theorem~\ref{theo:perfect:main-aleph} (\resp Theorem~\ref{thm:topo-perfect}) combined with the results in \cite{BouquetSW03,Gimbert04}.}
\end{itemize}

\section{Imperfect-Information Games with Nature}
\label{section:imperfect}
We now move to a richer setting where \Eloise has imperfect-information. The vertices of the game are partitioned by an equivalence relation and \Eloise does not observe exactly the current vertex but only its equivalence class. In full generality, \Abelard should also have imperfect-information but {we are not able to handle this general case and therefore we assume here that he is perfectly informed}. Of course, as \Eloise has imperfect-information we have to slightly change the definition of the game (she now plays actions) and to restrict the strategies she can use. We also change how \Nature interacts with the players, but one can easily check that this setting captures the one we gave in the perfect-information case. 

One could wonder why we did not directly treat the imperfect-information case. There are two main reasons for that. Firstly, in the imperfect-information setting we only have results for the parity condition and not for any Borel condition. Secondly, the proof of Theorem~\ref{theo:main-partial} crucially uses the results obtained in the perfect-information setting.

\subsection{Definitions}

An \defin{imperfect-information arena} is a tuple $\arena=(\VE,\VA,\Gamma,\Delta_\Ei,\Delta_\Ai,\sim)$ where $\VE$ is a {countable} set of \Eloise's vertices, $\VA$ is a countable set of \Abelard's vertices (we let $V=\VE\uplus\VA$), $\Gamma$ is a {possibly uncountable} set of \Eloise's actions, $\Delta_\Ei:\VE\times \Gamma\rightarrow 2^V$ is \Eloise's transition function and $\Delta_\Ai:\VA\rightarrow 2^V$ is \Abelard's transition function and $\sim$ is an equivalence relation on $V$. We additionally require that the image by $\Delta_\Ei$ (\resp $\Delta_\Ai$) is never the empty set.
We also require that there is no two vertices $v_1\in \VE$ and $v_2 \in \VA$ such that $v_1\sim v_2$ (\ie the $\sim$ relation always distinguishes between vertices owned by different players).

As in the perfect-information setting, a play involves two antagonistic players —~\Eloise and \Abelard~— together with an unpredictable and uncontrollable player called \Nature. It starts in some initial vertex $v_0$ and when in some vertex $v$ the following happens:
\begin{itemize}
\item if $v\in\VE$, \Eloise chooses an action $\gamma$ and then \Nature chooses the next vertex among those $v'\in\Delta_\Ei(v,\gamma)$;
\item if $v\in\VA$, \Abelard chooses the next vertex $v'\in\Delta_\Ai(v)$.
\end{itemize}
Then, the play goes on from $v'$ and so on forever. 

Hence, a \defin{play} can be seen as an element in $(\VE\cdot \Gamma\cup \VA)^\omega$ compatible with $\Delta_{\Ei}$ and $\Delta_{\Ai}$. More formally, it is a sequence $\play=x_0x_1x_2\cdots$ such that for all $i\geq 0$ if $x_ix_{i+1}\in \VE\cdot \Gamma$ then one has $x_{i+2}\in\Delta_\Ei(x_i,x_{i+1})$; and if $x_i\in \VA$ then one has $x_{i+1}\in\Delta_\Ai(x_i)$. A \defin{partial play} is a prefix of a play that belongs to $(\VE\cdot \Gamma\cup \VA)^*$.

Two $\sim$-equivalent vertices are supposed to be indistinguishable by \Eloise and we extend $\sim$ as an equivalence relation on $V^*$: $v_0\dots v_h\sim v'_0\dots v'_k$ if and only if $h=k$ and $v_i\sim v'_i$ for all $0\leq i\leq k$; we denote by $[\lambda]_{/_\sim}$ the equivalence class of $\lambda\in V^*$.
 An \defin{observation-based strategy} for \Eloise is a map $\strat:(V^*{\VE})_{/_\sim} \rightarrow \Gamma$. We say that \Eloise~\defin{respects} $\phi$ during a  play $\lambda=v_0\gamma_0v_1\gamma_1v_2\gamma_2\cdots$ (where $\gamma_i$ is the empty word when $v_i\in \VA$ and an action in $\Gamma$ when $v_i\in\VE$) if and only if $\gamma_{i+1}=\phi([v_0\cdots v_{i}]_{/_\sim)}$ for all $i\geq 0$ such that $v_i\in\VE$. 

\begin{remark}
One may expect a strategy for \Eloise to also depend on the actions she has played so far, \ie to be a map $\phi_\Evei:((\VE\cdot \Gamma\cup \VA)^*\cdot \VE)_{/_\sim}\rightarrow \Gamma$ where $\sim$ is extended on $\Gamma$ by letting $\gamma\sim x$ iff $\gamma=x$ when $\gamma\in\Gamma$. But such a strategy can be mimicked by a strategy (in our sense) $\phi_\Evei':V^*\rightarrow \Gamma$ by letting $\phi_\Evei'([v_0\cdots v_k]_{/_\sim}) = \phi_\Evei([v_0\gamma_0\cdots \gamma_{k-1}v_k]_{/_\sim})$ with $\gamma_i = \phi_\Evei([v_0\gamma_0\cdots \gamma_{i-1}v_i]_{/_\sim})$ when $v_i\in\VE$ and $\gamma_i=\epsilon$ otherwise. Note that requiring to be observation-based does not interfere with the previous trick.
\end{remark}

A \defin{strategy} for \Abelard is a map $\strat:(\VE\cdot \Gamma\cup \VA)^*\cdot \VA \rightarrow V$. We say that \Abelard~\defin{respects} $\phi$ in the play $\lambda=v_0\gamma_0v_1\gamma_1v_2\gamma_2\cdots$ (again, $\gamma_i$ is the empty word when $v_i\in \VA$ and an action in $\Gamma$ when $v_i\in\VE$) if and only if $v_{i+1}=\phi(v_0\gamma_0v_1\cdots v_{i})$ for all $i\geq 0$ such that $v_i\in\VA$.

With an initial vertex $v_0$, a strategy $\phi_\Ei$ of \Eloise and a strategy $\phi_\Ai$ of \Abelard, we associate the set $\outcomes{v_0}{\phi_\Ei}{\phi_\Ai}$ of all possible plays starting from $v_0$ and where \Eloise (\resp \Abelard) respects $\phi_\Ei$ (\resp $\phi_\Ai$).

In this part, we only have positive results for parity winning conditions, hence we focus on this setting (but generalising the various notions to any Borel winning condition is straightforward).
A parity winning condition is defined thanks to a colouring function $\col:V\rightarrow \colors$ with a \emph{finite} set of colours $\colors\subset\mathbb{N}$. {We require that colouring function stays constant on the equivalence classes of the relation $\sim$ (\ie $\col(v)=\col(v')$ for all $v \sim v'$.)}.

Again, a play $\lambda=v_0\gamma_0v_1\gamma_1v_2\gamma_2\cdots$ (where $\gamma_i=\epsilon$ when $v_i\in\VA$) satisfies the parity condition if $\liminf(\col(v_i))_{i\geq 0}$ is even; we denote by $\WC_\col$ the set of plays satisfying the parity condition defined by the colouring function $\col$.

A \defin{imperfect-information parity game with nature} is a tuple $\game=(\arena,\col,v_0)$ consisting of an imperfect-information arena $\arena$, a colouring function $\col$ and an initial vertex $v_0$.

\begin{remark}
 A more symmetric notion of imperfect-information game would let
\Abelard also play actions (\ie $\Delta_{\Ai} : \VA \times \Gamma \mapsto 2^{V}$) while Nature would choose the successor
as it does for \Eloise. Consider such a game $\arena=(\VE,\VA,\Gamma,\Delta_\Ei,\Delta_\Ai,\sim)$ where \Abelard
plays actions. We can simulate it by a game $\arena=(\VE',\VA,\Gamma,\Delta_\Ei',\Delta_\Ai',\sim')$ in our setting.
 For every vertex $v$ of \Abelard and every action 
$\gamma \in \Gamma$, we introduce a new vertex $(v,\gamma)$ for \Eloise (\ie $\VE'= \VE \cup \VA \times \Gamma$). Furthermore we set $\Delta_{\Ai}'(v) = \{ (v,\gamma) \mid \gamma \in \Gamma \}$ and for everty vertex of \Eloise of the form $(v,\gamma)$, we take $\Delta_{\Ei}'((v,\gamma),\gamma') = \Delta_{\Ai}(v,\gamma)$ for all action $\gamma' \in \Gamma$. 
For the original vertices $v \in \VE$ and for $\gamma \in \Gamma$, we take $\Delta_{\Ei}'(v,\gamma)= \Delta_{\Ei}(v,\gamma)$. Finally the equivalence {relation} $\sim'$ coincides with $\sim$, and equates all new vertices. It is then easy to check that both games are equivalent.
\end{remark}

In order to evaluate how good an \Eloise's strategy is, we can take exactly the same definitions and notations as we did in the perfect-information setting (this is why we do not repeat them here). Hence, we have the notions of \defin{cardinality leaking} of a strategy (thanks to Definition~\ref{def:cardleaking}),  \defin{leaking value} of a game (thanks to Definition~\ref{def:cardleakGame}), and \defin{topologically-good} strategy (thanks to Definition~\ref{def:topoGood}).

\begin{remark}\label{remark:supNotMaxCardleakingImperfect}
For the same reason as in the prefect-information setting we have that for any strategy $\phi_\Ei$  one has $\CL{\phi}\in\mathbb{N}\cup\{\aleph_0,2^{\aleph_0}\}$ and as a consequence that $\LVal{\game}\in\mathbb{N}\cup\{\aleph_0,2^{\aleph_0}\}$.	
\end{remark}

{
\begin{example}\label{ex:ex-imperfect}
	Consider the Büchi game where \Eloise and \Abelard choose simultaneously and independently a bit in $\{0,1\}$: if the bits are the same the game goes to a special vertex coloured by $0$, otherwise goes to a special vertex coloured by $1$, and then, in both cases, another round starts and so on forever. Hence, \Eloise wins if she infinitely often guesses correctly choice of \Abelard. To simulate the concurrent aspect of the choices of the player we will use imperfect information: \Abelard chooses first but \Eloise does not observe his choice, and she chooses second. Moreover, for her choice, \Eloise has a third option which is to let \Nature choose for her: technically, once \Abelard made his choice, \Nature is also making a choice (hidden to \Eloise) and then \Eloise has three options: choose bit $0$, choose bit $1$ or pick the bit chosen by \Nature.
	
	Formally (see Figure~\ref{fig:ex-imperfect} for an illustration) one defines the imperfect-information arena $\arena=(\VE,\VA,\Gamma,\Delta_\Ei,\Delta_\Ai,\sim)$ where $\VE=\{l,w,v_0,v_1,v_{0,0},v_{0,1},v_{1,0},v_{1,0}\}$, $\VA=\{v\}$, $\Gamma=\{\sharp,0,1,N\}$, $\Delta_\Ai$ and $\Delta_\Ei$ (we omit meaningless actions but could add a dummy state to handle them) are given by
	\begin{itemize}
		\item $\Delta_\Ai(v)=\{v_0,v_1\}$: \ie \Abelard encodes the choice of his bit by going either to $v_0$ or $v_1$;
		\item $\Delta_\Ei(v_0,\sharp)=\{v_{0,0},v_{0,1}\}$ and $\Delta_\Ei(v_1,\sharp)=\{v_{1,0},v_{1,1}\}$: this corresponds to the step where \Nature is choosing its bit;
		\item $\Delta_\Ei(v_{0,0},0)=\Delta_\Ei(v_{0,0},N)=\Delta_\Ei(v_{0,1},0)=\Delta_\Ei(v_{1,0},1)=\Delta_\Ei(v_{1,1},1)=\Delta_\Ei(v_{1,1},N)=\{f\}$: this corresponds to either \Eloise choosing the same bit as \Abelard or mimicking luckily the choice of \Nature;
		\item $\Delta_\Ei(v_{0,0},1)=\Delta_\Ei(v_{0,1},1)=\Delta_\Ei(v_{0,1},1)=\Delta_\Ei(v_{1,0},0)=\Delta_\Ei(v_{1,0},N)=\Delta_\Ei(v_{1,1},0)=\{l\}$: this corresponds to either \Eloise choosing a bit different from \Abelard or mimicking unluckily the choice of \Nature;
		\item $\Delta_\Ei(f,\sharp)=\Delta_\Ei(l,\sharp)=\{v\}$: this corresponds to start a new round.
	\end{itemize}
	and, $v_0\sim v_1$ and $v_{0,0}\sim v_{0,1}\sim v_{1,0}\sim v_{1,1}$. The colouring function $\col$ equals $1$ everywhere except on $f$ where it equals $0$. Finally we let $\game=(\arena,\col,v)$.

	Consider an \Eloise's strategy  $\strat_\Ei$ that finitely often plays action $N$. Then, it is easily seen that $\CL{\strat}=2^{\aleph_0}$. Indeed, consider the strategy $\strat_\Ai$ of \Abelard that chooses the bit opposite to that prescribed by $\strat_\Ei$ (and any bit when $\strat_\Ei$ plays action $N$): then there are no winning play for \Eloise in $\outcomes{v}{\strat_\Ei}{\strat_\Ai}$, hence $\CL{\strat}=2^{\aleph_0}$.
	
	Now, consider the strategy $\psi_\Ei$ of \Eloise that always plays action $N$ from vertices in $\{v_{0,0},v_{0,1},v_{1,0},v_{1,1}\}$. Then one has $\CL{\strat}=\aleph_0$. Indeed, consider any strategy $\psi_\Ai$ of \Abelard. Then, for every $k\geq 0$ there are only finitely many plays in $\outcomes{v}{\psi_\Ei}{\psi_\Ai}$ that never visits $f$ after the $n$-th round (namely the ones where \Nature only makes incorrect choices after the $k$-th round). As the set of loosing plays is the countable union of the previous plays when $k$ ranges over $\mathbb{N}$, there are only countably many loosing plays in $\outcomes{v}{\strat_\Ei}{\strat_\Ai}$, hence $\CL{\strat}={\aleph_0}$. As the set of loosing plays is clearly a countable union of nowhere dense sets, it follows that $\psi_\Ei$ is also topologically good.
\end{example}
}

\begin{figure}
\begin{center}
\begin{tikzpicture}[>=stealth',thick,scale=1,transform shape]
\tikzstyle{Abelard}=[draw]
\tikzstyle{Eloise}=[draw,circle]
\tikzstyle{Nature}=[draw,diamond,scale = .65,font=\Large]
\tikzstyle{Buchi}=[fill=MediumSpringGreen]
\tikzset{every loop/.style={min distance=15mm,looseness=10}}
\tikzstyle{loopleft}=[in=150,out=210]
\tikzstyle{loopright}=[in=-30,out=30]
\tikzstyle{loopbelow}=[in=-120,out=-60]
\tikzstyle{loopabove}=[in=120,out=60]
\node[Eloise] (l) at (0,0) {$l$};
\node[Abelard,minimum height=.7cm] (v) at (1.5,0) {$v$};
\node[Eloise] (v0) at (3,1) {$v_{0}$};
\node[Eloise] (v1) at (3,-1) {$v_{1}$};
\node[Eloise] (v00) at (5.5,2.25) {$v_{0,0}$};
\node[Eloise] (v01) at (5.5,.75) {$v_{0,1}$};
\node[Eloise] (v10) at (5.5,-.75) {$v_{1,0}$};
\node[Eloise] (v11) at (5.5,-2.25) {$v_{1,1}$};
\node[Eloise,Buchi] (f) at (8,0) {$f$};

\draw[->] (l) -- (v) node[midway,above] {$\sharp$};
\path[->] (v) edge (v0);\path[->] (v) edge (v1);
\draw (v0) -- (4,1) node[midway,above] {$\sharp$};
\draw[->] (4,1) -- (v00);\draw[->] (4,1) -- (v01);
\draw (v1) -- (4,-1) node[midway,above] {$\sharp$};
\draw[->] (4,-1) -- (v10);\draw[->] (4,-1) -- (v11);

\path[->,bend left] (v00) edge node[right] {$0,N$} (f);
\path[->] (v01) edge node[above] {$0$} (f);
\path[->] (v10) edge node[below] {$1$} (f);
\path[->,bend right] (v11) edge node[right] {$1,N$} (f);

\path[->,bend right=40] (v00) edge node[above] {$1$} (l);
\path[->,bend right=40] (v01) edge node[above] {$1,N$} (l);
\path[->,bend left=40] (v10) edge node[below] {$0,N$} (l);
\path[->,bend left=40] (v11) edge node[below] {$0$} (l);

\draw[->] (f) -- (v) node[midway,above] {$\sharp$};
 
\end{tikzpicture}
\end{center}
\caption{Arena of Example~\ref{ex:ex-imperfect}: $\sim$-equivalent vertices are depicted in the same column, actions are written on edges, and $f$ is the only vertex coloured by $0$.}\label{fig:ex-imperfect}
\end{figure}
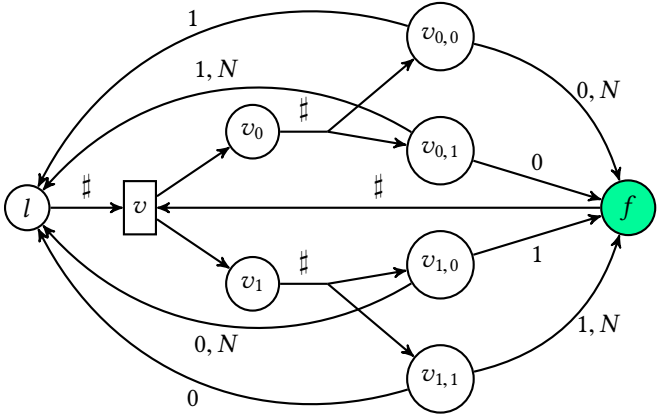

\subsection{Imperfect-Information Two-Player Games}

We now introduce another version of games with imperfect-information where there are only two antagonist players — \Eloise and \Abelard. The only difference with the previous model with \Nature is that  now the non-determinism induced by a choice of an action of \Eloise is resolved by \Abelard. This concept was first considered in~\cite{CDHR07} for finite arenas.

Let $\arena=(\VE,\VA,\Gamma,\Delta_\Ei,\Delta_\Ai,\sim)$ be an imperfect-information arena. Then a play involves two players \Eloise and \Abelard: it starts in some initial vertex $v_0$ and when in some vertex $v$ the following happens:
\begin{itemize}
\item if $v\in\VE$, \Eloise chooses an action $\gamma$ and then \Abelard chooses the next vertex among those $v'\in\Delta_\Ei(v,\gamma)$;
\item if $v\in\VA$, \Abelard chooses the next vertex $v'\in\Delta_\Ai(v)$.
\end{itemize}
Then, the play goes on from $v'$ and so on forever. Again, a \defin{play} is  an element in $(\VE\cdot \Gamma\cup \VA)^\omega$ and a \defin{partial play} is a finite prefix of a play in $(\VE\cdot \Gamma\cup \VA)^*$.

Observation-based strategies for \Eloise are defined as for imperfect-information games with Nature. We shall later consider winning conditions that are slightly more general than parity conditions hence, we allow any Borel subset $\WC$ of $(\VE\cdot \Gamma\cup \VA)^\omega$. 

An \defin{imperfect-information two-player game} is a tuple $\game=(\arena,\WC,v_0)$ consisting of an arena of imperfect-information, a winning condition $\WC$ and an initial vertex $v_0$. A strategy $\phi_\Ei$ of \Eloise is \defin{winning} in $\game$ if any play starting from $v_0$ where \Eloise respects $\phi_\Ei$ belongs to $\WC$.

\begin{remark}
Note that even for reachability conditions —~\ie when $\WC=V^*FV^\omega$ for some non-empty $F\subseteq V$~— and finite arena, imperfect-information two-player games are not determined, \ie it can happen that none of the two players has a winning strategy. See \cite[Example~2.3]{CDHR07}.
\end{remark}

\subsection{Deciding Whether the Leaking Value Is at Most $\aleph_0$}

Our goal in this section is to design a technique to decide whether \Eloise
has a strategy with a {cardinality leaking of} at most $\aleph_{0}$ in an imperfect-information parity game with nature. 

For the rest of this section we fix an imperfect-information parity game with nature $\game=(\arena,\col,v_0)$ where $\arena=(\VE,\VA,\Gamma,\Delta_\Ei,\Delta_\Ai,\sim)$ and we aim at deciding whether $\LVal{\game}\leq \aleph_0$.

The approach has the same flavour as the one for the perfect-information case: we define an imperfect-information  game without \Nature where \Abelard is now in charge of simulating choices of \Nature while \Eloise will indicate together with her action, a successor that she wants to avoid;  
 moreover \Abelard will be forced (thanks to the winning condition) to respect her choices infinitely often. 

In order to express the choice of \Nature she wants to avoid while 
preserving the fact that she is imperfectly informed about the actual vertex, \Eloise will provide with her action $\gamma\in \Gamma$, a map $\theta:V\rightarrow V$ such that for all $v\in V$ one has $\theta(v)\in \Delta_\Ei(v,\gamma)$; we denote by $\Theta_\gamma$ the set of such maps (for a given $\gamma\in \Gamma$). Intuitively, the meaning of \Eloise playing $(\gamma,\theta)$ is that she plays action $\gamma$ and would prefer, for each vertex $v$, if the play turns to be in $v$, that the next vertex is not $\theta(v)$. 
{
\begin{remark}
The map $\theta$ may be partial: what is important is that, if at some point the play can be in $v$ then $\theta(v)$ should be defined. In particular if there are two bounds, one on the size of the  $\sim$-equivalence classes of $V$ and one on the out-degree of the vertices in $G$, then $\Theta_\gamma$ can be chosen to be finite (up to coding). This will be the case for pushdown games when discussing consequences in Section~\ref{section:consequences-imperfect}.
\end{remark}}

We define a \emph{two-player} imperfect-information arena 
$\arenaNR=(\VE,\VA,\widehat{\Gamma},\widehat{\Delta}_\Ei,\Delta_\Ai,\sim)$
where $\widehat{\Gamma}=\bigcup_{\gamma\in \Gamma}\{\gamma\}\times\Theta_\gamma$ and  $\widehat{\Delta}_\Ei(v,(\gamma,\theta)) = \Delta_\Ei(v,\gamma)$.
A play in $\arenaNR$ is of the form $v_0\nu_{0}v_1\nu_{1}v_2\cdots$ where for all $i \geq 0$, $\nu_{i}$ is equal to some $(\gamma_i,\theta_i)$ if $v_{i} \in \VE$
and is empty otherwise. For some $i \geq 0$, if $v_{i} \in \VE$ and $v_{i+1} \neq 
\theta_{i}(v_{i})$, we say that \Abelard \emph{obeys} \Eloise at this point.
 
We let $\WCNR$ consists of those plays $v_0\nu_{0}v_1\nu_{1}v_2\cdots$ 
 such that either $v_0v_1v_2\cdots\in \WC_\col$ or there are only finitely many $i$ such that  $v_i\in V_\Ei$ and $v_{i+1} \neq \theta_i(v_i)$, \ie either the play satisfies the parity condition or eventually \Abelard never obeys \Eloise. Finally, we denote by $\gameNR$ the two-player imperfect-information game $(\arenaNR,\WCNR,v_0)$. The next result relates $\game$ and $\gameNR$.

\begin{theorem}\label{theo:main-partial}
The leaking value of $\game$ is at most $\aleph_0$ if and only if \Eloise has a winning strategy in $\gameNR$.

More precisely, from a winning strategy (\resp positional winning strategy, \resp finite-memory winning strategy) $\phiNR_{\Ei}$ of \Eloise in $\gameL$, we can define a strategy (\resp positional winning strategy, \resp finite-memory winning strategy) $\phi_{\Ei}$ for \Eloise in $\game$ such that $\CL{\phi_{\Ei}} \leq
\aleph_{0}$. 
\end{theorem}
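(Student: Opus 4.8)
The plan is to prove the two implications separately, transferring as much as possible from the perfect-information Theorem~\ref{theo:perfect:main-aleph}. The new difficulties are that \Eloise's strategies must be observation-based, that the object she uses to ``avoid'' a successor is now a whole map $\theta$ rather than a single vertex, and — crucially — that $\gameNR$ is an imperfect-information game and hence \emph{not} determined, so the determinacy shortcut used for the converse of Theorem~\ref{theo:perfect:main-aleph} is unavailable.

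\textbf{From a winning strategy in $\gameNR$ to a good strategy in $\game$.} I would define $\phi_\Ei$ by projection: on an observation class $[v_0\cdots v_k]_{/_\sim}$ ending in $\VE$, if $\phiNR_\Ei$ prescribes $(\gamma,\theta)$ then $\phi_\Ei$ prescribes $\gamma$. This is observation-based, and positional / finite-memory whenever $\phiNR_\Ei$ is. To bound $\CL{\phi_\Ei}$, fix an \Abelard strategy $\phi_\Ai$ and a losing play $\lambda\in\outcomes{v_0}{\phi_\Ei}{\phi_\Ai}\setminus\WC_\col$. Since $\phiNR_\Ei$ is observation-based and deterministic, $\lambda$ lifts to a \emph{unique} play of $\gameNR$ in which \Eloise respects $\phiNR_\Ei$; as that play is won while $\lambda$ fails the parity condition, the win comes from the second disjunct of $\WCNR$, i.e.\ \Abelard obeys only finitely often. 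Let $\pi_\lambda$ be the last prefix of $\lambda$ at whose end \Abelard obeys. The map $\lambda\mapsto\pi_\lambda$ is injective: for $\lambda_1\neq\lambda_2$ their greatest common prefix $\pi$ cannot end in $\VA$ (there \Abelard's move is forced by $\phi_\Ai$), so it ends in $\VE$ with a common action $(\gamma,\theta)$ dictated by the shared observation, and \Nature splits $\pi$ into successors $v_1'\neq v_2'$; at most one equals the avoided $\theta(\mathrm{last}(\pi))$, so along, say, $\lambda_1$ \Abelard obeys at $\pi$, whence $\pi v_1'\prefix\pi_{\lambda_1}$, a prefix which is not a prefix of $\lambda_2$, giving $\pi_{\lambda_1}\neq\pi_{\lambda_2}$. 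As all these prefixes respect the \emph{fixed} $\phi_\Ei$, their actions are forced and they inject into the countable set $V^*$ of vertex sequences; hence $\CL{\phi_\Ei}\leq\aleph_0$.

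\textbf{From a countable leaking value to a winning strategy in $\gameNR$.} Here I cannot reuse the converse of Theorem~\ref{theo:perfect:main-aleph}, which obtained \Abelard's witnessing strategy from Borel determinacy of the \emph{perfect-information} gadget game. Instead I would route through the perfect-information result via the standard knowledge (subset) construction. Writing $\game^K$ for the perfect-information game with \Nature whose positions record \Eloise's knowledge (a subset of a single $\sim$-class), the two facts to establish are: (K1) \Eloise has an observation-based strategy of countable cardinality leaking in $\game$ iff $\LVal{\game^K}\leq\aleph_0$; and (K2) the knowledge construction commutes with the gadget of Section~\ref{section:perfectLeakingCountable}, i.e.\ the subset construction of $\gameNR$ is isomorphic to the gadget game produced from $\game^K$ by Theorem~\ref{theo:perfect:main-aleph}. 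Granting these and using that \Eloise wins an imperfect-information game iff she wins its subset construction, Theorem~\ref{theo:perfect:main-aleph} applied to $\game^K$ yields that \Eloise wins $\gameNR$ iff $\LVal{\game^K}\leq\aleph_0$, and combined with (K1) this gives the desired equivalence
\[
\LVal{\game}\leq\aleph_0 \;\Longleftrightarrow\; \LVal{\game^K}\leq\aleph_0 .
\]

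\textbf{Main obstacle.} The hard part lies entirely in the converse, concretely in justifying (K1)--(K2) in full generality. Because the leaking value is a \emph{cardinality} of losing plays rather than a mere winning condition, the knowledge construction must faithfully reproduce the \emph{count} of \Nature-branches: one must check that distinct losing plays of $\game$ correspond to distinct losing plays of $\game^K$ and conversely, and that the vertex-wise avoid-maps $\theta$ survive the passage to knowledge sets. This is clean when the $\sim$-classes and the out-degrees are bounded, so that each $\Theta_\gamma$ is finite (as noted just before the theorem), and is the delicate point otherwise, since then knowledge sets — and hence the arena of $\game^K$ — may fail to be countable. An appealing alternative avoiding the subset construction would define the $\theta$'s directly from an optimal $\phi_\Ei$ by a diagonal/Banach--Mazur scheduling that eventually avoids every losing play; the difficulty there is to make this schedule observation-based and uniform over \emph{all} \Abelard strategies at once, which is exactly what the knowledge construction buys.
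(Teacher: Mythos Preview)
Your forward direction is correct and essentially identical to the paper's argument: project $(\gamma,\theta)\mapsto\gamma$, lift a losing play uniquely to $\gameNR$, and inject losing plays into $V^*$ via the last-obey prefix.

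Your converse direction, however, has a genuine gap. Both (K1) and (K2) fail in general, not only for the countability reason you flag:
\begin{itemize}
\item For (K1), plays in $\game^K$ live at the level of observation sequences, so a single losing play in $\game^K$ may correspond to uncountably many losing actual plays in $\game$ (and conversely distinct actual plays may collapse). The \emph{cardinality} of the losing set is not preserved by the subset construction, so $\LVal{\game^K}\leq\aleph_0$ need not imply $\CL{\phi_\Ei}\leq\aleph_0$ for the lifted strategy.
\item For (K2), the gadget on $\game^K$ lets \Eloise avoid a single \emph{successor knowledge set}, whereas in $\gameNR$ she supplies a map $\theta$ choosing one successor \emph{per possible actual vertex}. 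These are not the same data, and the two games are not isomorphic.
\end{itemize}

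The paper's route is much closer to your ``appealing alternative'' than to the subset construction, but with a key extra idea. One fixes an optimal observation-based $\phi_\Ei$ first, then builds a \emph{perfect-information} game with Nature $\gameP$ whose vertices are $\equiv$-classes of $\phi_\Ei$-compatible vertex sequences, where $\eta\equiv\eta'$ means they have the same length, are $\sim$-equivalent componentwise, \emph{and end in the same actual vertex}. This last clause is exactly what the subset construction drops and is what makes the cardinality of plays transfer bijectively, so $\LVal{\gameP}\leq\aleph_0$. The vertex set of $\gameP$ is a quotient of $V^*$, hence countable regardless of the size of $\sim$-classes. Theorem~\ref{theo:perfect:main-aleph} then applies to $\gameP$; the winning condition of $\gamePNR$ is Rabin, so \Eloise has a \emph{positional} winning strategy there, described by a map $\phi_B$ on $\equiv$-classes. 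Positionality is the crux: since $[\eta]_{/\equiv}$ is determined by the observation history together with the actual last vertex, one can set $\theta(v)=\phi_B([\pi' ]_{/\equiv})$ for any $\pi'\sim\pi$ ending in $v$, and the resulting $\phiNR_\Ei=(\phi_\Ei,\theta)$ is observation-based and winning in $\gameNR$. This is the ``observation-based and uniform over all \Abelard strategies'' scheduling you were looking for; the missing ingredient in your sketch is the intermediate game $\gameP$ (which retains the actual current vertex) together with the appeal to positional determinacy of Rabin games.
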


\begin{proof}
Let $\lambda$ be a partial play in $\game$ (resp. $\gameNR$), we denote by $\encod{\lambda}$ the sequence of vertices in $V^{*}$ obtained by removing the actions from $\lambda$. 

First assume that \Eloise has a winning strategy $\phiNR_\Ei$ in $\gameNR$.
This direction is very similar to the perfect-information case. We define a strategy $\phi_\Ei$ for \Eloise in $\game$ by letting $\phi_\Ei(\play) = \gamma$ whenever $(\gamma,\theta)=\phiNR_\Ei(\play)$. In particular, note that $\phi_\Ei$ uses the same memory as $\phiNR_\Ei$. 
For any partial play $\lambda$ in $\game$ in which \Eloise respects $\phi_{\Ei}$, there exists a unique play, denoted $\hat{\lambda}$, in which \Eloise respects $\phiNR_{\Ei}$ and such that
$\encod{\lambda}=\encod{\hat{\lambda}}$. By taking the limit, we extend this notation 
from partial plays to plays.

Let us now prove that $\CL{\strat_\Ei}\leq\aleph_0$. For this, fix a strategy $\strat_\Ai$ of \Abelard in $\game$ and consider a play $\lambda$ in $\outcomes{v_0}{\phi_\Ei}{\phi_\Ai} \setminus \WC$.

As $\phiNR_\Ei$ is winning in $\gameL$,  $\hat{\lambda}$ (in which \Eloise respects $\phiNR_{\Ei}$) is won by \Eloise. As $\hat{\lambda}$ does not satisfy the parity condition (because $\lambda$ does not), \Eloise wins because \Abelard obeys her
only finitely often. Let $\pi_{\lambda}$ be the longest prefix $\pi$ of $\lambda$ such that $\hat{\pi}$ is of the form $\pi' v (\gamma,\theta) v'$ with $v' \neq \theta(v)$  (\ie it is the last time where \Abelard obeys
\Eloise). By convention, if \Abelard never obeys \Eloise in $\lambda$ we let $\pi_\lambda=\epsilon$.

We claim that $\lambda \in \outcomes{v_0}{\phi_\Ei}{\phi_\Ai} \setminus \WC$
is uniquely characterised by $\pi_{\lambda}$. In particular $\outcomes{v_0}{\phi_\Ei}{\phi_\Ai} \setminus \WC$ is countable as it can be injectively mapped into the countable set of partial plays in $\game$. 

Let $\lambda_{1} \neq \lambda_{2}  \in \outcomes{v_0}{\phi_\Ei}{\phi_\Ai} \setminus \WC$. We will show that $\pi_{\lambda_{1}} \neq \pi_{\lambda_{2}}$.
Consider the greatest common prefix $\pi$  of $\lambda_{1}$ and $\lambda_{2}$.  As $\lambda_{1}$ and $\lambda_{2}$
respects the same strategies for \Eloise and \Abelard, $\pi$ must end by some $v \gamma$ with $v$ of \Eloise. In particular there exists $v_{1} \neq v_{2} \in \Delta_{\Ei}(v,\gamma)$ such that $\pi v_{1} \prefixstrict \lambda_{1}$ and $\pi v_{2} \prefixstrict \lambda_{2}$. The partial play $\hat{\pi}$ ends in
$v (\gamma,\theta)$ for some $\theta \in \Theta_{\gamma}$. Assume w.l.o.g. that $\theta(v) \neq v_{1}$.
\Ie \Abelard obeys \Eloise at $\widehat{\pi v_{1}}$ in $\lambda_{1}$. In particular,
$\pi v_{1} \prefix \pi_{\lambda_{1}}$: therefore $\pi_{\lambda_{1}} \not\prefix \pi_{\lambda_{2}}$ and thus $\pi_{\lambda_{1}} \neq \pi_{\lambda_{2}}$.

For the converse implication, as the game $\gameNR$ may not be determined, we cannot proceed as in the perfect-information case\footnote{Recall that in the perfect-information case, for the converse implication of the proof of Theorem~\ref{theo:perfect:main-aleph} we were, thanks to determinacy, considering a winning strategy for \Abelard in $\gameL$ and built out of it a winning strategy for him in $\game$.}. Hence, assume that the leaking value of $\game$ is at most $\aleph_0$ and let $\strat_\Ei$ be a strategy of \Eloise such that $\LVal{\strat_\Ei}\leq\aleph_0$ (thanks to Remark~\ref{remark:supNotMaxCardleakingImperfect} it exists).

In order to define a winning strategy in $\gameL$ for \Eloise,
we consider a \emph{perfect-information} parity game with Nature that we denote $\gameP$.
In this game each vertex belongs either to \Abelard or \Nature. 

To define $\gameP$, let $S\subseteq V^*$ be the set of all $\encod{\pi}$ for $\pi$ a partial play respecting $\phi_{\Ei}$ and let $\equiv$ be the equivalence relation  on $S$ defined for all $\encod{\pi}$,$\encod{\pi'} \in S$ by
$\encod{\pi} \equiv \encod{\pi'}$ if $\pi$ and $\pi'$ end in the same vertex and 
$\pi \sim \pi'$.
In the rest of the proof we will use letter $\eta$,$\eta'$,… to denote elements in $S$. In particular if $\eta=v_0\cdots v_k$ and $\eta'=v'_0\cdots v'_{k'}$ we have $\eta\equiv\eta'$ if and only if $k=k'$, $v_k=v'_{k'}$ and $v_i\sim v'_i$ for all $0\leq i<k$.

The vertices $V_{\gameP}$ are the equivalence classes  $\equiv$.
 A vertex $[\eta]_{/\equiv} \in V_{\gameP}$ belongs to \Abelard if $\eta$ ends in a vertex of \Abelard and it belongs to \Nature otherwise. There is an edge from $[\eta]_{/\equiv}$ to $[\eta']_{/\equiv}$ if $\eta'$ extends $\eta$ by one vertex. The initial vertex
is $[v_{0}]_{/\equiv}$. Lastly, the parity condition is given by the mapping associating to $[\eta]_{/\equiv} \in V_{\gameP}$ the colour
$\col(v)$ of the  the last vertex $v$ of $\eta$.

A partial play $\xi$ in $\gameP$ is of the form $\xi=[\eta_{0}]_{/\equiv} [\eta_{1}]_{/\equiv} \cdots [\eta_{n}]_{/\equiv}$
where $\eta_{0}=v_{0}$ and for all $i <n$, $\eta_{i+1}$ extends $\eta_{i}$ by one vertex. With such a $\xi$, we naturally associate the play $\tau(\xi)$ in $\game$ defined by
$v_{0} \nu_{0} v_{1} \nu_{1} \cdots v_{n}$ where for all $i \geq 0$,
$v_{i}$ is defined as the last vertex in $\eta_{i}$  and $\nu_{i}$ is equal to $\phi_{\Ei}([\eta_{i}]_{/_\sim})$ if $v_{i}$ belongs to \Eloise and $\nu_{i}$ is empty otherwise. It is easy to
show that for all $i \leq n$, $v_{0} v_{1} \cdots v_{i} \equiv \eta_{i}$.
Hence, as $\phi_{\Ei}$ is observation-based, $\tau(\xi)$ respects $\phi_{\Ei}$.
In fact, the continuous mapping $\tau$ establishes 
a one to one correspondance between the partial plays in $\gameP$ and
the partial plays in $\game$ where \Eloise respect $\phi_{\Ei}$. By continuity, this mapping extends to plays.

The leaking value of $\gameP$ is at most $\aleph_0$: indeed, any strategy $\phi_{\Ai}^{\gameP}$ for \Abelard in $\gameP$ can be lifted to a strategy $\phi_{\Ai}$ in $\game$
such that $\{ \tau(\xi) \mid \xi \;\text{a play in $\gameP$ which
respects $\phi_{\Ai}^{\gameP}$}\}$ is equal to $\outcomes{v_{0},\game}{\phi_{\Ei}}{\phi_\Ai}$, and therefore (as $\LVal{\strat_\Ei}\leq\aleph_0$) the set of losing plays for \Eloise in $\gameP$ when \Abelard uses strategy $\phi_{\Ai}^{\gameP}$ has cardinality at most $\aleph_0$.

Therefore one can use Theorem~\ref{theo:perfect:main-aleph} for the (perfect-information) game $\gameP$ and gets that \Eloise has a winning strategy in 
the game $\gamePNR$ (defined as in Theorem~\ref{theo:perfect:main-aleph}). As the winning condition of $\gamePNR$ is a disjunction of two parity conditions, the winning condition of $\gamePNR$ is a so-called Rabin condition\footnote{We refer the reader not familiar with Rabin conditions to \cite{Thomas97} for a formal definition. Let us also stress that the Rabin condition is in fact on the sequence of edges taken during the play and not on sequence of vertices visited. By a slight modification of $\gamePNR$, it can be transformed into a Rabin condition on the sequence of vertices visited.}.
Therefore \Eloise has a \emph{positional} winning strategy $\phi_{\Ei}^{{\gamePNR}}$ in $\gamePNR$ \cite{Klarlund94}.
For $\eta \in S$ ending with a vertex $v$ of \Eloise,   $\phi_{\Ei}^{{\gamePNR}}$ associates to $[\eta]_{/\equiv}$ a pair $([\eta]_{/\equiv},[\eta v']_{/\equiv})$  with $v'
\in \Delta_{\Ei}(v,\phi_{E}([\eta]_{/_\sim}))$. This strategy is equivalently described by the mapping $\phi_{B}$ associating to $[\eta]_{/\equiv}$  the vertex $v'$ in $\Delta_{\Ei}(v,\phi_{E}([\eta]_{/_\sim}))$.

The key property of this strategy is that any play $\lambda$ in
$\game$ which respects $\phi_{\Ei}$ and such that $\lambda$ has
infinitely many prefixes of the form $\pi v \gamma v'$ with $v \in \VE$
and $v' \neq \phi_{B}([\encod{\pi v}]_{/\equiv})$, satisfies the parity
condition. Indeed, toward  a contradiction assume that $\lambda$ does not
satisfy the parity condition. Let $\lambda'=\tau^{-1}(\lambda)$ be the
corresponding play in $\gameP$ and let $(\lambda',\phi_{B})$ be the
corresponding play in $\gamePNR$. None of these plays satisfies the
parity condition. However as $(\lambda',\phi_{B})$ respects the
positional winning strategy for \Eloise described by $\phi_{B}$, it is
won by \Eloise. This implies that \Abelard only obeys \Eloise finitely
often which brings the contradiction.

In order to define a strategy for \Eloise in $\gameNR$ we will mimic $\strat_\Ei$ to choose the $\Gamma$-component (call $\gamma$ the action) and use $\strat_B$ to choose the $\Theta_\gamma$-component.

For this we let $\phiNR_\Ei([\pi]_{/_\sim}) = (\gamma,\theta)$ where $\gamma=\strat_\Ei([\pi]_{/_\sim})$ and $\theta$ is defined as follows. Let $v\in V$: if there exists ${\pi' \sim \pi}$ ending with $v$ we take
$\theta(v)=\phi_{B}([\encod{\pi'}]_{/_\equiv})$; otherwise we define $\theta(v)=w$ for some arbitrary $w\in \Delta(v,\gamma)$ (the value actually does not matter).

Now consider a play $\playNR=v_0\nu_0v_1\nu_1v_2\cdots$ in $\gameNR$ where \Eloise respects $\phiNR_{\Ei}$, denote $\nu_i=(\gamma_i,\theta_i)$ when $\nu_i\neq \epsilon$ (\ie when $v_i\in V_\Ei$) and define $\gamma_i=\epsilon$ when $\nu_i=\epsilon$. 
By contradiction assume that $\playNR$ is losing for \Eloise. 
Consider the play $\play=v_0 \gamma_0 v_1 \gamma_1 v_2\cdots$ : it is a play in $\game$ where \Eloise respects $\phi_{\Ei}$ and as $\playNR\notin \WCNR$ one also has $\play\notin\WC_\col$. But as $\playNR$ is losing for \Eloise it means that for infinitely many $i$ one has $v_{i+1}\neq \theta_i(v_i)$, which implies that for infinitely many $i$ one has $v_{i+1} \neq \phi_B({[v_0\cdots v_i]_{/_\equiv}})$. Therefore as remarked previously, it implies that $\lambda\in\WC_\col$ hence, leading a contradiction.
\end{proof}

Note that, due to page limit constraints, contrarily to what we did in Section~\ref{section:perfectLeakingFinite} for the perfect-information setting we do not tackle here the problem of deciding whether the leaking value is smaller than some given $k$. However, we hope that it is clear that the approach we just developed, for the problem of deciding whether the leaking value is at most $\aleph_0$, to shift from perfect to imperfect-information, also leads to treat the other question as well.

\subsection{Deciding the Existence of a topologically-Good Strategy}

Our goal in this section is to design a technique to decide whether \Eloise has a topological good strategy in an imperfect-information parity game with Nature. We only have results in the case of games where \Abelard is not playing (\ie one-player game with Nature) hence, we implicitly assume this from now.

We start by giving a useful characterisation of large sets of branches in a tree when the set of branches is defined by a parity condition. For this fix a $D$-tree $t$ for some set of directions $D$. Assume that we have a colouring function $\col:t\rightarrow \colors$ for a finite set $\colors$ of colours. 

Call a \defin{local-strategy} for \Eloise a pair $(\phi_f,\phi_n)$ of two maps from $t$ into $D\times \{\top,\bot\}$. For all node $u\in t$, we let $d_f(u)$ (\resp $d_n(u)$) be the unique element such that $\phi_f(u) \in\{d_f(u)\}\times\{\top,\bot\}$ (\resp $\phi_n(u)\in\{d_n(u)\}\times\{\top,\bot\}$).

A local-strategy is \emph{valid} if the following holds.
\begin{enumerate}
\item For every $u\in t$ both $u\cdot d_f(u)$  and $u\cdot d_n(u)$ are nodes in $t$; \ie $\phi_f$ and $\phi_u$ indicates an existing son.
\item For every $u\in t$ there is a node $v=ud_1\cdots d_\ell$ such that $\phi_f(v)\in D\times \{\top\}$ and $d_i = d_f(ud_1\cdots d_{i-1})$ for all $i<\ell$; \ie following $\phi_f$ leads to a node where the second component is $\top$.
\item For every $u\in t$ there is a node $v=ud_1\cdots d_\ell$ such that $\phi_n(v)\in D\times \{\top\}$ and $d_i = d_n(ud_1\cdots d_{i-1})$ for all $i<\ell$; \ie following $\phi_n$ leads to a node where the second component is $\top$.
\end{enumerate}

Take a valid local-strategy $(\phi_f,\phi_n)$. A $(\phi_f,\phi_n)$-compatible branch is any branch in $t$ that can be obtained as follows: one selects any node $u_0$ in $t$ and then one lets $v_0$ be the shortest node satisfying property (2) above (w.r.t. node $u_0$), then one selects any node $u_1$ such that $v_0\prefixstrict u_1$ and one lets $v_1$ be be the shortest node satisfying property (3) above (w.r.t. node $u_1$), then one selects any node $u_2$ such that $v_1\prefixstrict u_2$ and one lets $v_2$ be the shortest node satisfying property (3) above (w.r.t. node $u_2$), and so on forever (\ie we use property (2) only in the first round and then we use property (3) forever). 

We have the following lemma (whose proof follows the one of \cite[Proposition~13]{Graedel08}).

\begin{lemma}\label{lemma:local-strategy}
The set of branches satisfying the parity condition in $t$ is large if and only if there is a valid local-strategy $(\phi_f,\phi_n)$ such that any $(\phi_f,\phi_n)$-compatible branch satisfies the parity condition.
Moreover one can choose $(\phi_f,\phi_n)$ such that $\phi_f(u_1) = \phi_f(u_2)$ and $\phi_n(u_1) = \phi_n(u_2)$ whenever $t[u_1] = t[u_2]$.
\end{lemma}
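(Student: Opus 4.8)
My plan is to prove the two implications separately. The forward direction — a valid local-strategy whose compatible branches all satisfy parity forces the parity set to be large — is short and uses only the Banach-Mazur theorem, while the converse — largeness yields a \emph{subtree-invariant} local-strategy — is the substantial part and follows the parity analysis of \cite[Proposition~13]{Graedel08}. For the forward direction I would build from $(\phi_f,\phi_n)$ a strategy $\psi$ for \Eloise in the Banach-Mazur game on $t$: on her first move she extends the current node by following the directions $d_f$ up to the shortest node whose $\phi_f$-value lies in $D\times\{\top\}$ (such a node exists by validity property~(2)), and on every later move she does the same using $d_n$ and property~(3). By construction the set of branches $\mathcal{B}(\psi)$ of plays respecting $\psi$ is \emph{exactly} the set of $(\phi_f,\phi_n)$-compatible branches (\Adam's free choices of $u_0,u_1,\dots$ correspond to his Banach-Mazur moves, \Eloise's $\phi_f$/$\phi_n$-segments to hers), which by hypothesis is contained in the parity set. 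The only technical point is that each \Eloise move must be a non-empty word, which one arranges by assuming without loss of generality that the $\top$-node reached is a strict descendant of the current node (otherwise \Adam's next move provides the progress). The Banach-Mazur theorem then gives that the parity set is large.

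For the converse I would start from the fact that, the parity set being Borel and large, \Eloise wins the Banach-Mazur game on $t$ with the parity objective. Because the parity condition is a tail condition, whether the parity set of a subtree $t[u]$ is large in $t[u]$ depends only on $t[u]$, and I would exploit this to produce a \emph{subtree-invariant} winning strategy of a "reach-then-maintain" shape, which is precisely what the pair $(\phi_f,\phi_n)$ encodes. Concretely, proceeding by induction on the number of colours — peeling off the least colour, targeting it when it is even and routing into subtrees that avoid it and recursing when it is odd — I would isolate an even colour $e$ together with the set of "$e$-good" nodes $v$, namely those such that $t[v]$ contains no colour smaller than $e$ and the colour-$e$ nodes are dense in $t[v]$, and show that largeness forces these good nodes to be densely reachable. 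Then $\phi_f$ is a direction mapping pointing from an arbitrary node to an $e$-good node $v_0$, while $\phi_n$ is a direction mapping that, inside $t[v_0]$, always heads to a nearest colour-$e$ node (which it marks $\top$). Since all of these choices — minimal colour of $t[v]$, reachability of colour-$e$ nodes, the direction toward the nearest such node — depend only on the subtree rooted at the current node, fixing once and for all a representative choice per subtree-type yields the invariance demanded by the "Moreover" clause.

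The correctness of the maintenance phase is the crux, and I expect it to be the main obstacle. On a compatible branch \Adam may insert an arbitrary finite detour between two of \Eloise's segments, so a priori he could keep injecting a small odd colour and spoil the $\liminf$; this is exactly why the naive "single direction mapping plus Lemma~\ref{lemma:largeVSdense}" argument does not suffice and why the parity condition is genuinely used. The observation that rescues the construction is that once $\phi_f$ has driven the play into an $e$-good node $v_0$, every later node — including those chosen by \Adam — lies in $t[v_0]$ and therefore has colour $\geq e$; hence only finitely many colours below $e$ occur (all before $v_0$), whereas colour $e$ is met once per $\phi_n$-segment, so $\liminf$ equals $e$ and is even. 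The remaining difficulty is to carry the least-colour peeling out \emph{uniformly} over all subtree-types and to check that the target even colour may only be refined upward as the branch descends, never settling on an odd value; this bookkeeping, together with the need to respect subtree-invariance at every step, is what forces the distinction between the first-segment map $\phi_f$ and the repeated map $\phi_n$ and mirrors the structure of the argument in \cite[Proposition~13]{Graedel08}.
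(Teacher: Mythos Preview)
Your proposal is correct and follows the same overall strategy as the paper: both directions go through the Banach--Mazur characterisation, and the substantive converse is modelled on \cite[Proposition~13]{Graedel08}. The easy direction is identical to the paper's.

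For the converse, however, the paper's execution is cleaner than your inductive ``peel off the least colour'' sketch. Rather than tracking an even target colour that may be \emph{refined upward} along the branch, the paper introduces \emph{stable} nodes: those $u$ for which the set $C(u)=\{\col(v)\mid u\prefix v\}$ of reachable colours satisfies $C(w)=C(u)$ for every descendant $w$. Stability is hereditary --- every descendant of a stable node is again stable with the \emph{same} colour set --- so once $\phi_f$ lands on a stable node the target colour $e=\min C(u)$ is fixed once and for all, and your refinement bookkeeping vanishes. A single \Abelard-strategy argument in the Banach--Mazur game then shows $\min C(u)$ is even at every stable node. One sets $\phi_f(u)$ to point (length-lexicographically) toward the nearest stable descendant and $\phi_n(u)$ to point toward the nearest descendant of colour $\min C(u)$; both depend only on $t[u]$, so subtree-invariance is immediate. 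Your ``$e$-good'' nodes (no colour below $e$, and colour-$e$ nodes dense) are not persistent under taking descendants, which is precisely what forces the extra induction you anticipate; the stable-node formulation absorbs that induction into a single definition and avoids having to argue that the refined target never becomes odd.
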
 

\begin{proof}
We rely on the characterisation of large sets by means of Banach-Mazur games. 

Obviously if there is a valid local-strategy $(\phi_f,\phi_n)$ such that any $(\phi_f,\phi_n)$-compatible branch satisfies the parity condition, then it leads a winning strategy for \Eloise in the Banach-Mazur game. Indeed, for her first move \Eloise goes down in the tree using $\phi_f$ until she ends up in a node whose father's second component by $\phi_f$ was $\top$ and in the next rounds she does similarly but using $\phi_n$. The resulting play is a $(\phi_f,\phi_n)$-compatible branch hence, satisfies the parity condition.

We now prove the other implication, \ie we assume that the set of branches satisfying the parity condition in $t$ is large or equivalently that \Eloise wins the Banach-Mazur game. The beginning of the proof is very similar to the one that Banach-Mazur games with Muller winning condition admit positional strategies \cite[Proposition~13]{Graedel08}. Let $u$ be a node in $t$ then one denotes by $C(u)=\{\col(v)\mid u\prefix v\}$ the set of colours of nodes reachable from $u$ in $t$. Obviously one has $C(w)\subseteq C(u)$ for all $u\prefix w$. In case one has  $C(w)= C(u)$ for all $u\prefix w$ we say that $u$ is a \emph{stable} node (and so does its descendants). 
As the set of colours is finite, for all node $u$ there is a stable node $v$ such that $u\prefix v$.

We claim that for all stable node $u$, $\min C(u)$ is even. Indeed, assume that there is some stable $u$ such that $\min C(u) = m$  is odd: then a winning strategy (leading a contradiction) of \Abelard in the Banach-Mazur game would consist to go to $u$ in its first move and then whenever he has to play to go to a node with colour $m$ (which he can always do by stability).

Now we define a valid local-strategy $(\phi_f,\phi_n)$ as follows. First, fix a total ordering on $D$. For every $u\in t$, call $u_s$ the unique minimal (for the length lexicographic ordering) stable node such that $u\prefixstrict u_s$: define $\phi_f(u) = (d,x)$ where $u_s=u\cdot d\cdot w$ with $d\in D$ and $x=\top$ if $w=\epsilon$ and $x=\bot$ otherwise. For every $u\in t$ that is stable, call $u'$ the unique minimal (for the length lexicographic ordering)  node with colour $\min C(u)$ and such that $u\prefixstrict u'$: define $\phi_n(u) = (d,x)$ where $u'=u\cdot d\cdot w$ with $d\in D$ and $x=\top$ if $w=\epsilon$ and $x=\bot$ otherwise. For every $u\in t$ that is not stable define $\phi_n(u) = (d,\bot)$ where $d$ is the minimal direction such that $ud\in t$ (the value of $\phi_n$ does not matter but we want it to be the same in all isomorphic subtrees so we have to define it in a systematic way). From the definition one directly gets that $\phi_f(u_1) = \phi_f(u_2)$ and $\phi_n(u_1) = \phi_n(u_2)$ whenever $t[u_1] = t[u_2]$.

The fact that $(\phi_f,\phi_n)$ is valid is by definition and the fact that any $(\phi_f,\phi_n)$-compatible branch satisfies the parity condition is a direct consequence of the fact that for all stable node $u$ one has $\min C(u)$ even.
\end{proof}

Recall that we assume that \Abelard is not part of the game. Hence, we omit him in notations when considering the original game (\ie we do not write $\VA$ nor $\Delta_\Ai$).

For the rest of this section we fix an imperfect-information one-player parity game with nature $\game=(\arena,\col,v_0)$ where $\arena=(V,\Gamma,\Delta,\sim)$ and we aim at deciding whether \Eloise has a topologically-good strategy.

The main idea is to define an imperfect-information game without \Nature but with \Abelard. In this game \Eloise simulates a play in $\game$ and also describes a local-strategy for a Banach-Mazur game played on the outcomes; \Abelard  is in charge of simulating the Banach-Mazur game: sometimes he chooses the directions and sometimes he plays what the local-strategy of \Eloise is indicating. Moreover \Eloise does not observe who is currently playing in the Banach-Mazur game. The winning condition checks the parity condition as well as correctness of the simulation of the Banach-Mazur game (in particular that no player plays eventually forever).

In order to describe the local-strategy, \Eloise will provide with any action $\gamma\in \Gamma$ a partial map $\theta:V\rightarrow (V\times\{\top,\bot\})\times (V\times\{\top,\bot\})$ such that for all $v\in V$ one has $\theta(v)\in \Delta(v,\gamma)\times\{\top,\bot\}\times\Delta(v,\gamma)\times\{\top,\bot\}$; we denote by $\Theta_\gamma$ the set of such maps (for a given $\gamma\in \Gamma$).

We define a two-player imperfect-information arena (all vertices belong to \Eloise so we omit vertices and the transition relation of \Abelard)
$\arenaT=(\VT,\GammaT,\DeltaT,\simT)$
where $\VT = V\times\{E,A\}\times\{f,n\}$ (the second component is used to remember who plays in the simulation of the Banach-Mazur game; the third component is $f$ if the first move of \Eloise in the Banach-Mazur game has not yet been fully played), $(v,X,x)\simT(v',Y,y)$ if and only if $v\sim v'$ (\Eloise does not observe the second and third components),  $\GammaT=\bigcup_{\gamma\in \Gamma}\{\gamma\}\times\Theta_\gamma$ and  $\DeltaT((v,X,x),(\gamma,\theta))$ is as follows.
\begin{itemize}
\item If $X=A$ then it equals $\Delta(v,\gamma) \times\{E,A\}\times\{x\}$: \Abelard can choose any successor and can decide to finish/continue his move in the Banach-Mazur component.
\item If $X=E$ then it is the singleton consisting of node $(v_x,Y,y)$ defined by letting\footnote{In case $\theta(v)$ is undefined \Eloise looses the play. We assume this never happens but it can easily be captured in the winning condition by adding an extra vertex.} $\theta(v) = (v_f,y_f,v_n,y_n)$ and letting $Y=A$ and $y=n$ if $y_x=\top$ (we switch the player in the Banach-Mazur game) and $Y=E$ and $y=x$ if $y_x=\bot$ (she keeps playing). 
\end{itemize}

We let $\WCT$ consists of those plays $(v_0,X_0,x_0)(v_1,X_1,x_1)(v_2,X_2,x_2)\cdots$ such that either \begin{inparaenum}[(i)]\item $v_0v_1v_2\cdots$ satisfies the winning condition and one has $X_j=A$ for infinitely many $j$ (\ie \Eloise does not eventually play forever in the Banach-Mazur game) or \item there is some $N\geq 0$ such that one has $X_j=A$ for all $j\geq N$  (\ie \Abelard eventually plays forever in the Banach-Mazur game) 
\end{inparaenum}. In particular $\WCT$ is a (positive) Boolean combination of $\WC$ and a parity condition.

Finally we denote by $\gameT$ the imperfect-information game $(\arenaT,\WCT,(v_0,A,f))$.
The following relates the  games $\game$ and $\gameT$.

\begin{theorem}\label{thm:topo-imperfect}
\Eloise has a topologically-good strategy in $\game$ if and only if she has a winning strategy in $\gameT$. 

More precisely, from a winning strategy (\resp positional strategy, \resp finite-memory strategy) $\phiT_{\Ei}$ of \Eloise in $\gameT$, we can define a topologically-good strategy (\resp positional strategy, \resp finite-memory strategy) $\phi_{\Ei}$ for \Eloise in $\game$.
\end{theorem}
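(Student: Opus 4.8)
The plan is to reduce both directions of the equivalence to the local-strategy characterisation of largeness given by Lemma~\ref{lemma:local-strategy}. Throughout, fix an \Eloise strategy $\phi_\Ei$ in $\game$ and let $t$ be the tree of all partial plays from $v_0$ in which \Eloise respects $\phi_\Ei$; viewing each successor chosen by \Nature as a direction, $t$ is a $V$-tree (the deterministic action labels add no branching since $\phi_\Ei$ is observation-based), its branches are exactly the plays respecting $\phi_\Ei$, and $\phi_\Ei$ is topologically-good iff the parity-satisfying branches of $t$ are large. The point of the construction of $\gameT$ is that an \Eloise action $(\gamma,\theta)$ carries two things at once: the $\gamma$-component is her move in $\game$, while $\theta$ encodes a local-strategy $(\phi_f,\phi_n)$ on $t$, the first and second coordinates of $\theta(v)$ giving $\phi_f$ and $\phi_n$ at the current node together with their $\top/\bot$ markers. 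The $\{E,A\}$-component records whose turn it is in the simulated Banach-Mazur game, and the $\{f,n\}$-component whether \Eloise's very first Banach-Mazur move is still in progress. Since $\simT$ forgets the last two components, an observation in $\gameT$ is literally an observation in $\game$, so observation-based strategies translate verbatim between the two games.

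For the first implication, assume $\phi_\Ei$ is topologically-good. Then the parity-satisfying branches of $t$ are large, so Lemma~\ref{lemma:local-strategy} yields a valid local-strategy $(\phi_f,\phi_n)$ all of whose compatible branches satisfy the parity condition; crucially I invoke its "moreover" clause to choose it with $\phi_f(u_1)=\phi_f(u_2)$ and $\phi_n(u_1)=\phi_n(u_2)$ whenever $t[u_1]=t[u_2]$. The key observation is that if two nodes end in the same vertex and have the same observation then their subtrees are equal: because $\phi_\Ei$ is observation-based, the valid continuations and their colours from each node depend only on the last vertex and the observed history, so the two subtrees coincide and the local-strategy takes the same value. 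Hence $(\phi_f,\phi_n)$ is observation-consistent, and I can define an observation-based $\phiT_\Ei$ playing $(\gamma,\theta)$ with $\gamma=\phi_\Ei$ at the current observation and $\theta(v)$ the well-defined pair $(\phi_f,\phi_n)$ at any node ending in $v$ with that observation (undefined at vertices inconsistent with the observation). A play respecting $\phiT_\Ei$ projects to a branch of $t$, and a case analysis on the $\{E,A\}$-component concludes: if \Abelard eventually keeps the turn forever the second disjunct of $\WCT$ wins; otherwise validity forces each simulated \Eloise move to terminate at a $\top$-node, so both components occur infinitely often, the projected branch is $(\phi_f,\phi_n)$-compatible, satisfies parity, and wins by the first disjunct.

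For the converse, start from a winning $\phiT_\Ei$ in $\gameT$. As $\simT$ ignores the last two components, $\phiT_\Ei$ is a function of the $\game$-observation; let $\phi_\Ei$ play its $\Gamma$-component, so $t$ is the outcome tree of $\phi_\Ei$, and read a local-strategy $(\phi_f,\phi_n)$ off the $\theta$-components exactly as above. Validity conditions (2) and (3) hold because an \Eloise simulated move never reaching a $\top$-node would give a play where the simulated turn stays with \Eloise forever, i.e. $X_j=E$ for all large $j$, meeting neither disjunct of $\WCT$ and contradicting that $\phiT_\Ei$ wins; one tests a given node in the $f$-phase or, after one completed $\phi_f$-move, in the $n$-phase, by first letting \Abelard drive the token to that node. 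To see that every compatible branch satisfies parity, realise it as a play respecting $\phiT_\Ei$ in which during each $A$-phase \Abelard drives the token along the finite path picked by the Banach-Mazur adversary and then relinquishes the turn, the forced $E$-phases following $(\phi_f,\phi_n)$; such a play has $X_j=A$ infinitely often, so, being won by \Eloise, it wins by the first disjunct, whence the projected compatible branch satisfies parity. Lemma~\ref{lemma:local-strategy} then certifies largeness and $\phi_\Ei$ is topologically-good; positionality and finite memory are inherited since $\phi_\Ei$ and the local-strategy are mere projections of $\phiT_\Ei$ over the same observation space.

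I expect the main obstacle to be the interaction of imperfect information with the local-strategy, concentrated in the first implication: a local-strategy from Lemma~\ref{lemma:local-strategy} lives on full nodes of $t$, yet \Eloise must commit to $\theta$ knowing only her observation while the game applies $\theta$ at the true, unknown current vertex. Resolving this requires precisely the subtree-invariance clause of Lemma~\ref{lemma:local-strategy} together with the remark that equal vertex plus equal observation forces equal subtrees; establishing this carefully is the heart of the argument. A secondary, more bookkeeping-heavy difficulty is synchronising the markers $\{f,n\}$ and $\{E,A\}$ with the "first move versus later moves" distinction of the compatible-branch definition, and checking that the two disjuncts of $\WCT$ correctly capture "valid Banach-Mazur play satisfying parity" versus "\Abelard abandons the game".
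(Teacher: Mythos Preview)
Your proof is correct and follows essentially the same approach as the paper: both directions hinge on the correspondence between strategies in $\gameT$ and pairs consisting of a strategy in $\game$ together with a local-strategy on its outcome tree, with Lemma~\ref{lemma:local-strategy} bridging the gap to largeness and its ``moreover'' clause resolving the imperfect-information issue via subtree invariance. Your writeup is in fact more explicit than the paper's on the key point that equal last vertex plus equal observation forces equal subtree (which is exactly what the paper means by ``the representative actually does not matter''), and you spell out the validity and compatible-branch verifications that the paper compresses into two sentences; the only cosmetic difference is that you treat the two implications in the opposite order.
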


\begin{proof}
Strategies $\phiT$ for \Eloise in $\gameT$ are in bijections with pairs made of a strategy $\phi$ in $\game$ together with a local-strategy $(\phi_f,\phi_n)$ in the tree of the outcomes of $\phi$ in $\game$. Now if $\phiT$ is winning in $\gameT$ we have thanks to the second part of $\WCT$ that the local-strategy $(\phi_f,\phi_n)$ is valid, and thanks to the first part of $\WCT$ that any compatible play is winning for \Eloise in the Banach-Mazur game. Hence, it implies that $\phi$ is topologically good (the set of winning plays in $T_{v_0}^\phi$ is large). Obviously $\phi$ does not require more memory than $\phiT$.

Conversely if \Eloise has a topologically good strategy $\phi$ in $\game$ we can associate with $\phi$ a local-strategy $(\phi_f,\phi_n)$ as in Lemma~\ref{lemma:local-strategy} (applied to $T_{v_0}^\phi$). Using $\phi$, $\phi_f$ and $\phi_n$ we define a winning strategy $\phiT$ for \Eloise in $\gameT$ as follows. 
We let 
$\phiT([(v_0,X_0,x_0)(v_1,X_1,x_1)\cdots(v_k,X_k,x_k)]_{/_\simT})=(\gamma,\theta)$ where $\gamma=\phi([v_0v_1\cdots v_k]_{/_\sim})$ and $\theta$ is defined as follows. Let $v\in V$: if there is no 
$v'_0\cdots v'_k\in T_{v_0}^\phi\cap [v_0\cdots v_k]_{/_\sim}$ with $v_k=v$ we let $\theta(v)$ undefined; otherwise choose such a $v'_0\cdots v'_k$ (the representative actually does not matter thanks to the fact that $(\phi_f,\phi_n)$ is the same in isomorphic subtrees) and define $\theta(v) = (\phi_f(v'_0\cdots v'_k),\phi_n(v'_0\cdots v'_k))$.

Now consider a play $\playT=(v_0,X_0,x_0)(v_1,X_1,x_1)(v_2,X_2,x_2)\cdots$ in $\gameT$ where \Eloise respects $\phiT$. If there are infinitely many $i$ such that $X_i=E$ then there are infinitely many $j$ such that $X_j=A$ (this is because $(\phi_f,\phi_n)$ is valid). Moreover if there are infinitely many $i$ such that $X_i=E$ then the play $v_0v_1v_2\cdots$ is a branch in $T_{v_0}^\phi$ that is $(\phi_f,\phi_n)$-compatible and therefore it satisfies the parity condition by Lemma~\ref{lemma:local-strategy} and definition of $(\phi_f,\phi_n)$. Therefore, the strategy $\phiT$ is winning for \Eloise in $\gameT$ and it concludes the proof.\end{proof}

\section{Imperfect-Information Games with Nature: Some Consequences}\label{section:consequences-imperfect}

\subsection{Imperfect-Information Parity Games on Finite Graphs}

For imperfect-information, in the case of finite arena, as soon as one considers co-Büchi conditions almost-sure winning is undecidable even for \Eloise-\Nature game where \Eloise is totally blind (all vertices are equivalent)~\cite{BGB12}. Therefore, both the cardinality setting and the topological one are interesting alternative to retrieve decidability: indeed, thanks to Theorem~\ref{theo:main-partial} (\resp Theorem~\ref{thm:topo-imperfect}) combined with the results in \cite{CDHR07} we get decidability for finite arena for any parity condition. 

\subsection{Imperfect-Information Parity Games on Infinite Finite Graphs}

There is very few work in the probabilistic setting about games with imperfect-information played on \emph{infinite} arenas. The notable exception is the case of concurrent reachability games played on single-exit  recursive state machines%
\footnote{Concurrency is a special instance of imperfect-information where \Abelard is perfectly informed: he chooses an action which is stored on the state and cannot be observed by \Eloise who next chooses an action that together with the one by \Abelard leads to the next state (chosen by \Nature). Recursive state machines are equivalent with pushdown automata; however the single exit case quite strongly restricts the model.} 
for which impressive results where obtained in~\cite{EtessamiY08}. In the non-stochastic setting, it is easy to derive decidability results for \emph{parity} game played on pushdown graphs when \Eloise perfectly observes the stack content but not the exact control state and \Abelard is perfectly informed (see \eg \cite{AminofLMSV13}); this result can easily be extended for more general classes of graphs as collapsible pushdown graphs as defined in \cite{HMOS08}. Hence, thanks to Theorem~\ref{theo:main-partial} and \ref{thm:topo-imperfect} we obtain decidability results for games with \Nature played on those classes of infinite arenas. Note that in the cardinality setting, even if we require that \Abelard has perfect-information our model captures concurrent games.

\subsection{Probabilistic Automata}

A temptation would be to consider cardinality/topological variants of probabilistic automata on infinite words~\cite{BGB12} as such a machine can be though as an \Eloise-\Nature game where \Eloise is totally blind: \eg declare that an $\omega$-word is accepted by an automaton if all but a countable number of runs on it are accepting (\resp the set of accepting runs is large). However, a simple consequence of our results is that the languages defined in this way are always $\omega$-regular.

\section{Conclusion and Perspectives}\label{section:conclusion}

In this paper we provided several reductions that can later be used to obtain decidability results depending on the properties of both the arena and the winning condition. 
More precisely, we give transformations that associate with any game with \Nature $\game=(\arena,\WC,v_0)$ a game \emph{without} \Nature $\game'=(\arena',\WC',v'_0)$ on which the question (Is the leaking value is countable? Is the leaking value smaller than some threshold $k$? Is there a topologically good strategy? ) on the original game $\game$ is restated as whether Eve has  a winning strategy in $\game'$. In all cases the new arena $\arena'$ is obtained from $\arena$ by adding some gadgets, and the new winning condition $\WC'$ is a Boolean combination of $\WC$ with an $\omega$-regular condition. Moreover, for some cases, we need extra hypothesis that we recall in the table below.

\bigskip

\renewcommand{\arraystretch}{1.3}
\noindent\begin{tabularx}{\textwidth}{m{3.7cm}m{5cm}m{5cm}}
\toprule
&  $\LVal{\game} \leq\aleph_0$?\newline $\LVal{\game} \leq k$? & Topologically good?\\ \midrule
\emph{Perfect-information} & & \\  & No extra hypothesis on $\game$\newline No extra hypothesis on $\WC$ & Eve + \Nature only\newline No extra hypothesis on $\WC$\\ \midrule
\emph{Imperfect-information} & & \\ & Adam perfect\newline $\WC$: parity & Eve + \Nature only\newline $\WC$: parity\\ \bottomrule
\end{tabularx}

\medskip

Regarding perspectives the most natural question is whether we can drop the restriction on \Abelard not being part of the game for questions regarding the topological setting. 

Another exciting problem is whether one can decide if the leaking value of a game is finite (without knowing \emph{a priori} the bound). We believe that this problem should be decidable for parity games on finite graph but using different techniques than the one developed in this paper.

\begin{acks}
The authors are indebted to Damian Niwi\'nski for suggesting to investigate the notion of \emph{goodness} of a strategy. 

The authors also would like to thank Thomas Colcombet who simplified the proof of Theorem~\ref{theo:main-partial} as well as Pawe\l{l} Parys for pointing that existence of $\mathcal{L}$-good strategies can be expressed in MSO logic as soon as $\mathcal{L}$ is MSO-definable.
\end{acks}

\bibliographystyle{ACM-Reference-Format}
\bibliography{abbrevs,CS19}


\end{document}